\documentclass[twoside,11pt]{article}

\usepackage[abbrvbib, preprint]{jmlr2e}

\usepackage{amsmath,amsfonts,dsfont}
\usepackage{booktabs}
\usepackage{multirow}
\usepackage{lastpage}

\newtheorem{assumption}{}
\newtheorem{assumptionalt}{}  

\renewcommand{\epsilon}{\varepsilon}

\newcommand{\R}{\mathbb{R}}
\newcommand{\N}{\mathbb{N}}
\newcommand{\Hcal}{\mathcal{H}}
\newcommand{\E}{\mathbb{E}}
\newcommand{\pp}{\mathbb{P}}
\newcommand{\mslab}{m_{\text{slab}}}
\DeclareMathOperator{\var} {Var}

\DeclareMathOperator{\tr} {Tr}
\DeclareMathOperator{\bff} {BF}
\DeclareMathOperator{\gp} {GP}
\DeclareMathOperator{\diag} {diag}
\DeclareMathOperator{\kl} {D_{KL}}

\newcommand\dotprod[2]{\left\langle#1,#2\right\rangle}
\newcommand{\textemph}[1]{\textbf{#1}}

\jmlrheading{}{}{}{}{}{}{José R. Berrendero, Antonio Coín and Antonio Cuevas}

\ShortHeadings{A Bayesian Test for the Mean of Functional Data}{Berrendero, Coín and Cuevas}

\begin{document}

\title{A Bayesian One-Sample Test for the Mean of Functional Data}

\author{\name José R. Berrendero \email joser.berrendero@uam.es \vspace{.2em}\\
        \name Antonio Coín \email antonio.coin@protonmail.com \vspace{.2em}\\
        \name Antonio Cuevas \email antonio.cuevas@uam.es \vspace{.2em}\\
       \addr Departamento de Matemáticas\\
       Universidad Autónoma de Madrid\\
       28049 Madrid, Spain
}

\maketitle

\begin{abstract}
In the context of functional data analysis, we propose a Bayesian test to assess whether the mean function of a Gaussian population is identically zero. By leveraging the structure of the reproducing kernel Hilbert space (RKHS) associated with the covariance function of the underlying process, we approximate the mean by a finite linear combination of kernel sections. This finite-dimensional representation arises from the natural assumption that the mean lies in the corresponding RKHS, enabling a parametric Bayesian approach to hypothesis testing through a spike-and-slab prior on the coefficients. Given the absence of a canonical reference measure on infinite-dimensional spaces, the likelihood is expressed via a Radon--Nikodym derivative between induced Gaussian measures. Since the number of kernel sections is unknown, we employ reversible jump Markov chain Monte Carlo (RJMCMC) to explore the trans-dimensional parameter space and estimate the posterior distribution. From the resulting posterior samples we construct estimators of the posterior probability of the null hypothesis and the Bayes factor, providing quantifiable evidence against the null and supporting informed decisions under uncertainty. Furthermore, to validate this methodology we establish near-optimal posterior contraction rates for the proposed test under mild conditions, demonstrate its empirical performance on simulated data sets, and illustrate its application with real climate data.
\end{abstract}

\begin{keywords}
  Bayesian hypothesis testing, functional data, reproducing kernel Hilbert space, Bayes factor, Gaussian process, posterior consistency
\end{keywords}

\section{Introduction}

Functional data analysis (FDA) has emerged as one of the most active areas of statistical research over the past decades, driven by the increasing availability of data collected over time, space, or other continuous domains. In contrast to classical multivariate statistics, where each data point is a vector, the FDA framework treats each observation as a curve, surface, or more general function, typically viewed as a realization of an underlying stochastic process. This paradigm shift has enabled more nuanced analyses in diverse fields including climatology, biomechanics, neuroimaging, and econometrics, where the fundamental units of observation are inherently functional \citep{ullah2013applications}. The monograph by \citet{ramsay2005functional} is generally regarded as a standard reference for the subject, while detailed up-to-date accounts of the theory and applications are provided for example by \citet{horvath2012inference}, \citet{hsing2015theoretical} or \citet{wang2016functional}.

A core problem in statistics involves hypothesis testing for quantities of interest. As such, considerable efforts have been devoted to generalizing classical testing frameworks to functional data, providing \(p\)-values and decision rules based on asymptotic approximations or bootstrap resampling. These methods include frequentist two-sample tests \citep{fan1998test,hall2007two,zhang2010two}, analysis of variance methods \citep{cuevas2004anova, cuesta2010simple, zhang2013analysis}, change point detection procedures \citep{berkes2009detecting,hormann2010weakly}, repeated measures analysis \citep{martinez2011repeated,smaga2020note} or dependence tests in linear regression \citep{cardot2003testing,kokoszka2008testing}, among others. However, the conceptually simpler one-sample tests, often used to detect the presence or absence of a certain effect within a given population, have received less attention in the FDA literature. Notable exceptions include proposals by \citet{mas2007testing}, \citet{munk2008one} and \citet{zhang2013analysis}, where the authors explore different tests for the mean of a functional sample.

On the other hand, Bayesian methods for hypothesis testing---introduced in modern form by \citet{jeffreys1961theory}---are extensively studied and applied in multivariate settings, offering a principled alternative to the often criticized \(p\)-value approach \citep[see][and references therein]{robert2007bayesian, robert202250}. However, despite their broad adoption, they remain comparatively underdeveloped within the FDA framework. We aim to bridge this gap by proposing a Bayesian test for the mean function \(\mu\) of a Gaussian process observed through i.i.d.\ samples, which serves as a building block for more complex statistical analyses and machine learning applications. In particular, we provide a method for testing the point-null hypothesis \(H_0: \mu \equiv \mu_0\) versus the composite alternative \(H_1: \mu \not\equiv \mu_0\), where \(\mu_0(\cdot)\) is a predetermined function. We focus on the simple hypothesis of exact equality, in contrast to \textit{precise} or \textit{relevant} hypotheses such as \(H_0^\epsilon: \|\mu-\mu_0\| < \epsilon\) for a given \(\epsilon>0\). Although these precise hypothesis may be useful when an exact zero is not meaningful or reasonable, in a Bayesian framework they can be conceptually encoded in the corresponding point-null hypothesis via prior beliefs, as argued by \citet{berger1987testing}. The proposed test is based on the so-called \textit{Bayes factor}, which is defined as
\[
\bff_{10} = \frac{\pp(\text{data}|H_1)}{\pp(\text{data}|H_0)}
\]
and measures the evidence for \(H_1\) over \(H_0\). It is linked to the prior and posterior odds through the formula
\[
\frac{\pp(H_1|\text{data})}{\pp(H_0|\text{data})} = \bff_{10} \, \times\, \frac{\pp(H_1)}{\pp(H_0)}.
\]
Using Bayes factors to quantify the evidence against the null hypothesis in a continuous manner is the standard approach in Bayesian testing, popularized since the seminal work by \citet{kass1995bayes}. Because they are intimately related to the posterior probability of the null, Bayes factors provide a direct probabilistic way of interpreting the test decision, while also allowing the incorporation of prior information. Moreover, despite being a loose analog of a likelihood ratio test, the Bayes factor \(\bff_{10}\) can be inverted to provide evidence in favor of \(H_0\), which is not possible in frequentist tests due to the built-in asymmetry. 

Most functional tests in the literature employ some form of dimension reduction via projections or basis expansions as a first step. We take a slightly different approach and follow earlier work by \citet{berrendero2018use,berrendero2025bayesian}. In particular, we directly assume that the mean function is a member of the reproducing kernel Hilbert space (RKHS) associated with the covariance function of the underlying Gaussian process, so that it can be approximated through a finite linear combination of kernel sections. We will see that this assumption is reasonable and allows us to completely specify a Bayesian test, circumventing the difficulty of placing a prior distribution on an infinite-dimensional space, and providing a closed-form likelihood expression via Radon--Nikodym derivatives. Since we do not restrict ourselves to conjugate models, we use the reversible jump variant of Markov chain Monte Carlo methods (RJMCMC), proposed by \citet{green1995reversible}, to approximate the posterior distribution. From this approximation we obtain estimators of the posterior probability of the null hypothesis and the Bayes factor, which can be used to guide decision-making while explicitly accounting for uncertainty.

To validate the methodology from a theoretical point of view, we study the large-sample consistency of our test and measure the speed of posterior concentration. This technique has become a standard tool in the frequentist assessment of Bayesian procedures, as exemplified by the highly relevant work of \citet{ghosal2017fundamentals}. Posterior consistency of the Bayes factor in a point-null testing situation was established by \citet{dass2004note}, but their abstract approach cannot provide posterior contraction rates. As a novel contribution, we show that under \(H_0\) the posterior probability of the alternative goes to zero at a polynomial rate, while under \(H_1\) the posterior probability of the null decays exponentially fast. On the other hand, finite-sample performance is evaluated by numerical simulations in terms of frequentist metrics, such as type I error (false positive rate), type II error (false negative rate) or power (true positive rate).

\subsection{Preliminaries and Notation}\label{sec:preliminaries}

Consider a fixed time horizon \(T>0\) and suppose our data comes from a Gaussian process \(X=\{X(t): t\in[0,T]\}\), with mean function \(\mu(t)=\E[X(t)]\) and covariance function \(K(t,s)=\E[(X(t)-\mu(t))(X(s)-\mu(s))]\). For convenience, assume that the sample paths of \(X\) are continuous almost surely, so point evaluation is well-defined. We will mainly work with functional observations, but whenever discretization is needed we will suppose that the process is observed on an arbitrarily dense grid \(\mathcal S=\{s_1,\dots, s_G\}\subset [0,T]\) of size \(G\).

To construct the RKHS \(\Hcal(K)\) associated with \(K\), we first consider the functional space \(\Hcal_0(K)\) of all finite linear combinations of evaluations of \(K\), that is,
\[
\Hcal_0(K) = \left\{\sum_{j=1}^p \beta_j K(t_j, \cdot):\, p\in \N,\, \beta_j \in \R,\, t_j\in [0,T]\right\}.
\]
This space is endowed with the inner product \(\dotprod{f}{g}_K = \sum_{j,k}\beta_j \gamma_k K(t_j, u_k)\), given that \(f(\cdot) = \sum_j \beta_j K(t_j, \cdot)\) and \(g(\cdot) = \sum_k \gamma_k K(u_k, \cdot)\). Then, \(\Hcal(K)\) is defined as the completion of \(\Hcal_0(K)\) with respect to this inner product. We remark that functions on this space satisfy the reproducing property: \(\dotprod{K(t, \cdot)}{f}_K = f(t)\) for all \(f\in\Hcal(K)\) and \(t\in[0,T]\); we refer the reader to the monograph by \citet{berlinet2004reproducing} for further details.

As it turns out, this RKHS is isometrically isomorphic to the closed linear span of the process \(X\) in the space of all random variables with finite second moment. This congruence is realized through \textit{Loève's isometry} \(\Psi_X\) \citep[Theorem 35]{berlinet2004reproducing}, which can be seen as the completion of the correspondence \(\sum_j \beta_j K(t_j, \cdot) \longleftrightarrow \sum_j \beta_j X(t_j)\). Despite the close relation between the process and the space \(\Hcal(K)\), concrete trajectories \(x=X(\omega)\) do not belong to \(\Hcal(K)\) with probability one, except in trivial cases \citep[Corollary 7.1]{lukic2001stochastic}. Thus, the expression \(\dotprod{x}{f}_K\) for \(f\in\Hcal(K)\) is ill-defined when \(x\) is a realization of \(X\). However, we can still make sense of this expression by following the approach of \citet{parzen1961approach}, identifying \(\dotprod{X}{f}_K\) with the random variable \(\Psi_X(f)\) through Loève's isometry. We adopt this interpretation hereafter.

Throughout this paper we will work with column vectors, denoting the transpose of a vector \(v\) by \(v'\) and its Euclidean norm by \(\|v\|\). By a slight abuse of notation, arbitrary matrix norms will be equally denoted by \(\|\cdot\|\), while the symbol \(\|\cdot\|_K\) is reserved for the norm of \(\Hcal(K)\). An uppercase \(\Pi\) will stand for prior distributions or their corresponding densities, and similarly \(\Pi_n\) will represent a posterior distribution (or density) based on \(n\) observations. Moreover, by convention, boldface letters will designate discretized functional quantities on the grid \(\mathcal S\).

\subsection{Organization of the Paper} 

In Section~\ref{sec:rkhs-test} we describe the proposed methodology and RKHS modeling assumptions for our Bayesian test. Section~\ref{sec:consistency} explores the asymptotic properties of the test from a frequentist perspective, and Section~\ref{sec:experiments} is concerned with numerical experiments and a real data application. In Section~\ref{sec:conclusions} we summarize the main findings and discuss future work. A description of the proposed RJMCMC scheme is included in Appendix~\ref{app:rjmcmc}, while Appendix~\ref{app:proofs} contains the proofs of the theoretical results. Lastly, further details on the experimentation framework are available in Appendix~\ref{app:experiments}.

\section{A Reproducing Kernel Approach to Bayesian Testing}\label{sec:rkhs-test}

Let \(X_{1:n}=(X_1,\dots, X_n)\) be an i.i.d.\ sample of a Gaussian process \(X\sim \gp(\mu, K)\) with mean \(\mu\) and strictly positive definite covariance function \(K\), which is assumed to be known for the time being. Without loss of generality we set \(\mu_0\equiv 0\) and \(T=1\), which leads us to consider the hypotheses
\[
H_0: \mu(t)=0 \text{ for all } t \in [0,1] \quad \text{against} \quad H_1: \mu(t)\neq 0 \text{ for some } t \in [0,1].
\]

Our key modeling assumption is that \(\mu\in\Hcal(K)\). Not only is this space naturally linked to the process \(X\), but in general it contains smoother functions than, say, \(L^2[0,1]\). More importantly, it provides a direct finite-dimensional representation of the mean function if we further suppose that \(\mu\in\Hcal_0(K)\), which is dense in \(\Hcal(K)\) by construction. Indeed, we propose a parametric model (with a functional interpretation) based on the approximation
\begin{equation}\label{eq:mu-h0}
\mu_\theta(\cdot) = \sum_{j=1}^p \beta_j K(t_j, \cdot),
\end{equation}
where \(\theta=(p, \tau_p, b_p)\) is a generic parameter, with \(\tau_p=(t_1,\dots, t_p)\in [0,1]^p\) the time instants, \(b_p=(\beta_1,\dots,\beta_p)\in \R^p\) the coefficients and \(p\in\N\) the dimensionality (which is not fixed). We will omit the subscript \(p\) when it is understood from context, and assume that the grid of observation \(\mathcal S\) is fine enough to contain all the relevant time instants.

Note that when \(K\) is strictly positive definite, the null hypothesis \(\mu_\theta\equiv 0\) corresponds to \(b=0\) as long as the time instants are distinct. Thus, the parameter spaces for the null and alternative hypotheses in our Bayesian test are
\[
\Theta_0=\bigcup_{p\in\N}\{b \in \R^p: b=0\}\quad \text{and}\quad \Theta_1=\bigcup_{p\in\N}\{(p, \tau, b): \tau \in \mathcal T_p,\, b \in \R^p\setminus \{0\}\},
\]
respectively, where \(\mathcal T_p = \{(t_1,\dots,t_p)\in [0,1]^p: t_1 < \cdots < t_p\}\) is the ordered unit simplex. Although not strictly needed, we impose an ordering constraint to avoid identifiability issues and reduce the parameter space by a factor of \(p!\) on each dimension.

\subsection{Likelihood and Prior Distributions} 

Let \(\pp_\mu\) be the distribution of \(\gp(\mu, K)\) on the path space \(\mathcal C[0,1]\) (continuous functions on \([0,1]\)), and similarly let \(\pp_0\) be the distribution of \(\gp(0, K)\). Since both distributions are Gaussian with common covariance, the Feldman--Hájek dichotomy \citep{feldman1958equivalence} implies that \(\pp_\mu\) is absolutely continuous with respect to \(\pp_0\) whenever \(\mu\in\Hcal(K)\), which is precisely our assumption on the mean function. In this situation, the density of \(\pp_{\mu_\theta}\) with respect to \(\pp_0\) is given by \citep[Theorem 1]{berrendero2018use}
\begin{equation}\label{eq:rn-derivative}
  \frac{d\pp_{\mu_\theta}}{d\pp_0}(X) = \exp\left\{\langle X, \mu_\theta\rangle_K - \frac{1}{2}\|\mu_\theta\|^2_K\right\}.
\end{equation}
In a Bayesian setting, this Radon--Nikodym derivative can be interpreted as the likelihood \(L(X|\theta)\) of observing a path \(X\) under the parameter \(\mu_{\theta}\), with respect to the common reference measure \(\pp_0\). Now, since \(\mu_{\theta}\) has the finite-dimensional representation~\eqref{eq:mu-h0}, we can evaluate the scalar product (via Loève's isometry) and the RKHS norm, obtaining closed-form expressions \(\langle X, \mu_\theta\rangle_K = \sum_{j=1}^p \beta_j X(t_j)\) and \(\|\mu_\theta\|^2_K = \sum_{j=1}^p \sum_{k=1}^p \beta_j\beta_k K(t_j, t_k)\). In particular, for our i.i.d.\ sample the joint likelihood becomes
\begin{equation}\label{eq:likelihood}
  L(X_{1:n}| p, \tau, b) = \exp\left\{S_n(\tau)'b - \frac{n}{2}b' K_{\tau} b\right\},
\end{equation}
where \(S_n(\tau)=\sum_{i=1}^n (X_i(t_1), \dots, X_i(t_p))'\) and \(K_{\tau}=[K(t_j, t_k)]_{j,k=1}^p\). 

Then, for a random variable \(\theta=(p,\tau,b)\) we consider the parametric hierarchical model
\[
  X_1,\dots, X_n | \theta \stackrel{\text{i.i.d.}}{\sim} \gp(\mu_\theta, K),\quad
  p \sim \Pi(p),\quad \tau|p \sim \Pi(\tau|p),\quad b|p \sim \Pi(b|p),
\]
where the implied independence between \(\tau\) and \(b\) (given \(p\)) is just a convenience. We impose a discrete prior distribution \(\Pi(p)\) on the number of components, and a continuous distribution \(\Pi(\tau|p)\) for the time instants (e.g.\ uniform on \(\mathcal T_p\)). The prior on the coefficient vector \(b\) has to be chosen carefully, since any continuous distribution will give zero mass to the null event \(\{b=0\}\). To remedy this, we follow a \textit{spike-and-slab} approach reminiscent of Bayesian variable selection in linear regression \citep[e.g.][]{mitchell1988bayesian,ishwaran2005spike}, whereby a fixed mass is assigned to the null hypothesis and the remaining mass is distributed continuously on the alternative. Let \(0<\pi_0<1\) be the prior belief that the null hypothesis is true, and consider
\begin{equation}\label{eq:prior-b}
\Pi(b|p) = \pi_0\Pi^{\text{spike}}(b) + (1-\pi_0)\Pi^{\text{slab}}(b) = \pi_0 \delta_0(b) + (1-\pi_0)\mathcal N_p(b|0, \eta^2 I_p),
\end{equation}
where \(\delta_0\) is a Dirac measure at \(0\) (the ``spike''), \(I_p\) is the identity matrix of order \(p\), and \(\eta^2>0\) is a scalar hyperparameter. The mixture notation is a shorthand for a hierarchical representation with a latent indicator \(Z\sim \text{Bernoulli}(\pi_0)\) and conditional distributions 
\[
  b| Z=1, p \sim \delta_0\quad \text{and}\quad b| Z=0, p \sim \mathcal N_p(0, \eta^2 I_p).
\]
Unlike the priors for \(p\) and \(\tau\), the choice of a normal distribution for the ``slab'' prior of \(b\) is not arbitrary; it will be used as a conjugacy device to simplify the posterior distribution later on. Note that this construction induces the priors \(\Pi(H_0)=\pi_0\) and \(\Pi(H_1)=1-\pi_0\), since the null hypothesis is reinterpreted as \(H_0: b=0\).

\subsection{Posterior Approximation} 

Bayes' formula gives a posterior distribution \(\Pi_n(p, \tau, b | X_{1:n}) \propto \Pi(p, \tau, b)L(X_{1:n}|p, \tau, b)\) up to a multiplicative constant, which can be approximated through computational methods. However, sampling the high-dimensional parameter \(b\in \R^p\) is notoriously challenging, so we marginalize it analytically thanks to the prior structure in~\eqref{eq:prior-b}. The marginal likelihood has the form
\begin{align*}
  m(X_{1:n}|p, \tau) &= \pi_0\int \Pi^{\text{spike}}(b)L(X_{1:n}|p, \tau, b)\, db + (1-\pi_0)\int \Pi^{\text{slab}}(b)L(X_{1:n}|p, \tau, b)\, db\\
  &= \pi_0m_{\text{spike}}(X_{1:n}|p, \tau) + (1-\pi_0)\mslab(X_{1:n}|p, \tau).
\end{align*}
Since \(\Pi^{\text{spike}}(b)\) is a point mass at \(b=0\), we have \(m_{\text{spike}}(X_{1:n}|p,\tau)=e^0=1\), according to~\eqref{eq:likelihood}. In the slab part we get an exact integral, applying the formula for the moment generating function (at \(t=1\)) of a quadratic form in Gaussian variables \citep[Corollary 3.2a.1]{mathai1992quadratic}:
\begin{equation}\label{eq:m-slab}
  \begin{split}
  \mslab(X_{1:n}|p, \tau) &= \int \mathcal N_p(b|0, \eta^2 I_p)L(X_{1:n}|p, \tau, b)\, db \\
  &= \eta^{-p} |\Lambda_\tau|^{-1/2}\exp\left\{\frac{1}{2}S_n(\tau)' \Lambda_\tau^{-1}S_n(\tau)\right\},
  \end{split}
\end{equation}
where \(\Lambda_\tau = nK_\tau + \eta^{-2}I_p\). All in all, the marginal posterior expression becomes
\begin{equation}\label{eq:posterior-p-tau}
\Pi_n(p, \tau|X_{1:n}) \propto \Pi(p)\Pi(\tau|p)[\pi_0 + (1-\pi_0)\mslab(X_{1:n}|p, \tau)].
\end{equation}

At this point we choose to use the RJMCMC algorithm to approximate the marginal posterior \(\Pi_n(p, \tau|X_{1:n})\). Unlike standard MCMC approaches, this method can traverse a variable-dimension parameter space by proposing the birth of a new component or the death of an existing one at each iteration. This yields a sequence of approximate posterior draws \(\{(p^{(r)}, \tau^{(r)}): r=1,\dots,R\}\), possibly of different dimensions. We can subsequently reapply Bayes' formula conditional on \((p^{(r)},\tau^{(r)})\) to recover the posterior samples \(b^{(r)}\), which in turn can be used to construct posterior samples \(\mu^{(r)}\) of the mean function via~\eqref{eq:mu-h0}. Details on these calculations and the specific implementation of RJMCMC are given in Appendix~\ref{app:rjmcmc}.

\subsection{Decision Criteria} One of the main advantages of our Bayesian framework is that it naturally provides several posterior metrics to evaluate the evidence against the null hypothesis given the data. As we said before, the posterior probability of the null, \(\Pi_n(b=0|X_{1:n})\), corresponds to \(\E[\mathds{1}\{b=0\}|X_{1:n}]\), where \(\mathds{1}\{\cdot\}\) is the indicator function. We could naively compute the estimator \(R^{-1}\sum_{r} \mathds{1}\{b^{(r)}=0\}\) with our posterior samples, which by the ergodic properties of the RJMCMC sampler would indeed converge almost surely to the target probability (under standard assumptions). However, this estimator introduces the Monte Carlo noise of actually sampling \(b^{(r)}\), which is not desirable. Instead, we can consider the conditional posterior probability of the spike (computed through Bayes' formula)
\begin{equation}\label{eq:psi-posterior-spike}
\psi_n(p,\tau) = \Pi_n(b=0|p,\tau, X_{1:n}) = \frac{\pi_0}{m(X_{1:n}|p,\tau)},
\end{equation}
and use the equivalence
\[
 \E[\psi_n(p,\tau)|X_{1:n}] = \sum_p \int_{\mathcal T_p} \psi_n(p,\tau) \Pi_n(p,\tau|X_{1:n})\, d\tau = \Pi_n(b=0|X_{1:n})
\]
to obtain a more efficient estimator of the posterior probability of \(H_0\). Indeed, by the law of total variance we have:
\[
\var(\psi_n(p,\tau)|X_{1:n}) = \var(\E[\mathds{1}\{b=0\}|p, \tau, X_{1:n}]|X_{1:n}) \leq \var(\mathds{1}\{b=0\}|X_{1:n}).
\]
Hence, conditioning on \((p,\tau)\) and averaging the posterior spike probabilities \(\psi_n(p^{(r)}, \tau^{(r)})\) instead of the raw indicators \(\mathds{1}\{b^{(r)}=0\}\) yields the improved estimator\footnote{This strategy is similar to the Rao--Blackwellization process to reduce the variance of an estimator \citep[Theorem 7.9]{ghosh2006introduction}.}
\[
\hat \pi_{\text{null}} = \frac{1}{R}\sum_{r=1}^R \psi_n(p^{(r)}, \tau^{(r)}).
\]
Under the \(0\)--\(1\) loss function, we would reject the null hypothesis if \(\hat \pi_{\text{null}} \leq 1/2\) and accept it otherwise, but practitioners can fix any arbitrary threshold on a case-by-case basis.

Another quantity of interest to assess the evidence against the null hypothesis is the Bayes factor, which is defined as
\[
\bff_{10}(X_{1:n}) = \frac{L(X_{1:n}|H_1)}{L(X_{1:n}|H_0)} = \frac{1 - \Pi_n(H_0|X_{1:n})}{\Pi_n(H_0|X_{1:n})} \cdot \frac{\pi_0}{1-\pi_0}
\]
and represents the evidence in favor of \(H_1\) over \(H_0\) as supported by the data. For example, if \(\pi_0=1-\pi_0=1/2\), then \(\Pi_n(H_0|X_{1:n})=1/2\) corresponds to \(\bff_{10}(X_{1:n}) = 1\). Note that in our model \(L(X_{1:n}|H_0)\) is actually equal to 1 by construction, since we take \(\pp_0\) as the reference measure. In practice, we report an estimated Bayes factor 
\[
\hat{\bff}_{10} = \frac{1 - \hat \pi_{\text{null}}}{\hat \pi_{\text{null}}}\cdot \frac{\pi_0}{1-\pi_0}
\]
and use Jeffreys' scale (see Table~\ref{tab:jeffreys-scale}) to weigh the evidence against \(H_0\). Observe that the Bayes factor (a likelihood ratio) is actually independent of the prior belief \(\pi_0\), so it acts as a somewhat more objective measure to compare evidence against \(H_0\) across different models.

\begin{table}[t]
  \centering
  \setlength{\tabcolsep}{7pt}
  \renewcommand{\arraystretch}{1.05}
  \begin{tabular}{@{}rrl@{}}
  \midrule
   \multicolumn{1}{c}{\(\bff_{10}\)} & \multicolumn{1}{c}{\(\log \bff_{10}\)} & \multicolumn{1}{c@{}}{Strength of evidence against \(H_0\)} \\
  \midrule
  \(<1\) & \(<0\) & Negative\\
  1 to 3 & 0 to 1.1 &  Barely worth mentioning \\
  3 to 20 & 1.1 to 3 & Positive \\
  20 to 150 & 3 to 5 & Strong \\
  \(> 150\) &  \(>5\) & Very strong \\
  \midrule
  \end{tabular}
  \caption{Jeffreys' scale for the Bayes factor, adapted by \citet{kass1995bayes}.}\label{tab:jeffreys-scale}
\end{table}

Lastly, we can report the posterior probability of the so-called region of practical equivalence (ROPE), defined for a small \(\epsilon > 0\) as 
\[
\Pi_n\left(\sup_s |\mu(s)| < \epsilon \mathrel{\big\vert} X_{1:n}\right).
\]
This ``practical null'' probability is sometimes used to evaluate \(H_0\) when an exact zero is not meaningful, and is readily available in our Bayesian setting without changing the overall approach (i.e.\ without considering a precise hypothesis \(H_0^\epsilon:\sup_s|\mu(s)| < \epsilon\) from the start). The easiest way to approximate the ROPE probability on the observation grid \(\mathcal S\) is through the Monte Carlo estimator
\[
\hat\pi_{\text{ROPE}}(\epsilon) = \frac{1}{R}\sum_{r=1}^R \mathds{1}\left\{\max_{s\in\mathcal S} |\mu^{(r)}(s)| < \epsilon\right\}.
\]

\subsection{Posterior Summaries and Uncertainty Quantification}\label{sec:posterior-summaries}

An additional benefit of the Bayesian framework is that it facilitates posterior inference on the mean function as a by-product of the testing procedure. For example, we can estimate the posterior mean function \(\mu^{\text{PM}}(\cdot)=\E[\mu(\cdot)|X]\) using the conditional expectation strategy again, which is typically much less noisy than averaging posterior draws \(\mu^{(r)}\) directly. Indeed, for a fixed \(s\in[0,1]\) the law of total expectation gives \(\E[\mu(s)|X] = \E[\E[\mu(s)|p, \tau, X]|X]\), and the same variance reduction argument as before applies. In the i.i.d.\ case, the inner posterior expectation admits the closed-form expression 
\[
\E[\mu(s)|p, \tau, X_{1:n}] = (1-\psi_n(p, \tau))K(\tau, s)'\Lambda_\tau^{-1}S_n(\tau),
\]
where \(K(\tau, s)\) denotes the \(p\times 1\) vector with entries \(K(t_j, s)\), and \(\Lambda_\tau^{-1}S_n(\tau)\) is the posterior expectation of the coefficients under the slab (see Appendix~\ref{app:rjmcmc}). Consequently, a Monte Carlo estimator of the posterior mean function is given by
\[
\hat\mu_n^{\text{PM}}(\cdot) = \hat{\E}[\mu(\cdot)|X_{1:n}]=\frac{1}{R} \sum_{r=1}^R (1-\psi_n(p^{(r)}, \tau^{(r)}))K(\tau^{(r)}, \cdot)'\Lambda_{\tau^{(r)}}^{-1}S_n(\tau^{(r)}),
\]
which is then evaluated on the grid \(\mathcal S\). Note that this can be seen as a regularized estimator for \(\mu\) in which a certain weight \(\pi_0\) was placed on it being the zero function.

Along with the posterior expectation of \(\mu\), it is common to report credible bands around it to quantify the uncertainty in the posterior estimation. In particular, for any \(\alpha\in(0,1)\) we compute simultaneous \(1-\alpha\) credible bands, that is, bands that contain the entire function \(\mu\) with posterior probability \(1-\alpha\). The usual approach for constructing such bands is the maximum deviation method described by \citet{crainiceanu2007spatially}. The main idea is to standardize the deviations from the posterior mean by the pointwise posterior standard deviation \(\sigma(s)=\sqrt{\var(\mu(s)|X_{1:n})}\), to obtain bands that adapt to the local uncertainty. In particular, we aim to find a single critical value \(q_{1-\alpha}\) satisfying
\[
\Pi_n\left(\sup_{s\in [0,1]}\, \frac{|\mu(s) - \mu^{\text{PM}}(s)|}{\sigma(s)} \leq q_{1-\alpha}| X_{1:n}\right)= 1-\alpha,
\]
and then form symmetrical bands \(\{\mu^{\text{PM}}(s) \pm q_{1-\alpha}\sigma(s): s\in [0,1]\}\). In practice, this procedure is carried out on the discretized grid. First, compute the pointwise posterior standard deviation at each grid point \(s\in\mathcal S\) as 
\[
\hat\sigma(s)=\sqrt{\var(\{\mu^{(r)}(s)\}_{r=1}^R)}.
\]
Then, for each draw compute the standardized supremum deviation
\[
D^{(r)} = \max_{s\in\mathcal S} \frac{|\mu^{(r)}(s) - \hat\mu_n^{\text{PM}}(s)|}{\hat\sigma(s)}.
\]
Lastly, find the empirical \(1-\alpha\) quantile \(\hat q_{1-\alpha}\) of \(\{D^{(r)}\}_{r=1}^R\), and construct the credible bands \(\{\hat\mu_n^{\text{PM}}(s) \pm \hat q_{1-\alpha}\,\hat\sigma(s): s \in \mathcal S\}\). Depending on whether the zero function lies entirely within the band, we have another metric to reject \(H_0\) (at probability level \(1-\alpha\)).

\subsection{Estimation of the Covariance Function}\label{sec:covariance-estimation} 

So far we have treated \(K\) as known, but in practice we rarely have access to the complete covariance function. The FDA literature offers a variety of estimation techniques depending on the application and available data, such as maximum likelihood, functional principal component analysis (FPCA), and smooth interpolation methods \citep[e.g.][]{li2010uniform, xiao2018fast, wong2019nonparametric,mbaka2025estimating}. Parametric approaches---both frequentist and Bayesian---are also widely used, but they often require extra assumptions that lead to more complex models. We advocate for a simpler plug-in strategy in which an estimated covariance is computed from the data, and then used in our model as if it were the true covariance. The performance of the plug-in test for finite sample sizes will depend on how well the estimated covariance approximates \(K\), but as we will see in Section~\ref{sec:consistency-plug-in}, the asymptotic properties of the test are unaffected if the estimator itself is uniformly consistent.

Denote the sample mean by \(\bar X_n(\cdot) = n^{-1}\sum_i X_i(\cdot)\) and consider the standard empirical covariance estimator
\[
\hat C_n(t,s) = \frac{1}{n-1}\sum_{i=1}^n (X_i(t)-\bar X_n(t))(X_i(s) - \bar X_n(s)).
\]
From a computational perspective we only need to consider its discretized counterpart \(\hat{\mathbf C}_n=[\hat{C}_n(s_g, s_h)]_{g,h=1}^G\), which approximates the discretized covariance \(\mathbf K=[K(s_g, s_h)]_{g,h=1}^G\). However, in functional settings the sample size is typically small relative to the number of grid points, so the empirical covariance can be highly rank-deficient. A common regularization technique consists on shrinking towards a well-conditioned target to lift the near-zero eigenvalues, improving the quality and stability of the estimator. The Oracle Approximating Shrinkage (OAS) estimator \citep{chen2010shrinkage}, based on an earlier proposal by \citet{ledoit2004well} and improved for the Gaussian case, is a popular choice for this purpose. Its closed-form expression on the grid is
\[
\hat{\mathbf K}_n^{\text{OAS}} = (1-\hat\rho_n)\hat{\mathbf C}_n + \hat\rho_n \frac{\tr(\hat{\mathbf C}_n)}{G}I_G,
\]
where
\begin{equation}\label{eq:rho-oas}
\hat\rho_n = \min\left\{1,\, \frac{(1-2/G)\tr(\hat{\mathbf C}_n^2) + \tr^2(\hat{\mathbf C}_n)}{(n+1-2/G)\left[\tr(\hat{\mathbf C}_n^2) - \tr^2(\hat{\mathbf C}_n)/G \right]}\right\}.
\end{equation}

A convenient consequence of this formulation is that the resulting covariance matrix is strictly positive definite (except in pathological cases). Moreover, it can be shown that as the sample size grows the OAS estimator increasingly resembles the empirical covariance estimator \(\hat{\mathbf C}_n\), which is the expected behavior since there is less and less need for regularization.

\section{Asymptotic Theory}\label{sec:consistency}

To assess the frequentist behavior of the proposed Bayesian test, we study posterior consistency and posterior contraction rates. Posterior consistency means that, as the sample size grows, the posterior probability assigned to the true hypothesis converges to one (under the true data generating distribution), so the Bayesian procedure is asymptotically correct. More generally, a posterior contraction rate quantifies how fast the posterior mass concentrates around the true hypothesis as \(n\to\infty\). 

To evaluate contraction rates we will use the standard concept of \textit{stochastic boundedness} \citep[e.g.][Chapter 2]{vaart1998asymptotic}: given a sequence of nonnegative random variables \(Y_n\), whose distribution is determined by \(\pp\), and a sequence of positive numbers \(a_n\), we write \(Y_n=O_{\pp}(a_n)\) if for any \(\epsilon>0\) there exist \(N,M>0\) such that \(\pp(Y_n/a_n > M) < \epsilon\) for all \(n> N\). In this case, if \(a_n\to 0\) we say that \(a_n\) is a contraction rate for \(Y_n\), or equivalently, that \(Y_n\) decays at rate \(a_n\) (in probability).

We already established that the sample paths of the process \(X\) are supposed to be continuous, but we state this fact as an explicit assumption to emphasize its importance in the theoretical guarantees.

\begin{assumption}\label{ass:continuous-paths}
  The sample paths \(t \mapsto X(t)\) are continuous almost surely.
\end{assumption}

As an abuse of notation, we still write \(\pp_0\) and \(\pp_\mu\) for the product measures \(\pp_0^{(n)}\) and \(\pp_{\mu}^{(n)}\) of \((X_1,X_2,\dots, X_n)\), respectively, when the dependence on the sample size \(n\) is clear from context. Moreover, we let \(\pp_0^\infty\) denote the distribution of \((X_1,X_2,\dots)\) under \(H_0\). The proofs of the results in this section are deferred to Appendix~\ref{app:proofs}.

\subsection{Consistency and Contraction Rates for Known Covariance}\label{sec:consistency-known}

We first focus on the case where \(H_0\) is true, that is, \(X_1,\dots,X_n \sim \gp(0, K)\) i.i.d.\ with distribution \(\pp_0\). As it is often the case in Bayesian tests with a point-null hypothesis, posterior consistency holds as long as \(\Pi(H_0)>0\), through a direct application of Doob's theorem \citep[][Theorem 10.24]{dass2004note,ghosal2017fundamentals}.

\begin{theorem}
  Under \(H_0\), \(\lim_{n\to\infty}\Pi_n(H_1|X_{1:n}) = 0\) a.s.\ \raisebox{.25pt}{\([\pp_0^\infty]\)}.
\end{theorem}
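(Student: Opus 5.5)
The plan is to use Doob's posterior consistency theorem, as the paper suggests. I would also give a short direct argument as a check, since the structure of the model makes it nearly explicit.

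\textbf{Doob route.} Regard the whole model as a joint distribution of \((\theta,X_1,X_2,\dots)\) on a Polish parameter space \(\Theta=\Theta_0\cup\Theta_1\). This space is a countable union over \(p\) of subsets of \([0,1]^p\times\R^p\), with \(\Theta_0\) collapsed to the single point \(\mu\equiv 0\). Given \(\theta\), the observations are i.i.d.\ \(\pp_{\mu_\theta}\). Doob's theorem states that if the map \(\theta\mapsto\pp_{\mu_\theta}\) is identifiable, then for \(\Pi\)-almost every \(\theta\) the posterior is consistent at \(\theta\). This means \(\Pi_n(A|X_{1:n})\to\mathds 1\{\theta\in A\}\) \(\pp_{\mu_\theta}^\infty\)-a.s.\ for every measurable \(A\). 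Here one only needs identifiability of the functional of interest, \(\mathds 1\{\mu_\theta\equiv 0\}\). The map \(\mu\mapsto\pp_\mu\) is injective because \(\mu\) is the mean. Also, \(\mu_\theta\equiv0\) if and only if \(b=0\), using the ordered distinct \(\tau\) and the strict positive definiteness of \(K\). Hence the event \(\{b=0\}\) is measurable with respect to the limiting \(\sigma\)-field of the data.

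The key observation is that the exceptional null set cannot contain the null hypothesis. Indeed, \(\Pi(H_0)=\pi_0>0\), and every \(\theta\in\Theta_0\) induces the same law \(\pp_0\). So if consistency failed under \(\pp_0^\infty\), it would fail on a set of prior mass at least \(\pi_0\), which is a contradiction.

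\textbf{Direct check.} By~\eqref{eq:posterior-p-tau} and~\eqref{eq:psi-posterior-spike},
\[
\Pi_n(H_0|X_{1:n})=\frac{\pi_0}{\pi_0+(1-\pi_0)\,\bar m_n},\qquad \bar m_n=\sum_p\Pi(p)\int\Pi(\tau|p)\,\mslab(X_{1:n}|p,\tau)\,d\tau .
\]
Thus it suffices to show \(\bar m_n\to0\) \(\pp_0^\infty\)-a.s. The quantity \(\bar m_n\) is the Bayes factor, i.e.\ the likelihood ratio \(d\pp_{H_1}^{(n)}/d\pp_0^{(n)}\). Under \(\pp_0\) it is a nonnegative martingale, so it converges a.s. to a limit \(\bar m_\infty\). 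The limit is zero provided \(\pp_{H_1}^\infty\perp\pp_0^\infty\). This singularity holds because the strong law gives \(\bar X_n\to\mu\) pointwise. Under \(\pp_{H_1}^\infty\) the limit \(\mu\) is almost surely nonzero at some grid point, whereas under \(\pp_0^\infty\) it is identically zero.

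\textbf{Main obstacle.} The hard part is the measure-theoretic bookkeeping. One must verify that \(\Theta\) with the trans-dimensional structure is a standard Borel space. One must also check that \(\theta\mapsto\pp_{\mu_\theta}(A)\) is measurable, which follows from the continuity in \((\tau,b)\) of the density~\eqref{eq:rn-derivative}. For singularity, one must exhibit an explicit tail-measurable set, for example \(\{\limsup_n |\bar X_n(t)|>0 \text{ for some rational } t\}\), that separates the two laws. Assumption~\ref{ass:continuous-paths} ensures that restricting to rational \(t\) loses nothing.
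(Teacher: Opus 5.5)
Your proposal is correct, and its main route is the paper's own: the paper only invokes Doob's theorem (via Dass and Lee, and Theorem 10.24 of Ghosal and van der Vaart). The key point is exactly your observation that \(\Pi(H_0)=\pi_0>0\) while every \(\theta\in\Theta_0\) induces the same law \(\pp_0\), so the \(\Pi\)-null exceptional set cannot touch \(\Theta_0\). Your supplementary argument is also valid: \(\bar m_n\) is a nonnegative \(\pp_0\)-martingale whose limit vanishes because \(\pp_{H_1}^\infty\perp\pp_0^\infty\), with the strong law on rational \(t\) witnessing the singularity, and this supplies the measurability bookkeeping the paper leaves to the citation.
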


To study posterior contraction rates, we impose additional assumptions on the prior distributions.

\begin{assumption}\label{ass:compact-support-tau}
  The prior measure on the time instants, \(\Pi_\tau(\cdot|p)\), is supported on a compact subset \(\mathcal C_p\subset \mathcal T_p\), that is, \(\Pi_\tau(\mathcal C_p|p)=1\) for all \(p\) such that \(\Pi(p)>0\).
\end{assumption}
\begin{assumption}\label{ass:prior-p-finite}
  The prior \(\Pi(p)\) has finite support, i.e., \(\Pi(p) = 0\) for all \(p > P_{\max}\).
\end{assumption}

Note that~\ref{ass:compact-support-tau} is verified, for example, by enforcing a uniform minimum separation on the time points in \(\tau\) (which already happens in practice on the finite grid \(\mathcal S\)). Indeed, for a fixed \(\delta>0\) one can consider the compact set
\[
\mathcal C_p(\delta) = \{(t_1,\dots, t_p)\in [0,1]^p: t_1 \geq \delta,\ t_p \leq 1 - \delta,\ t_{j+1} - t_j \geq \delta\ \text{for}\ j=1,\dots,p-1\}.
\]
On the other hand,~\ref{ass:prior-p-finite} is effectively a truncation on the model dimension, but it is not as restrictive as it may seem. Firstly, it is a standard assumption in the literature, and secondly, our model is designed for simplicity and parsimony, so we in fact encourage fewer components. In any event, the unknown value of \(p\) can still vary over a set as large as needed (e.g.\ we can take \(P_{\max}=|\mathcal S|\)). With these assumptions in mind, we can show that under \(H_0\) the posterior probability of the alternative hypothesis decays polynomially in the sample size. Let \(p_{\min}=\min\{p\in \N: \Pi(p)>0\}\).

\begin{theorem}\label{th:consistency-rates-h0}
  Suppose~\ref{ass:continuous-paths}--\ref{ass:prior-p-finite} hold. Then, \(\Pi_n(H_1|X_{1:n}) = O_{\pp_0}(n^{-p_{\min}/2})\) under \(H_0\).
\end{theorem}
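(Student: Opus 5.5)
The plan is to write the posterior probability of \(H_1\) explicitly through the marginal posterior~\eqref{eq:posterior-p-tau}. Summing over \(p\) and integrating over \(\tau\), we get
\[
\Pi_n(H_1|X_{1:n}) = \frac{(1-\pi_0)\sum_p \Pi(p)\int \mslab(X_{1:n}|p,\tau)\,\Pi(d\tau|p)}{\pi_0 + (1-\pi_0)\sum_p \Pi(p)\int \mslab(X_{1:n}|p,\tau)\,\Pi(d\tau|p)}.
\]
The denominator is at least \(\pi_0>0\), so it suffices to show that \(\sum_p\Pi(p)\int \mslab\,d\Pi_\tau = O_{\pp_0}(n^{-p_{\min}/2})\). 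By~\ref{ass:prior-p-finite} the sum over \(p\) is finite. It is therefore enough to show that for each fixed \(p\) with \(\Pi(p)>0\),
\[
\int \mslab(X_{1:n}|p,\tau)\,\Pi(d\tau|p) = O_{\pp_0}(n^{-p/2}),
\]
and then use \(n^{-p/2}\le n^{-p_{\min}/2}\).

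Next we use~\eqref{eq:m-slab} and split \(\mslab\) into a determinant factor and an exponential factor. With \(\Lambda_\tau = nK_\tau+\eta^{-2}I_p\), we have \(|\Lambda_\tau|\ge n^p|K_\tau|\), so \(\eta^{-p}|\Lambda_\tau|^{-1/2}\le \eta^{-p}n^{-p/2}|K_\tau|^{-1/2}\). Because \(K\) is continuous (implied by~\ref{ass:continuous-paths} and Gaussianity, or assumed) and strictly positive definite, the map \(\tau\mapsto \lambda_{\min}(K_\tau)\) is continuous and positive on the compact set \(\mathcal C_p\) of~\ref{ass:compact-support-tau}. Hence it is bounded below by some \(c_p>0\), which bounds \(|K_\tau|^{-1/2}\) uniformly.

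For the exponential factor, note that \(\Lambda_\tau^{-1}\preceq (nK_\tau)^{-1}\). This gives
\[
S_n(\tau)'\Lambda_\tau^{-1}S_n(\tau)\le Z_n(\tau)'K_\tau^{-1}Z_n(\tau)\le c_p^{-1}\|Z_n(\tau)\|^2,
\qquad Z_n(\tau)=n^{-1/2}S_n(\tau),
\]
and \(\|Z_n(\tau)\|^2\le p\,\sup_{t\in[0,1]} W_n(t)^2\), where \(W_n = n^{-1/2}\sum_i X_i\). Under \(H_0\), \(W_n\) has exactly the law \(\gp(0,K)\) for every \(n\). By~\ref{ass:continuous-paths} its sample paths are continuous, so \(M=\sup_t|W_n(t)|\) is an a.s.\ finite random variable whose distribution does not depend on \(n\); it is therefore tight. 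Combining the two factors,
\[
\int \mslab\,d\Pi_\tau \le \eta^{-p}c_p^{-p/2}\,n^{-p/2}\exp\{p\,M^2/(2c_p)\},
\]
and the exponential term is \(O_{\pp_0}(1)\). This yields the claim.

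The main obstacle is making the bound uniform in \(\tau\), which is exactly where the supremum over \(t\) of the normalized sum process and the compactness in~\ref{ass:compact-support-tau} are needed. The specific difficulty is the lower bound on \(\lambda_{\min}(K_\tau)\): the constraint \(\mathcal C_p\subset\mathcal T_p\) keeps the points distinct, which prevents \(K_\tau\) from degenerating. Continuity of \(K\) should be checked or taken from the Gaussian continuous-path setting; if only measurability were available, one would instead work with the finite grid \(\mathcal S\).
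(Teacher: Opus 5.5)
Your proof is correct and follows essentially the same route as the paper's: a uniform lower bound on \(\lambda_{\min}(K_\tau)\) over the compact \(\mathcal C_p\) controls both \(|\Lambda_\tau|^{-1/2}\) (giving the \(n^{-p/2}\) factor) and the quadratic form \(S_n(\tau)'\Lambda_\tau^{-1}S_n(\tau)\lesssim \|S_n(\tau)\|^2/n\). The argument then closes with tightness of the normalized sum process under \(H_0\) and the bound \(\Pi_n(H_1|X_{1:n})\le \frac{1-\pi_0}{\pi_0}m_1(X_{1:n})\); your domination of \(\sup_\tau\|Z_n(\tau)\|^2\) by \(p\sup_t W_n(t)^2\), with \(W_n\sim\gp(0,K)\) for every \(n\), is a slightly cleaner way of making explicit the \(n\)-independence of the law that the paper's tightness step implicitly uses.
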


\begin{remark}
  If \(p_{\min}=1\) we recover the standard parametric rate \(n^{-1/2}\). Even though the asymptotic rate \(O_{\pp_0}(n^{-p_{\min}/2})\) would suggest that it is better to ignore low values of \(p\) in the prior to improve consistency speed, finite-sample performance can be hurt by considering many components, since they introduce increasingly high collinearity. Moreover, removing low-dimensional alternatives makes the test more conservative and reduces interpretability.
\end{remark}

We now turn to the case where \(H_1\) is true. Suppose that the model is well-specified and denote the true mean function by \(\mu_*\in \Hcal(K) \setminus \{0\}\), so that \(X_1,\dots,X_n \sim \gp(\mu_*, K)\) i.i.d.\ with distribution \(\pp_{\mu_*}\) (note that \(\mu_*\in\Hcal(K)\setminus \Hcal_0(K)\) is allowed). In this situation we need to make sure that the prior places positive mass on the true mean, so the support cannot be restricted.

\begin{assumption}\label{ass:positive-prior-mass}
  The prior distribution \(\Pi\) charges open sets, i.e., \(\Pi(p)>0\) for all \(p\in\N\) and \(\Pi_{\tau,b}^{\text{slab}}(U|p)>0\) for all open sets \(U \subseteq \mathcal T_p\times \R^p\), where \(\Pi_{\tau,b}^{\text{slab}}(\cdot|p)\) is the slab part of the joint prior measure on \((\tau, b)\).
  
\end{assumption}

This mild condition is already sufficient to derive exponential contraction rates (which immediately guarantees posterior consistency). The proof relies on the fact that \(\Hcal_0(K)\) is dense in \(\Hcal(K)\), so we can always find an approximation \(\mu_{\theta}\in\Hcal_0(K)\) as close as desired to \(\mu_*\) in the RKHS norm.

\begin{theorem}\label{th:consistency-rates-h1}
  Suppose~\ref{ass:continuous-paths} and~\ref{ass:positive-prior-mass} hold. Then, \(\Pi_n(H_0|X_{1:n}) = O_{\pp_{\mu_*}}(e^{-n\gamma})\) under \(H_1\), for any \(0 < \gamma < \frac{1}{2}\|\mu_*\|^2_K\).
\end{theorem}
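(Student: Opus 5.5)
Write the posterior probability of the null as a ratio. By \eqref{eq:posterior-p-tau}, \eqref{eq:psi-posterior-spike} and the remark that \(L(X_{1:n}|H_0)=1\),
\[
\Pi_n(H_0|X_{1:n}) = \frac{\pi_0}{\pi_0 + (1-\pi_0)\, D_n},\qquad D_n=\sum_p \Pi(p)\int \Pi(\tau|p)\,\mslab(X_{1:n}|p,\tau)\,d\tau .
\]
Here \(D_n=\int L(X_{1:n}|\theta)\,d\Pi^{\text{slab}}(\theta)\) is the slab marginal likelihood. It therefore suffices to show that \(e^{-n\gamma}D_n^{-1}=O_{\pp_{\mu_*}}(1)\), and in fact I would aim for \(D_n\ge e^{n\gamma}\) with \(\pp_{\mu_*}\)-probability tending to one. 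The strategy is a lower bound on \(D_n\) obtained by restricting the integral to a small neighbourhood of a good finite approximation of \(\mu_*\), in the spirit of the standard evidence lower bound in Ghosal--van der Vaart.

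\textbf{Step 1 (pointwise log-likelihood).} For a fixed \(\theta\), under \(\pp_{\mu_*}\) the quantity \(\langle X_i,\mu_\theta\rangle_K=\Psi_{X_i}(\mu_\theta)\) is Gaussian with mean \(\langle\mu_*,\mu_\theta\rangle_K\) and variance \(\|\mu_\theta\|_K^2\). Hence
\[
\frac1n\log L(X_{1:n}|\theta)=\langle\mu_*,\mu_\theta\rangle_K-\tfrac12\|\mu_\theta\|_K^2+\frac1n\sum_i\bigl(\langle X_i,\mu_\theta\rangle_K-\langle\mu_*,\mu_\theta\rangle_K\bigr).
\]
The deterministic part equals \(\tfrac12\|\mu_*\|_K^2-\tfrac12\|\mu_*-\mu_\theta\|_K^2\).

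\textbf{Step 2 (approximation and neighbourhood).} Fix \(\gamma<\gamma'<\tfrac12\|\mu_*\|_K^2\). By density of \(\Hcal_0(K)\), choose \(\theta^*=(p^*,\tau^*,b^*)\) with distinct ordered \(\tau^*\), \(b^*\neq0\), and \(\tfrac12\|\mu_*\|^2_K-\tfrac12\|\mu_*-\mu_{\theta^*}\|_K^2>\gamma'+2\epsilon\) for some \(\epsilon>0\). The map \((\tau,b)\mapsto\mu_{(p^*,\tau,b)}\in\Hcal(K)\) is continuous, since \(\|\mu_\theta-\mu_{\theta'}\|_K^2\) is a quadratic form in the entries of \(K\), and \(K\) is continuous by~\ref{ass:continuous-paths} together with Gaussianity. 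So there is an open neighbourhood \(U\) of \((\tau^*,b^*)\) in \(\mathcal T_{p^*}\times(\R^{p^*}\setminus\{0\})\) on which the deterministic part exceeds \(\gamma'+\epsilon\). By~\ref{ass:positive-prior-mass}, \(c:=\Pi(p^*)\Pi^{\text{slab}}_{\tau,b}(U|p^*)>0\).

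\textbf{Step 3 (uniform control of the fluctuation).} This step is the main obstacle, because the random term must be controlled uniformly over \(U\) rather than pointwise. For \(\theta\in U\), the fluctuation equals \(b'\bigl(\bar{\mathbf X}_n(\tau)-\mu_*(\tau)\bigr)\), where \(\bar{\mathbf X}_n(\tau)\) is the vector of values \(\bar X_n(t_j)\). Shrinking \(U\) to have compact closure, \(\|b\|\le B\) on \(U\). Moreover,
\[
\sup_{t\in[0,1]}|\bar X_n(t)-\mu_*(t)|\to0
\]
almost surely by the strong law of large numbers in the separable Banach space \(\mathcal C[0,1]\). This applies because \(\E\sup_t|X(t)|<\infty\) for a Gaussian process with continuous paths, by Fernique's theorem, and this is exactly where~\ref{ass:continuous-paths} is used. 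Consequently, with probability tending to one the fluctuation is at most \(\epsilon\) uniformly on \(U\). On that event, \(D_n\ge\int_U L\,d\Pi\ge c\,e^{n\gamma'}\).

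\textbf{Step 4 (conclusion).} On the same event, \(\Pi_n(H_0|X_{1:n})\le \pi_0/\bigl((1-\pi_0)c\,e^{n\gamma'}\bigr)\), so \(e^{n\gamma}\,\Pi_n(H_0|X_{1:n})\le C e^{-n(\gamma'-\gamma)}\to0\). This gives \(\Pi_n(H_0|X_{1:n})=O_{\pp_{\mu_*}}(e^{-n\gamma})\), and the convergence is even almost sure.
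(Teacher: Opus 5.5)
Your proof is correct and follows the same evidence-lower-bound structure as the paper's. Like the paper, you approximate \(\mu_*\) from \(\Hcal_0(K)\), restrict the slab marginal likelihood to a positive-mass neighbourhood on which \(\tfrac12\|\mu_*\|_K^2-\tfrac12\|\mu_*-\mu_{\tau,b}\|_K^2\) is bounded below, and conclude via \(\Pi_n(H_0|X_{1:n})\le \pi_0/((1-\pi_0)m_1(X_{1:n}))\). The one difference is how you control the fluctuation term. The paper keeps it at the \(\sqrt n\) scale, using that \(Z_\tau=(S_n(\tau)-n\mu_*(\tau))/\sqrt n\) has an \(n\)-independent law, so \(\sup_{\tau\in U_1}\|Z_\tau\|=O_{\pp_{\mu_*}}(1)\). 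You show only that it is \(o(n)\) uniformly, via Mourier's strong law in \(\mathcal C[0,1]\) plus Fernique. That is cruder in scale, but it suffices thanks to the slack \(\gamma<\gamma'\), and it gives you the almost-sure version of the statement.
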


\begin{remark}
  The upper bound on \(\gamma\) is not arbitrary; we will show in Section~\ref{sec:optimality} that it is in fact an information-theoretic limit. We also note that, as expected, the rate will be faster when it is easier to distinguish \(\mu_*\) from the zero function, that is, the larger \(\|\mu_*\|_K\) is.
\end{remark}

It is worth mentioning that~\ref{ass:positive-prior-mass} is a bit more restrictive than needed, and it cannot hold simultaneously with the assumptions of Theorem~\ref{th:consistency-rates-h0}. Looking carefully at the proof of Theorem~\ref{th:consistency-rates-h1} (see Appendix~\ref{app:proofs}), one notices that the approximated mean function \(\mu_\theta\) is only required to satisfy the condition \(\|\mu_* - \mu_{\theta}\|^2_K < \|\mu_*\|^2_K\). To make the result compatible with the practical assumptions~\ref{ass:compact-support-tau} and~\ref{ass:prior-p-finite}, we could replace~\ref{ass:positive-prior-mass} by the following condition.

\begin{assumptionalt}\label{ass:positive-prior-mass-alt}
  There exists \(\theta_0=(p_0,\tau_0,b_0)\) such that \(p_0\leq P_{\max}\), \(\tau_0\in \mathcal C_{p_0}\), \(b_0\in\R^{p_0} \setminus \{0\}\) and \(\|\mu_* - \mu_{\theta_0}\|^2_K < \|\mu_*\|^2_K\). Moreover, for each \(p\leq P_{\max}\) we have \(\Pi(p)>0\) and \(\Pi_{\tau,b}^{\text{slab}}(\cdot|p)\) has full support on \(\mathcal C_p \times \R^p\).
\end{assumptionalt}

The full support condition is trivially satisfied with a standard choice of priors. As for the approximation constraint, we claim that in practice it reduces to a very minor requirement on the grid \(\mathcal S\). Fix \(p_0=1\), consider any single time point \(s\) with \(\tau_0=(s)\in \mathcal C_1\), and set \(b_0=(\mu_*(s)/K(s,s))\), which is well-defined because \(K\) is strictly positive definite. Then, by expanding the RKHS norm and using the reproducing property, we get
\[
\|\mu_* - \mu_{\theta_0}\|^2_K = \|\mu_*\|^2_K + \|\mu_{\theta_0}\|^2_K - 2\dotprod{\mu_*}{\mu_{\theta_0}}_K = \|\mu_*\|^2_K - \frac{\mu_*(s)^2}{K(s,s)},
\]
which is smaller than \(\|\mu_*\|^2_K\) as long as \(\mu_*(s)\neq 0\). Thus,~\ref{ass:positive-prior-mass-alt} holds whenever there exists at least one point in \(\mathcal C_1\) (which is effectively equal to \(\mathcal S\)) where \(\mu_*\) does not vanish. 

We can state a variant of Theorem~\ref{th:consistency-rates-h1} under this alternative assumption, in which the exponential rate now includes an additional approximation slack.

\begin{theorem}\label{th:consistency-rates-h1-alt}
  Suppose~\ref{ass:continuous-paths}--\ref{ass:prior-p-finite} and~\ref{ass:positive-prior-mass-alt} hold. Then, \(\Pi_n(H_0|X_{1:n}) = O_{\pp_{\mu_*}}(e^{-n\gamma})\) under \(H_1\), for any \(0 < \gamma < \frac{1}{2}(\|\mu_*\|^2_K - \|\mu_* - \mu_{\theta_0}\|^2_K)\).
\end{theorem}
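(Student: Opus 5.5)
The plan is to bound the posterior odds of the null by the reciprocal of the marginal likelihood under the alternative. The null marginal likelihood is identically \(1\), because \(\pp_0\) is the reference measure. Writing
\[
\Pi_n(H_0|X_{1:n}) = \frac{\pi_0}{\pi_0 + (1-\pi_0)M_n}\le \frac{\pi_0}{(1-\pi_0)M_n},\qquad M_n=\sum_p \Pi(p)\int \Pi(\tau|p)\,\mslab(X_{1:n}|p,\tau)\,d\tau,
\]
it suffices to show that \(e^{-n\gamma}/M_n = O_{\pp_{\mu_*}}(1)\). Equivalently, I need a lower bound on \(M_n\) of the form \(e^{n\gamma'}\) for some \(\gamma'>\gamma\), holding with probability tending to one. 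I will write \(M_n=\int L(X_{1:n}|\theta)\,d\Pi^{\text{slab}}(\theta)\) as an integral of the likelihood \eqref{eq:likelihood} against the slab prior on \((p,\tau,b)\).

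\textbf{Step 1 (localize).} Fix \(\gamma<\gamma_1<\frac12(\|\mu_*\|_K^2-\|\mu_*-\mu_{\theta_0}\|_K^2)\). I restrict the integral to \(p=p_0\) and to a small neighbourhood \(U\) of \((\tau_0,b_0)\) in \(\mathcal C_{p_0}\times\R^{p_0}\). By~\ref{ass:positive-prior-mass-alt}, \(\Pi(p_0)>0\) and \(\Pi^{\text{slab}}_{\tau,b}(U|p_0)>0\). Since \(K\) is continuous (it is the covariance of a continuous Gaussian process), the map \(\theta\mapsto\|\mu_\theta\|_K^2=b'K_\tau b\) is continuous. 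Hence \(\|\mu_\theta\|_K\) and \(\dotprod{\mu_*}{\mu_\theta}_K=\sum_j\beta_j\mu_*(t_j)\) are continuous on \(U\), and so is \(\|\mu_*-\mu_\theta\|_K^2\). Shrinking \(U\), I can ensure \(\dotprod{\mu_*}{\mu_\theta}_K - \frac12\|\mu_\theta\|_K^2 > \gamma_1 + 2\epsilon\) on \(U\) for some \(\epsilon>0\). This works because at \(\theta_0\) this quantity equals \(\frac12(\|\mu_*\|^2_K-\|\mu_*-\mu_{\theta_0}\|^2_K)\).

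\textbf{Step 2 (law of large numbers, uniform on \(U\)).} Under \(\pp_{\mu_*}\), write \(X_i=\mu_*+W_i\) with \(W_i\sim\gp(0,K)\). Then
\[
\frac1n\log L(X_{1:n}|\theta)=\dotprod{\mu_*}{\mu_\theta}_K-\tfrac12\|\mu_\theta\|_K^2+\sum_j\beta_j\bar W_n(t_j).
\]
The noise term is bounded by \(\|b\|_1\sup_{t}|\bar W_n(t)|\). Here \(\bar W_n\sim\gp(0,K/n)\), so by~\ref{ass:continuous-paths} and Gaussian concentration for the supremum (for instance Borell--TIS, which gives \(\sup_t|\bar W_n(t)|=O_{\pp}(n^{-1/2})\)), this term tends to zero in probability uniformly over the bounded set \(U\). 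This uniformity is the step that needs care and is the main obstacle. The supremum bound over \([0,1]\) handles it cleanly, since \(b\) ranges over a bounded set and \(\tau\) enters only through evaluation points. Consequently, with probability tending to one, \(L(X_{1:n}|\theta)\ge e^{n(\gamma_1+\epsilon)}\) for all \(\theta\in U\).

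\textbf{Step 3 (conclude).} On that event, \(M_n\ge \Pi(p_0)\Pi^{\text{slab}}_{\tau,b}(U|p_0)\,e^{n(\gamma_1+\epsilon)}\). Therefore
\[
e^{n\gamma}\,\Pi_n(H_0|X_{1:n})\le C e^{-n(\gamma_1+\epsilon-\gamma)}\to0
\]
with probability tending to one, which is stronger than the required \(O_{\pp_{\mu_*}}(1)\). Assumptions~\ref{ass:compact-support-tau} and~\ref{ass:prior-p-finite} play no substantive role here beyond ensuring that the prior is well defined. The density of \(\Hcal_0(K)\) used for Theorem~\ref{th:consistency-rates-h1} is replaced by the explicit approximant \(\theta_0\) supplied by~\ref{ass:positive-prior-mass-alt}.
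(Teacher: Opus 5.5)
Your proof is correct and follows essentially the same route as the paper. You localize the slab marginal at \(p_0\) and a small neighbourhood of \((\tau_0,b_0)\) on which \(f(\tau,b)=\frac12\|\mu_*\|_K^2-\frac12\|\mu_*-\mu_{\tau,b}\|_K^2\) stays above the target exponent, you control the stochastic term uniformly (with \(\|b\|_1\sup_t|\bar W_n(t)|\) where the paper uses Cauchy--Schwarz with \(\sup_{\tau\in U_1}\|Z_\tau\|\), an equivalent device), and you invert the resulting exponential lower bound on \(m_1(X_{1:n})\). Borell--TIS is not needed: \(\sqrt{n}\,\bar W_n\sim\gp(0,K)\) has a law independent of \(n\) whose supremum over \([0,1]\) is a.s.\ finite by path continuity, and this already gives \(\sup_t|\bar W_n(t)|=O_{\pp}(n^{-1/2})\).
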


\begin{remark}
  By considering~\ref{ass:positive-prior-mass-alt} instead of~\ref{ass:positive-prior-mass} we gain a simpler condition that does not conflict with the assumptions of Theorem~\ref{th:consistency-rates-h0}, but the price we pay is a slower contraction rate under \(H_1\) (unless \(\mu_* \in \Hcal_0(K)\)). Moreover, even though~\ref{ass:positive-prior-mass-alt} can be satisfied with just \(p_0=1\), the rate improves with the quality of the approximation (i.e.\ when \(\|\mu_* - \mu_{\theta_0}\|^2_K\) is small), so we can expect much faster rates if the condition holds for bigger values of \(p_0\).
\end{remark}

\subsection{Consistency and Contraction Rates for Plug-in Covariance}\label{sec:consistency-plug-in}

We extend the results of Section~\ref{sec:consistency-known} to the case in which the true covariance function \(K(t,s)\) is unknown and replaced by a symmetric plug-in estimator \(\hat K_n(t,s)\), which need not be strictly positive definite. For any \(\tau\in\mathcal T_p\), let \(\hat K_{n,\tau} = [\hat K_n(t_j, t_k)]_{j,k=1}^p\) be the plug-in analog of the matrix \(K_\tau\) defined in~\eqref{eq:likelihood}. We will see that as long as we have uniform consistency of \(\hat K_{n,\tau}\) to \(K_\tau\), the approximation effect is negligible and the posterior contraction rates stay the same. Therefore, we impose the following condition.

\begin{assumption}\label{ass:uniform-consistency-plug-in}
  For every \(p\) with \(\Pi(p)>0\) and any compact set \(\mathcal C_p\subset \mathcal T_p\), it holds that
  \[
  \sup_{\tau\in \mathcal C_p}\|\hat K_{n,\tau} - K_\tau\| \stackrel{\pp}{\longrightarrow} 0.
  \]
\end{assumption}

It can be shown that both the empirical covariance estimator \(\hat C_n\) and the OAS estimator described in Section~\ref{sec:covariance-estimation} satisfy the stronger condition of almost sure uniform convergence. Consider for each \(\tau\in\mathcal T_p\) the discretized matrices
\[
\hat{C}_{n,\tau} = [\hat C_n(t_j, t_k)]_{j,k=1}^p \quad \text{and}\quad \hat{K}_{n,\tau}^{\text{OAS}} = (1-\hat\rho_{n,\tau})\hat C_{n,\tau} + \hat\rho_{n,\tau}\frac{\tr(\hat C_{n,\tau})}{p}I_p,
\]
where \(\hat\rho_{n,\tau}\in[0,1]\) is the OAS shrinkage coefficient computed via the analogue of~\eqref{eq:rho-oas} with \(\hat{\mathbf C}_n\) replaced by \(\hat{C}_{n,\tau}\) and \(G\) replaced by \(p\).

\begin{proposition}\label{prop:sample-cov-matrix-convergence}
  For any matrix norm \(\|\cdot\|\) and any compact set \(\mathcal C_p\subset \mathcal T_p\), we have:
  \begin{enumerate}
    \item[(i)] \(\|\hat {\mathbf C}_n - \mathbf K\| \to 0\) a.s.\ and \(\sup_{\tau\in \mathcal C_p} \|\hat C_{n,\tau} - K_\tau\| \to 0\) a.s.
    \item[(ii)] \(\|\hat{\mathbf K}^{\emph{OAS}}_n - \mathbf K\| \to 0\) a.s.\ and \(\sup_{\tau\in \mathcal C_p} \|\hat K^{\emph{OAS}}_{n,\tau} - K_\tau\| \to 0\) a.s.
  \end{enumerate}
\end{proposition}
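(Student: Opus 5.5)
The whole proposition reduces to a single fact: the empirical covariance function \(\hat C_n(t,s)\) converges to \(K(t,s)\) almost surely, uniformly on \([0,1]^2\). The grid and \(\tau\)-dependent statements then follow because all matrix norms on a fixed finite-dimensional space are equivalent. So it suffices to control, for instance, the max-entry norm, and
\[
\sup_{\tau\in\mathcal C_p}\max_{j,k}|\hat C_n(t_j,t_k)-K(t_j,t_k)| \le \sup_{t,s\in[0,1]}|\hat C_n(t,s)-K(t,s)|,
\]
with the same bound for the \(G\times G\) grid matrices. The dimension is fixed (\(p\) or \(G\)) in each case, so the norm-equivalence constants do not depend on \(\tau\).

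For the uniform convergence I would use a strong law of large numbers in the separable Banach space \(\mathcal C([0,1]^2)\), invoking Mourier's SLLN. Write
\[
\hat C_n(t,s)=\frac{n}{n-1}\Big(\frac1n\sum_i X_i(t)X_i(s)-\bar X_n(t)\bar X_n(s)\Big).
\]
By~\ref{ass:continuous-paths}, the random elements \(X_i\) lie in \(\mathcal C[0,1]\) and \(X_i\otimes X_i\) lie in \(\mathcal C([0,1]^2)\), and both sequences are i.i.d. Integrability comes from Fernique's theorem: \(\E\|X\|_\infty^2<\infty\) for a Gaussian process with continuous paths, so \(\E\|X\otimes X\|_\infty=\E\|X\|_\infty^2<\infty\). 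Mourier's SLLN then gives \(\bar X_n\to\mu\) in sup-norm and \(n^{-1}\sum_i X_i\otimes X_i\to \E[X\otimes X]=K+\mu\otimes\mu\) in sup-norm, almost surely. Subtracting the product and using \(n/(n-1)\to1\) yields \(\sup_{t,s}|\hat C_n-K|\to0\) a.s., which proves (i). I expect this step to be the main one. The alternative is a Glivenko–Cantelli argument via equicontinuity, which would need a modulus-of-continuity bound, and the Banach-space SLLN combined with Fernique avoids that.

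For (ii), write
\[
\hat K^{\text{OAS}}_{n,\tau}-K_\tau=(1-\hat\rho_{n,\tau})(\hat C_{n,\tau}-K_\tau)+\hat\rho_{n,\tau}\Big(\tfrac{\tr(\hat C_{n,\tau})}{p}I_p-K_\tau\Big).
\]
The first term is bounded by \(\sup_\tau\|\hat C_{n,\tau}-K_\tau\|\to0\). The bracket in the second term is uniformly bounded a.s.: \(\tr(\hat C_{n,\tau})\le p\sup_t\hat C_n(t,t)\), which converges to \(p\sup_t K(t,t)<\infty\), and \(K_\tau\) is bounded by continuity of \(K\) on a compact set. So it remains to show \(\sup_\tau\hat\rho_{n,\tau}\to0\) a.s. The numerator of the shrinkage formula is bounded by a constant times \(\sup\|\hat C_{n,\tau}\|^2\), which is bounded a.s. The factor \(n+1-2/p\) grows linearly. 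The remaining factor is \(\tr(\hat C^2)-\tr^2(\hat C)/p=\sum_j(\lambda_j-\bar\lambda)^2\), where the \(\lambda_j\) are the eigenvalues of \(\hat C_{n,\tau}\). This must stay bounded away from zero uniformly on \(\mathcal C_p\). The corresponding quantity for \(K_\tau\) is continuous in \(\tau\), so by compactness it is bounded below by some \(c>0\), provided it is positive at every \(\tau\), i.e.\ \(K_\tau\) is never a multiple of the identity. Uniform convergence then gives a lower bound of \(c/2\) for large \(n\).

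The delicate case, which I flag as the real obstacle for (ii), is when \(K_\tau\propto I_p\) for some \(\tau\), and this always happens for \(p=1\). In that case \(\hat K^{\text{OAS}}_{n,\tau}\) equals \(\tr(\hat C_{n,\tau})I_p/p\) when \(\hat\rho=1\). This is still consistent for \(K_\tau=cI_p\), because \(\tr(\hat C_{n,\tau})/p\to\tr(K_\tau)/p=c\). I would therefore handle it with a bound that holds for every \(\rho\in[0,1]\):
\[
\|\hat K^{\text{OAS}}_{n,\tau}-K_\tau\|\le\|\hat C_{n,\tau}-K_\tau\|+\rho\Big\|\tfrac{\tr(\hat C_{n,\tau})}{p}I_p-\hat C_{n,\tau}\Big\|.
\]
The last norm is controlled by \(\|\tfrac{\tr(K_\tau)}{p}I_p-K_\tau\|\) plus a term that vanishes uniformly. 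This gives uniform convergence on the set of \(\tau\) near which \(K_\tau\) is nearly scalar. I would combine this with the lower-bound argument on the complement to conclude. The same reasoning applies verbatim to the grid version with \(G\) in place of \(p\).
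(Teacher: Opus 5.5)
Your proposal is correct. Part (i) is the paper's own argument: Mourier's strong law in the space of continuous functions, with the second moment of the sup-norm supplied by Fernique/Landau--Shepp, followed by the max-entry norm and norm equivalence. For the fixed grid matrix in (ii), the paper also splits according to whether \(\mathbf K\) is spherical, as you do.

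The routes diverge only for the uniform-in-\(\tau\) OAS statement. Write \(\hat D_{n,\tau}=\tr(\hat C_{n,\tau})I_p/p-\hat C_{n,\tau}\) and \(d(\tau)=\|\tr(K_\tau)I_p/p-K_\tau\|_F\). You split \(\mathcal C_p\) at a threshold \(\delta\):
\begin{itemize}
\item Where \(d(\tau)\geq\delta\), the denominator of \(\hat\rho_{n,\tau}\) is eventually at least \((n+1-2/p)\delta^2/2\) uniformly, so \(\hat\rho_{n,\tau}\to0\) uniformly there.
\item Where \(d(\tau)<\delta\), your bound \(\|\hat C_{n,\tau}-K_\tau\|+\|\hat D_{n,\tau}\|\) gives error at most \(\delta+o(1)\).
\end{itemize}
This yields only \(\limsup_n\sup_{\tau}\|\hat K^{\text{OAS}}_{n,\tau}-K_\tau\|\leq\delta\) a.s., not convergence on the near-scalar set. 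You should therefore finish explicitly by letting \(\delta\downarrow0\) along a countable sequence. You should also drop the interim goal \(\sup_\tau\hat\rho_{n,\tau}\to0\), which fails whenever some \(K_\tau\) is scalar.

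The paper avoids the split by noting that \(\|\hat D_{n,\tau}\|_F^2=\tr(\hat C_{n,\tau}^2)-\tr^2(\hat C_{n,\tau})/p\) is exactly the factor in the OAS denominator. With \(\hat N_{n,\tau}=(1-2/p)\tr(\hat C_{n,\tau}^2)+\tr^2(\hat C_{n,\tau})\), the elementary inequality \(\min\{1,A/(cx^2)\}x\leq\sqrt{A/c}\) then gives \(\hat\rho_{n,\tau}\|\hat D_{n,\tau}\|_F\leq\sqrt{\hat N_{n,\tau}/(n+1-2/p)}\). Since \(\hat N_{n,\tau}\) is uniformly bounded for large \(n\), this settles the uniform statement in one line. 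It is indifferent to whether any \(K_\tau\) is spherical, and it even gives an explicit \(O(n^{-1/2})\) bound on the shrinkage contribution.

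Your argument is the natural uniform extension of the pointwise case analysis and is perfectly valid once the \(\delta\)-limit is written out. The paper's self-normalizing bound is shorter and needs no compactness-based lower bound on \(d(\tau)\).
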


This convergence holds under any \(\gp(\mu, K)\) distribution with continuous sample paths, so both estimators satisfy~\ref{ass:uniform-consistency-plug-in} simultaneously under \(\pp_0\) and \(\pp_{\mu_*}\). The following results simply state that the contraction rates for the plug-in posterior \(\hat \Pi_n(\cdot|X_{1:n})\) computed using a suitable \(\hat K_n\) do not degrade. The proofs are a refinement of the corresponding ones for the known covariance case, in which the plug-in error is identified and controlled. 

\begin{theorem}\label{th:consistency-rates-h0-plugin}
  Suppose~\ref{ass:continuous-paths}--\ref{ass:prior-p-finite} and~\ref{ass:uniform-consistency-plug-in} hold. Then, \(\hat \Pi_n(H_1|X_{1:n}) = O_{\pp_0}(n^{-p_{\min}/2})\) under \(H_0\).
\end{theorem}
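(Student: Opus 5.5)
We follow the structure of the proof of Theorem~\ref{th:consistency-rates-h0}, now with the plug-in marginal likelihood. By \eqref{eq:posterior-p-tau} and \eqref{eq:psi-posterior-spike}, the plug-in posterior probability of the alternative is
\[
\hat\Pi_n(H_1|X_{1:n}) = \frac{(1-\pi_0)\hat A_n}{\pi_0 + (1-\pi_0)\hat A_n}\le \frac{1-\pi_0}{\pi_0}\hat A_n,
\]
where \(\hat A_n=\sum_{p\le P_{\max}}\Pi(p)\int_{\mathcal C_p}\hat m_{\text{slab}}(X_{1:n}|p,\tau)\,\Pi(d\tau|p)\). Here \(\hat m_{\text{slab}}\) is \eqref{eq:m-slab} with \(\Lambda_\tau\) replaced by \(\hat\Lambda_\tau=n\hat K_{n,\tau}+\eta^{-2}I_p\). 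This formula is valid only if \(\hat\Lambda_\tau\) is positive definite; we check that below. It therefore suffices to show that \(\hat A_n=O_{\pp_0}(n^{-p_{\min}/2})\). Since \(P_{\max}<\infty\) by~\ref{ass:prior-p-finite}, we may treat each \(p\) separately and then take the worst term, \(p=p_{\min}\).

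The first step is to control the plug-in matrices uniformly in \(\tau\). The map \(\tau\mapsto K_\tau\) is continuous and \(K_\tau\) is positive definite on the compact set \(\mathcal C_p\), so \(\lambda_p:=\inf_{\tau\in\mathcal C_p}\lambda_{\min}(K_\tau)>0\) and \(\sup_\tau\|K_\tau\|<\infty\). Let \(E_n\) be the event \(\sup_{\tau\in\mathcal C_p}\|\hat K_{n,\tau}-K_\tau\|\le\lambda_p/2\), with the spectral norm; norms are equivalent, so~\ref{ass:uniform-consistency-plug-in} applies. Then \(\pp_0(E_n)\to1\). On \(E_n\), we have \(\lambda_{\min}(\hat K_{n,\tau})\ge\lambda_p/2\) for all \(\tau\). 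Hence \(\hat\Lambda_\tau\) is positive definite, the Gaussian integral is finite, and
\[
|\hat\Lambda_\tau|^{-1/2}\le (n\lambda_p/2)^{-p/2},\qquad S_n(\tau)'\hat\Lambda_\tau^{-1}S_n(\tau)\le \frac{2}{\lambda_p}\,\|n^{-1/2}S_n(\tau)\|^2 .
\]
Consequently, on \(E_n\),
\[
\hat m_{\text{slab}}(X_{1:n}|p,\tau)\le \eta^{-p}(\lambda_p/2)^{-p/2}\,n^{-p/2}\exp\Bigl\{\lambda_p^{-1}\sup_{\tau\in\mathcal C_p}\|n^{-1/2}S_n(\tau)\|^2\Bigr\}.
\]

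The second step shows that the exponential factor is \(O_{\pp_0}(1)\); this is the same ingredient as in the known-covariance proof. Under \(H_0\), \(n^{-1/2}\sum_i X_i\) has exactly the law \(\gp(0,K)\) for every \(n\). Therefore \(\sup_{\tau\in\mathcal C_p}\|n^{-1/2}S_n(\tau)\|^2\le p\,\sup_{t\in[0,1]}|Z(t)|^2\), where \(Z\) has this same fixed law. Continuity of paths~\ref{ass:continuous-paths} makes the supremum an almost surely finite random variable whose distribution does not depend on \(n\), so it is tight. Combining, for each \(p\), \(\mathds 1_{E_n}\int\hat m_{\text{slab}}\,d\Pi(\tau|p)=O_{\pp_0}(n^{-p/2})\). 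Summing over the finitely many \(p\ge p_{\min}\) gives \(\mathds 1_{E_n}\hat A_n=O_{\pp_0}(n^{-p_{\min}/2})\). Since \(\pp_0(E_n^c)\to0\), it follows that \(\hat A_n=O_{\pp_0}(n^{-p_{\min}/2})\), and the theorem follows from the first display.

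The main obstacle is the first step: the plug-in \(\hat K_{n,\tau}\) need not be positive definite, so \(\hat m_{\text{slab}}\) and the posterior may not even be well defined for finite \(n\). This is handled by working on \(E_n\). The complementary event affects stochastic boundedness only through its vanishing probability: for any \(M\), \(\pp_0(\hat A_n n^{p_{\min}/2}>M)\le\pp_0(E_n^c)+\pp_0(\mathds 1_{E_n}\hat A_n n^{p_{\min}/2}>M)\). The other essential point is that the eigenvalue lower bound must be uniform over \(\tau\in\mathcal C_p\). This is exactly where compactness~\ref{ass:compact-support-tau} and uniform consistency~\ref{ass:uniform-consistency-plug-in} are used, together with strict positive definiteness of \(K\) on distinct points.
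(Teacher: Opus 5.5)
Your proof is correct and follows the paper's argument essentially step by step: it uses the same Weyl-inequality event on which \(\lambda_{\min}(\hat K_{n,\tau})\ge\lambda_p/2\) uniformly over \(\mathcal C_p\), the same resulting bound on \(\hat m_{\text{slab}}\), and the same tightness of \(\sup_{\tau\in\mathcal C_p}\|S_n(\tau)\|^2/n\) under \(H_0\), followed by a finite sum over \(p\). Two details go slightly beyond the paper, as small refinements rather than a different route: you handle explicitly the possibly non-positive-definite \(\hat\Lambda_{n,\tau}\) off that event, and you justify tightness at the process level by noting that \(n^{-1/2}\sum_i X_i\sim\gp(0,K)\) for every \(n\).
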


\begin{theorem}\label{th:consistency-rates-h1-plugin}
  Suppose~\ref{ass:continuous-paths},~\ref{ass:positive-prior-mass} and~\ref{ass:uniform-consistency-plug-in} hold. Then, \(\hat \Pi_n(H_0|X_{1:n}) = O_{\pp_{\mu_*}}(e^{-n\gamma})\) under \(H_1\), for any \(0<\gamma<\frac{1}{2}\|\mu_*\|_K^2\). Alternatively, if~\ref{ass:continuous-paths}--\ref{ass:prior-p-finite},~\ref{ass:positive-prior-mass-alt} and~\ref{ass:uniform-consistency-plug-in} hold, then \(\hat \Pi_n(H_0|X_{1:n}) = O_{\pp_{\mu_*}}(e^{-n\gamma})\) under \(H_1\), for any \(0<\gamma<\frac{1}{2}(\|\mu_*\|_K^2 - \|\mu_* - \mu_{\theta_0}\|_K^2)\).
\end{theorem}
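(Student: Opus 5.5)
We follow the proof of Theorem~\ref{th:consistency-rates-h1} (and of Theorem~\ref{th:consistency-rates-h1-alt} for the second part), checking at each step that replacing \(K_\tau\) by \(\hat K_{n,\tau}\) in the likelihood~\eqref{eq:likelihood} only adds a negligible term to the exponent. By~\eqref{eq:psi-posterior-spike}, \(\hat\Pi_n(H_0|X_{1:n}) = \pi_0/\hat D_n\). Here \(\hat D_n = \sum_p \Pi(p)\int [\pi_0 + (1-\pi_0)\hat m_{\text{slab}}(X_{1:n}|p,\tau)]\,\Pi(d\tau|p)\), and \(\hat m_{\text{slab}}\) is the plug-in marginal likelihood. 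So it suffices to show that, with \(\pp_{\mu_*}\)-probability tending to one, \(\hat D_n \geq c\, e^{n\gamma}\) for some constant \(c>0\). We lower-bound the denominator by restricting it to the slab part and to a small neighbourhood \(U\) of a good approximating parameter. Because the slab density is not needed in closed form (and \(\hat\Lambda_\tau\) might fail to be positive definite), we work directly with the integral \(\int_U \hat L(X_{1:n}|p,\tau,b)\,\Pi^{\text{slab}}_{\tau,b}(d\tau,db|p)\).

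\textbf{Step 1 (choice of the approximation).} Fix \(\gamma<\frac12\|\mu_*\|_K^2\). For~\ref{ass:positive-prior-mass}, density of \(\Hcal_0(K)\) gives \(\theta_0=(p_0,\tau_0,b_0)\) with \(\tau_0\) having distinct points and \(\frac12(\|\mu_*\|_K^2-\|\mu_*-\mu_{\theta_0}\|_K^2) > \gamma + 2\epsilon\) for some \(\epsilon>0\). For~\ref{ass:positive-prior-mass-alt}, the \(\theta_0\) is given. Note that \(\frac12(\|\mu_*\|^2_K-\|\mu_*-\mu_{\theta_0}\|^2_K) = \langle\mu_*,\mu_{\theta_0}\rangle_K - \frac12\|\mu_{\theta_0}\|_K^2\).

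\textbf{Step 2 (LLN for the log-likelihood, uniformly near \(\theta_0\)).} Write \(n^{-1}\log \hat L = n^{-1}S_n(\tau)'b - \frac12 b'\hat K_{n,\tau}b\). Under \(\pp_{\mu_*}\), \(n^{-1}S_n(\tau)\to \mu_*(\tau)\) a.s., and uniformly in \(\tau\) by the uniform strong law on \(\mathcal C[0,1]\) under~\ref{ass:continuous-paths}, since \(\E\sup_t|X(t)|<\infty\) for Gaussian processes with continuous paths. The limit is \(h(\tau,b)= b'\mu_*(\tau) - \frac12 b'K_\tau b = \langle\mu_*,\mu_\theta\rangle_K-\frac12\|\mu_\theta\|_K^2\). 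Choose a small compact neighbourhood \(U\) of \((\tau_0,b_0)\) inside \(\mathcal C_{p_0}\times\R^{p_0}\). It should be open relative to the support, so that it has positive slab mass by~\ref{ass:positive-prior-mass} or~\ref{ass:positive-prior-mass-alt}. Continuity of \(h\) lets us take \(h>\gamma+\epsilon\) on \(U\). The plug-in error is \(\frac12|b'(\hat K_{n,\tau}-K_\tau)b| \leq \frac12\sup_U\|b\|^2 \sup_{\tau\in\mathcal C_{p_0}}\|\hat K_{n,\tau}-K_\tau\|\). This tends to zero in probability by~\ref{ass:uniform-consistency-plug-in}, since \(b\) is bounded on \(U\). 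Hence, with probability tending to one, \(n^{-1}\log\hat L\geq \gamma+\epsilon/2\) on all of \(U\).

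\textbf{Step 3 (conclusion).} On that event, \(\hat D_n \geq (1-\pi_0)\Pi(p_0)\Pi^{\text{slab}}(U|p_0)\,e^{n(\gamma+\epsilon/2)}\). Therefore \(\hat\Pi_n(H_0|X_{1:n})e^{n\gamma}\leq C e^{-n\epsilon/2}\to0\) with probability tending to one, which gives the \(O_{\pp_{\mu_*}}\) statement. The rate constraint in the alternative case is exactly \(\gamma < h(\tau_0,b_0)\).

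The main obstacle is Step 2: uniformity over \(\tau\) of both the empirical term and the plug-in term. When \(\tau\) is restricted to a neighbourhood inside \(\mathcal C_{p_0}\), this follows from the uniform LLN together with~\ref{ass:uniform-consistency-plug-in}. Under~\ref{ass:positive-prior-mass} we must also make sure the neighbourhood lies within some compact subset of \(\mathcal T_{p_0}\) for~\ref{ass:uniform-consistency-plug-in} to apply. This holds because \(\tau_0\) has distinct coordinates, so a small closed ball around it is compact in \(\mathcal T_{p_0}\).
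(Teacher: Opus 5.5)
Your proof is correct. It has the same skeleton as the paper's. You restrict \(\hat m_1\) to the slab mass of a compact neighbourhood \(U=U_1\times U_2\) of a good approximating parameter (under~\ref{ass:positive-prior-mass}, a small closed ball around a distinct, ordered \(\tau_0\)). You then lower-bound the plug-in log-likelihood uniformly on \(U\) through the limit \(h(\tau,b)=\frac{1}{2}\|\mu_*\|_K^2-\frac{1}{2}\|\mu_*-\mu_{\tau,b}\|_K^2\), which is the paper's \(f\). Finally, you control the plug-in error by \(\frac{1}{2}\sup_{U_2}\|b\|^2\sup_{\tau\in U_1}\|\hat K_{n,\tau}-K_\tau\|\) via~\ref{ass:uniform-consistency-plug-in}.

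Where you genuinely differ is in how the data term \(S_n(\tau)'b\) is handled. The paper centres it, writing \(\log\hat L=n\hat f_n(\tau,b)+\sqrt{n}Z_\tau'b\) with \(Z_\tau=(S_n(\tau)-n\mu_*(\tau))/\sqrt{n}\). It then uses the fact that \(\tau\mapsto Z_\tau\) is a continuous centred Gaussian process whose law does not depend on \(n\). Hence \(\sup_{\tau\in U_1}\|Z_\tau\|=O_{\pp_{\mu_*}}(1)\), and the stochastic error in the exponent is only \(O_{\pp}(\sqrt{n})\); this is absorbed by passing from \(nd_\epsilon-\sqrt{n}M\) to \(nd_\epsilon/(1+\epsilon)\). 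You instead invoke Mourier's strong law in \(\mathcal C[0,1]\) to get \(\sup_\tau\|n^{-1}S_n(\tau)-\mu_*(\tau)\|\to 0\) a.s. That is only an \(o_{\pp}(n)\) error, but it suffices because you fixed \(\gamma\) first and reserved a margin \(\gamma+2\epsilon<h(\tau_0,b_0)\).

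Your route has two advantages. It needs only \(\E\|X\|_\infty<\infty\), not the exact Gaussian law of the fluctuations. It also treats~\ref{ass:positive-prior-mass} and~\ref{ass:positive-prior-mass-alt} in a single pass, with no triangle-inequality bookkeeping or \((1+\epsilon)\) device. The paper's route is quantitatively sharper: uniformly on \(U\), the log-likelihood differs from \(n\hat f_n(\tau,b)\) by only \(O_{\pp_{\mu_*}}(\sqrt{n})\).

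Your caution about \(\hat\Lambda_{n,\tau}\) is also well placed. The paper's lower bound tacitly identifies the closed form of \(\hat m_{\text{slab}}\) with the Gaussian integral for \(\tau\in U_1\). That identification is legitimate on the high-probability event where \(\hat K_{n,\tau}\) is uniformly close to \(K_\tau\) on \(U_1\), since \(\hat\Lambda_{n,\tau}\) is then positive definite there.
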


\subsection{On Optimality of the Contraction Rates}\label{sec:optimality}

An aspect of the theoretical exposition that might stand out is the asymmetry of the contraction rates: polynomial under \(H_0\) and exponential under \(H_1\). Intuitively, this disparity can be traced to the spike-and-slab prior~\eqref{eq:prior-b} placed on the coefficient vector \(b\). Because the slab prior assigns nonnegligible mass arbitrarily close to the null region \(\{b=0\}\), under \(H_0\) the model struggles to rule out small nonzero mean functions, thereby slowing posterior concentration to the null. In contrast, under \(H_1\) the spike prior---a point mass at zero---is well separated from the truth, so the evidence accumulates much faster in favor of the alternative. This heuristic argument is made precise for example by \citet{johnson2010use}, who refer to such priors as ``local alternative priors''. Under some regularity conditions, they show that in this case the best achievable rate under a point-null is polynomial, whereas under the alternative the rate is typically exponential.

A natural follow-up question is whether the exponential rate under \(H_1\) can be improved, and the answer is negative. For a sequence of frequentist tests in the Neyman--Pearson sense (with bounded type I error probability) comparing a point-null against a simple alternative, the Chernoff--Stein lemma \citep[Theorem 11.8.3]{cover2006elements} states that the optimal decay rate of the type II error probability is exponential, with exponent constant equal to the Kullback--Leibler divergence between the underlying probability distributions. We show in the result below that this information-theoretic bound also applies to our Bayesian test, by thresholding the posterior to produce a frequentist test under the lemma's purview. Moreover, the Kullback--Leibler divergence in question is precisely \(\kl(\pp_{\mu_*} \,\|\, \pp_0)=\frac{1}{2}\|\mu_*\|_K^2\), so the rate established in Theorem~\ref{th:consistency-rates-h1} is arbitrarily close to the theoretical optimum.

\begin{proposition}\label{prop:optimality-rates-h1}
  There is no \(\zeta > \frac{1}{2}\|\mu_*\|_K^2\) such that \(\Pi_n(H_0|X_{1:n}) = O_{\pp_{\mu_*}}(e^{-n\zeta})\) under \(H_1\).
\end{proposition}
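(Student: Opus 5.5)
We argue by contradiction: a faster posterior rate would yield a frequentist test beating the Chernoff--Stein exponent. Suppose that for some \(\zeta > \frac{1}{2}\|\mu_*\|_K^2\) we had \(\Pi_n(H_0|X_{1:n}) = O_{\pp_{\mu_*}}(e^{-n\zeta})\). Choose \(\zeta'\) with \(\frac{1}{2}\|\mu_*\|_K^2 < \zeta' < \zeta\).

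\emph{Step 1: the threshold test.} Define \(\phi_n = \mathds{1}\{\Pi_n(H_0|X_{1:n}) \leq e^{-n\zeta'}\}\), which rejects \(H_0\) when \(\phi_n=1\). Its type II error is \(\beta_n=\pp_{\mu_*}(\Pi_n(H_0|X_{1:n}) > e^{-n\zeta'})\). The stochastic boundedness hypothesis says that for every \(\epsilon\) there is \(M\) with \(\pp_{\mu_*}(\Pi_n(H_0|X_{1:n}) > M e^{-n\zeta})<\epsilon\) for large \(n\). This alone gives only \(\beta_n\to 0\), not an exponential bound. So Step 1 only shows that \(\phi_n\) is consistent under \(H_1\), and the exponent must come from elsewhere.

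\emph{Step 2: a change-of-measure argument.} Consider the events \(A_n=\{\Pi_n(H_0|X_{1:n}) \leq e^{-n\zeta'}\}\). By Step 1, \(\pp_{\mu_*}(A_n)\to 1\). It is convenient to use the Stein-type converse directly, because it is proved in exactly this form, instead of citing the lemma as a black box. We also need \(\pp_0(A_n)\) to be exponentially small at rate faster than \(\frac12\|\mu_*\|_K^2\). Bayes' formula gives
\[
\Pi_n(H_0|X_{1:n})=\pi_0/m(X_{1:n}),\qquad m=\int L\,d\Pi=\frac{d\pp^{\Pi}}{d\pp_0},
\]
where \(\pp^{\Pi}\) is the prior predictive. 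Hence \(A_n=\{m(X_{1:n}) \ge \pi_0 e^{n\zeta'}\}\). Markov's inequality under \(\pp_0\), using \(\E_{\pp_0}[m]=1\), yields \(\pp_0(A_n)\le \pi_0^{-1}e^{-n\zeta'}\). The test \(1-\phi_n\) therefore has type I error at most \(\pi_0^{-1}e^{-n\zeta'}\) under \(\pp_0\), and type II error \(1-\pp_{\mu_*}(A_n)\to 0\) under \(\pp_{\mu_*}\). Swapping the roles of \(\pp_0\) and \(\pp_{\mu_*}\) puts this in the Chernoff--Stein setting: the null is \(\pp_{\mu_*}\) with vanishing (hence bounded) error, and the alternative is \(\pp_0\) with error decaying at exponent \(\zeta'\).

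\emph{Step 3: the contradiction.} The Chernoff--Stein lemma, specifically its converse (strong or weak) in the version for general measures, bounds the best exponent for the error under \(\pp_0\), given bounded error under \(\pp_{\mu_*}\), by \(\kl(\pp_{\mu_*}\|\pp_0)\). Using \eqref{eq:rn-derivative}, the Cameron--Martin density, and \(\E_{\pp_{\mu_*}}\langle X,\mu_*\rangle_K=\|\mu_*\|_K^2\), this divergence equals \(\frac12\|\mu_*\|_K^2\). That contradicts \(\zeta'>\frac12\|\mu_*\|_K^2\).

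If one prefers a self-contained converse, the standard route works. Write \(\Lambda_n=\log\frac{d\pp_{\mu_*}}{d\pp_0}(X_{1:n})\). Then
\[
\pp_0(A_n)\ge e^{-n(\kappa+\delta)}\,\pp_{\mu_*}\bigl(A_n\cap\{\Lambda_n\le n(\kappa+\delta)\}\bigr),
\]
where \(\kappa=\frac12\|\mu_*\|_K^2\). The law of large numbers applies, since \(\Lambda_n/n\to\kappa\) and \(\Lambda_n\) is a Gaussian sum. It shows the last probability tends to 1, giving exponent at most \(\kappa+\delta\) for every \(\delta\).

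The main obstacle is recognizing the right test orientation. The posterior rate under \(H_1\) does not directly give an exponentially small \(\pp_{\mu_*}\)-error. Instead, the exponent is transferred to the \(\pp_0\)-error via the marginal-likelihood Markov bound. Once that step is in place, the rest is routine.
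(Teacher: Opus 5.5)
Your proposal is correct and follows the paper's route. You threshold the posterior to get a region of high $\pp_{\mu_*}$-probability, and you use that the marginal likelihood has unit $\pp_0$-expectation to make the $\pp_0$-probability of that region decay faster than $e^{-n\|\mu_*\|_K^2/2}$; your Markov bound is exactly what the paper's integration over $R_n$ does. You then contradict Chernoff--Stein with the roles of $\pp_0$ and $\pp_{\mu_*}$ reversed. Taking $\zeta'<\zeta$ to absorb the constant $M$, and spelling out the converse via change of measure and the law of large numbers for the log-likelihood ratio, are minor refinements rather than a different approach; the latter makes the appeal to Chernoff--Stein self-contained in path space.
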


\section{Numerical Experiments}\label{sec:experiments}

We illustrate the finite-sample performance of the proposed Bayesian test through a series of experiments with simulated and real data. In both situations the covariance function is assumed unknown and estimated using the plug-in OAS estimator.

\subsection{Simulation Study} The simulated data comes from three different Gaussian processes (GP's):
\begin{enumerate}
  \item A Brownian motion (BM) with variance \(\sigma^2=1.5\).
  \item An Ornstein--Uhlenbeck process (OU) with variance \(\sigma^2=1.5\) and scale \(\ell=0.2\).
  \item A GP with a squared exponential kernel (SE), variance \(\sigma^2=1.5\) and scale \(\ell=0.2\).
\end{enumerate}
We draw \(n=50\) functional samples from each distribution, observed on a discretized grid of \(G=100\) equispaced points on \([0,1]\). The mean function under \(H_0\) is \(\mu_0\equiv 0\), and under \(H_1\) we consider two alternatives: a function in \(\Hcal_0(K)\) and a periodic function, given by
\[
\mu_*^{\text{RKHS}}(t) = 0.2K(0.15, t) - 0.3K(0.4, t) + 0.5K(0.75, t)\quad \text{and}\quad \mu_*^{\text{P}}(t) = 0.3\sin(2\pi t),
\]
respectively, where \(K\) is the corresponding covariance kernel in each case. We implement a reduced model with \(P_{\max}=5\) components, a uniform prior distribution \(p\sim\mathcal U\{1,2,3,4,5\}\), and a uniform prior for the time instants as well (restricted to the grid points in \(\mathcal S\)). Moreover, we assume for simplicity that the null and alternative hypotheses are a priori equally likely, so that \(\pi_0=1-\pi_0=1/2\). When choosing the crucial hyperparameter \(\eta\) one has to be wary of Lindley's paradox \citep{lindley1957statistical}. We propose to compute it from the discretized data through the formula
\[
\hat \eta_n=\eta_0 \sqrt{\frac{\tr(\hat{\mathbf K}_n^{\text{OAS}})}{G}}, \quad \eta_0>0,
\]
which measures the marginal standard deviation of the process, multiplied by a constant scale term. In this way we allow the prior on \(b\) to adapt to the scale of the data; we fix a value of \(\eta_0=1/4\) to allow detection of small signals (preliminary experiments suggest that bigger signals are easily recognized regardless of the value of \(\eta_0\)). Since \(\hat\eta_n\) converges in probability to a finite positive constant, this data-dependent choice does not affect the asymptotic properties of the test. Indeed, since \(\tr(\hat{\mathbf K}_n^{\text{OAS}})=\tr({\hat{\mathbf C}_n})\) and \(\tr(\hat{\mathbf C}_n) \to \tr({\mathbf K})\) by Proposition~\ref{prop:sample-cov-matrix-convergence}, we have \(\hat \eta_n \to \eta_\infty > 0\). We can refine the proofs in Appendix~\ref{app:proofs} by restricting the arguments to the high-probability event \(\{\eta_\infty/2 \leq \hat\eta_n \leq 2 \eta_\infty\}\), on which \(\hat\eta_n\) is bounded away from zero and infinity. Since all quantities appearing in the proofs that depend on \(\eta\) do so in a continuous manner, the constants involved in the corresponding bounds can be chosen uniformly over the compact interval \([\eta_\infty/2, 2\eta_\infty]\), and the consistency arguments carry over without further modification.

Then, we compare our method (which we call RKBT) to other one-sample testing procedures in the literature:
\begin{itemize}
  \item {BON-T}. Pointwise \(t\)-test over all grid points with Bonferroni correction for multiple comparisons. This is the most naive approach, used only as a baseline.
  \item {\(L^2\)}. Global test based on the squared \(L^2\)-norm \(\|\Delta_n\|_2^2 = n\int_0^1 \bar{X}_n^2(t)\, dt\), whose null distribution is a \(\chi^2\)-type mixture.
  \item {\(L^2\)-FPCA}. A version of the \(L^2\) test with projections through FPCA.
  \item {\(F\)}. Global test based on the regularized statistic \(F_n=\|\Delta_n\|_2^2/\tr(\hat{K}_n)\), which follows an \(F\)-type mixture distribution under \(H_0\).
  \item {BAYES-FPCA}. Standard Bayesian spike-and-slab test with dimension reduction via FPCA and a conjugate \(g\)-prior under \(H_1\). 
\end{itemize}
The frequentist competitors are primarily those outlined by \citet{zhang2013analysis}, whereas the Bayesian competitor arguably represents a novel---albeit simple---approach. A detailed breakdown of the comparison methods is provided in Appendix~\ref{app:experiments}.

Performance is measured through several frequentist metrics, repeating the whole procedure 100 times with different random seeds. We compare the rejection rate under \(H_0\) (type I error) and under \(H_1\) (power), as well as the AUC metric \citep{krzanowski2009roc}, which is a number in \([0,1]\) that measures the overall ability to correctly distinguish between positive and negative cases across all possible decision thresholds. The frequentist methods are calibrated to the nominal size of \(\alpha=0.05\), and the Bayesian methods are thresholded through the estimated Bayes factor to produce a binary accept/reject decision. Specifically, we use the standard rejection region provided by Jeffreys' scale (see Table~\ref{tab:jeffreys-scale}), which is given by \(\{\hat{\bff}_{10}>3\}\) and corresponds to \(\{\hat\pi_{\text{null}}<1/4\}\) with our prior assumptions. 

The results of the simulation study are presented in Table~\ref{tab:metrics-rkhs} and Table~\ref{tab:metrics-p}. We can see that our RKBT method has great control of the type I error, while also achieving decent power and AUC in most cases, despite having only a relatively low number of components to work with. Moreover, the comparison methods that obtain a higher power generally do so at the expense of a greater type I error, which may be undesirable. Note that we could increase the power by lowering the rejection threshold, which is fairly conservative, to a point where we still maintain an acceptable type I error (i.e.\ below 5\%). The execution time of our test, although not as fast as the competitors due to the computationally intensive RJMCMC sampling, is still adequate for practical use (see Appendix~\ref{app:experiments} for an explicit comparison).

\begin{table}[t]
  \small
  \centering
  \setlength{\tabcolsep}{7pt}
  \renewcommand{\arraystretch}{1.05}
  \begin{tabular}{@{}l rrr c rrr c rrr @{}}
  \midrule
  \multirow{2.5}{*}[0.5pt]{Method} & \multicolumn{3}{c}{BM} && \multicolumn{3}{c}{OU} && \multicolumn{3}{c@{}}{SE} \\
  \cmidrule(lr){2-4}
  \cmidrule(lr){6-8}
  \cmidrule(ll){10-12}
  & \multicolumn{1}{c}{T1} & \multicolumn{1}{c}{P} & \multicolumn{1}{c}{AUC} && \multicolumn{1}{c}{T1} & \multicolumn{1}{c}{P} & \multicolumn{1}{c}{AUC} && \multicolumn{1}{c}{T1} & \multicolumn{1}{c}{P} & \multicolumn{1}{c@{}}{AUC} \\
  \midrule
  RKBT & \textemph{0} & 67 & 90.9 && \textemph{0} & \textemph{88} & \textemph{98.5} && \textemph{0} & 84 & 98.4 \\
  \midrule
  BON-T & \textemph{0} & 35 & 81.4 && \textemph{0} & 71 & 97.2 && \textemph{0} & 67 & 92.9 \\
  \(L^2\) & 1 & \textemph{71} & \textemph{92.6} && 1 & 77 & 98 && 1 & 82 & 98.5 \\
  \(L^2\)-FPCA & 1 & 61 & 90.8 && 10 & 87 & 96.1 && 5 & 89 & 98.3 \\
  \(F\) & 1 & \textemph{71} & \textemph{92.6} && 1 & 77 & 98.2 && 1 & 81 & \textemph{98.7} \\
  BAYES-FPCA & 1 & 65 & 90.8 && 6 & 86 & 95.9 && 9 & \textemph{91} & 98.3 \\
  \midrule
  \end{tabular}
  \caption{Type I error (T1) under \(H_0\), power (P) under the true mean function \(\mu_*^{\text{RKHS}}\), and corresponding AUC, averaged over 100 independent replications for the three synthetic data sets (BM, OU, SE). All metrics are reported as percentages (\(\times 10^2\)), and the best results within each column are highlighted.}\label{tab:metrics-rkhs}
\end{table}

\begin{table}[t]
  \small
  \centering
  \setlength{\tabcolsep}{7pt}
  \renewcommand{\arraystretch}{1.05}
  \begin{tabular}{@{}l rrr c rrr c rrr @{}}
  \midrule
  \multirow{2.5}{*}[0.5pt]{Method} & \multicolumn{3}{c}{BM} && \multicolumn{3}{c}{OU} && \multicolumn{3}{c@{}}{SE} \\
  \cmidrule(lr){2-4}
  \cmidrule(lr){6-8}
  \cmidrule(ll){10-12}
  & \multicolumn{1}{c}{T1} & \multicolumn{1}{c}{P} & \multicolumn{1}{c}{AUC} && \multicolumn{1}{c}{T1} & \multicolumn{1}{c}{P} & \multicolumn{1}{c}{AUC} && \multicolumn{1}{c}{T1} & \multicolumn{1}{c}{P} & \multicolumn{1}{c@{}}{AUC} \\
  \midrule
  RKBT & \textemph{0} & 99 & \textemph{100} && \textemph{0} & 52 & 89 && \textemph{0} & 39 & \textemph{85.3} \\
  \midrule
  BON-T & \textemph{0} & 71 & 98.9 && \textemph{0} & 31 & 82.2 && \textemph{0} & 10 & 70.9 \\
  \(L^2\) & 1 & 58 & 97.4 && 1 & 52 & \textemph{90.2} && 1 & 43 & 84.6 \\
  \(L^2\)-FPCA & 1 & \textemph{100} & \textemph{100} && 10 & \textemph{57} & 83.3 && 5 & 47 & 83.1 \\
  \(F\) & 1 & 57 & 97.4 && 1 & 53 & \textemph{90.2} && 1 & 41 & 84.7 \\
  BAYES-FPCA & 1 & \textemph{100} & \textemph{100} && 6 & 53 & 83.2 && 9& \textemph{63} & 83.6 \\
  \midrule
  \end{tabular}
  \caption{Type I error (T1) under \(H_0\), power (P) under the true mean function \(\mu_*^{\text{P}}\), and corresponding AUC, averaged over 100 independent replications for the three synthetic data sets (BM, OU, SE). All metrics are reported as percentages (\(\times 10^2\)), and the best results within each column are highlighted.}\label{tab:metrics-p}
\end{table}

\subsection{Real Data Application}

The metrics in the simulation study are frequentist in nature, but the strength of our Bayesian test lies in the interpretability and the ability to quantify uncertainty. To analyze this aspect, we present a brief case study with real climate data. We consider the average monthly temperature recorded by 24 weather stations in Australia from 1990 to 2019, grouping the curves into three decades: the 1990s (1990--1999), the 2000s (2000--2009), and the 2010s (2010--2019).\footnote{The data is available through the \textsc{Meteostat} Python library~\citep{lamprecht2026meteostat}.} For each station and each decade, we compute a 12-point temperature curve by averaging temperatures within each month across all years in that decade. We then form pairwise differences between the decade-specific curves and apply our test to assess the null hypothesis of no difference in the mean temperature. We use the same prior distributions as in the simulations, but we allow \(P_{\max}=12\) and set \(\pi_0=1/3\), to convey the widespread belief that the temperature is actually increasing globally.

The Bayesian metrics obtained are summarized in Table~\ref{tab:metrics-climate}. Using the same rejection region as in the simulation study, namely \(\{\hat{\bff}_{10}>3\}\), our test decisively rejects \(H_0\) for the second and third period comparisons. This is consistent with the general observation that the climate is getting warmer in recent times. We can also see, for example, that the model gives very low posterior probability to differences of less than \(0.1^\circ\mathrm{C}\) from the 2010s to the 1990s, signalling that the discrepancy is bigger in that case. Note that around five components (months) seem to be sufficient to model the temperature differences, supporting our parsimony assumptions.

\begin{table}[t]
  \centering
  \setlength{\tabcolsep}{7pt}
  \renewcommand{\arraystretch}{1.05}
  \begin{tabular}{@{}lrrrr@{}}
  \midrule
  Period comparison & \multicolumn{1}{c}{\(\hat\pi_{\text{null}}\)} & \multicolumn{1}{c}{\(\hat{\bff}_{10}\)} & \multicolumn{1}{c}{\(\hat{\E}[p|X_{1:n}]\)} & \multicolumn{1}{c@{}}{\(\hat{\pi}_{\text{ROPE}}(\epsilon=0.1^\circ\mathrm{C})\)} \\
  \midrule
  2000s-1990s & 0.3302 & 1.0142 & 4.6807 & 0.4210 \\
  2010s-2000s & 0.0635 & 7.3709 & 5.2309 & 0.2120 \\
  2010s-1990s & 0.0784 & 5.8815 & 4.8832 & 0.0980 \\
  \midrule
  \end{tabular}
  \caption{Estimated posterior metrics given by our RKBT test for the three pairwise-difference temperature data sets: the posterior probability of the null, the Bayes factor, the posterior mean of the number of components in the model, and the ROPE probability \(\Pi_n(\sup_s |\mu(s)| < 0.1|X_{1:n})\).}\label{tab:metrics-climate}
\end{table}

Lastly, Figure~\ref{fig:climate_posterior} provides a useful functional interpretation of the posterior evidence reported in Table~\ref{tab:metrics-climate}, illustrating the posterior mean function \(\hat\mu_n^{\text{PM}}\) and the 95\% simultaneous bands described in Section~\ref{sec:posterior-summaries}. The estimated mean difference is relatively small and close to zero for the 2000s--1990s comparison, whereas the comparisons involving the 2010s display a markedly positive posterior mean over much of the year, with the largest discrepancy occurring between the 2010s and 1990s. Observe that the variability in the credible bands is generally more pronounced in the warmer months (September--March in the Southern Hemisphere), indicating that the uncertainty is higher precisely where the temperature differences are largest.

\begin{figure}[t]
  \centering
  \includegraphics[width=.99\textwidth]{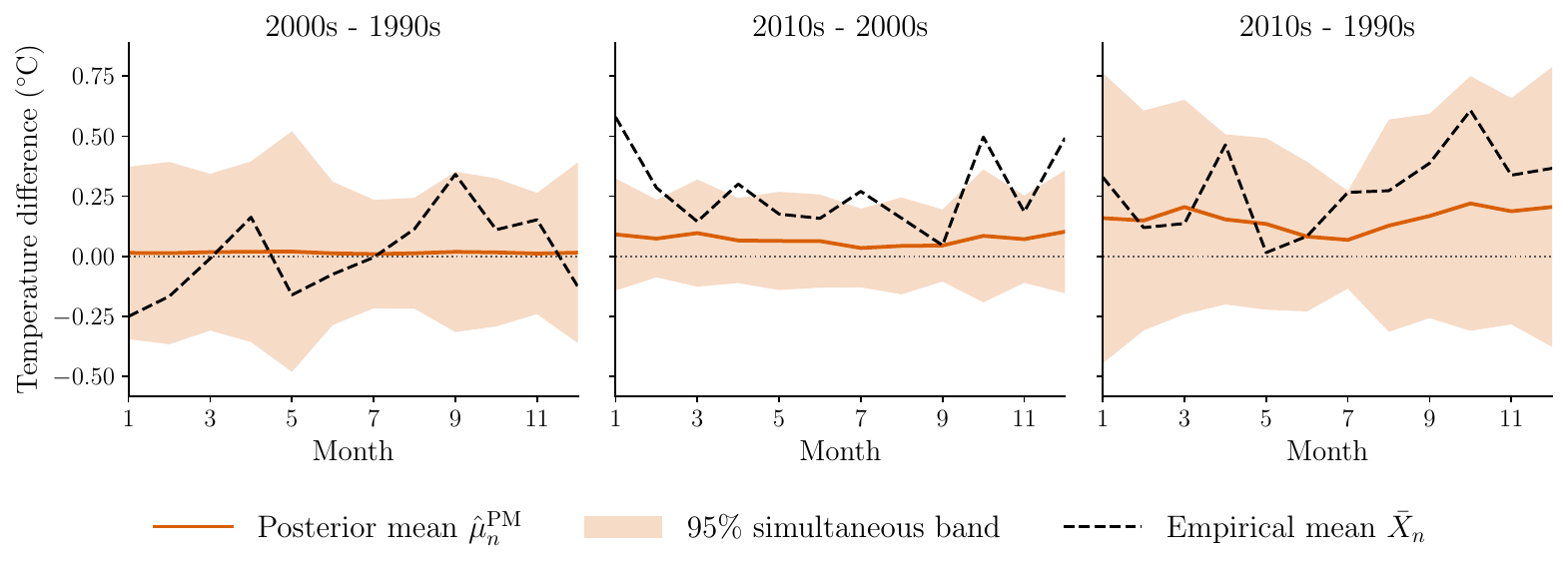}
  \caption{Estimated posterior mean and 95\% simultaneous credible bands given by our RKBT test for the three pairwise-difference temperature data sets. The empirical mean is also shown for reference.}\label{fig:climate_posterior}
\end{figure}

\section{Discussion}\label{sec:conclusions}

In this work we have described a Bayesian test for the mean \(\mu\) of a Gaussian process, observed through i.i.d.\ functional samples. The proposed model hinges on a finite-dimensional representation enabled by the reasonable assumption that \(\mu\in\Hcal(K)\), where \(\Hcal(K)\) is the RKHS associated with the covariance function \(K\) of the process. The case \(\mu\notin \Hcal(K)\), though not treated here, is also interesting in its own right. When the true mean function \(\mu_*\) lies outside \(\Hcal(K)\), by the Feldman--Hájek dichotomy the measures \(\pp_{\mu_*}\) and \(\pp_0\) are mutually singular, so there exists a Borel set \(S\) such that \(\pp_{\mu_*}(S)=1\) and \(\pp_0(S)=0\). In principle, we could define the test \(\phi(X)=\mathds{1}\{X\in S\}\), which has both type I and type II errors equal to zero, and can separate \(\pp_{\mu_*}\) and \(\pp_0\) perfectly with a single full observation of the process. This is an example of a ``near-perfect'' testing situation, in which the functional nature of the data allows exact verification of the null hypothesis in theory. In a discretized setting this effect might not be detectable with a finite grid, but it can manifest as the grid becomes dense \citep{torrecilla2020optimal}. Other instances of this phenomenon applied to functional data classification are studied, for example, by \citet{delaigle2012achieving} or \citet{berrendero2018use}. 

The Bayesian formulation of the test is flexible enough to accommodate a range of priors, fixing only the so-called ``local'' spike-and-slab prior on the coefficient vector \(b\), which is a common choice in multivariate settings. Other authors \citep[e.g.][]{johnson2010use} have objected to local alternative priors on a philosophical level, proposing different nonlocal alternatives. However, adopting them in our context would render the closed-form marginalization of \(b\) invalid, which is central to our computational strategy. Our approach connects the functional testing problem to a variable-dimension model in a principled way, and is theoretically backed by near-optimal posterior consistency results. These findings confirm that the Bayes factor-based test is asymptotically reliable under both the null and the alternative, and they provide a more refined description of how quickly evidence accumulates in favor of the correct hypothesis. This analysis is complemented by numerical studies with a tractable computational implementation, showing that our test is on a par with other methods in the literature. Lastly, we have also illustrated how the Bayesian aspect of the model can be used to quantify the uncertainty in the posterior distribution and interpret the results of the test, since we have access to a continuous posterior probability that measures the strength of evidence against the null.

We believe that this Bayesian framework opens several compelling paths for future research, demonstrating the practical and theoretical utility of RKHS representations in managing the inherent infinite-dimensionality of functional data. In particular, exploring the role of impact points within the RKHS approximation could yield meaningful new methodologies for change-point detection or variable selection in functional estimation. Furthermore, to our knowledge, the trans-dimensional aspect of the posterior approximation through RJMCMC---together with previous work by \citet{berrendero2025bayesian}---is a novel contribution in functional data.


\acks{The authors would like to acknowledge the valuable computational resources provided by the Centro de Computación Científica--Universidad Autónoma de Madrid (CCC--UAM). 
This research was partially supported by grants PID2023-148081NB-I00 and PRE2020-095147 of the Spanish Ministry of Science and Innovation (MCIN), co-financed by the European Social Fund (ESF). The authors also acknowledge financial support from grant CEX2023-001347-S funded by MICIU/AEI/10.13039/501100011033.}

\appendix
\section{RJMCMC Implementation}\label{app:rjmcmc}

The RJMCMC algorithm is used to jointly model the time instants \(\tau_p\) and the dimension \(p\) itself, constructing a Markov chain whose stationary distribution is the intended posterior distribution \(\Pi_n(p,\tau|X_{1:n})\) in~\eqref{eq:posterior-p-tau}. To generate new samples, the method considers \textit{within-model} moves (perturbing the current values of the time instants) and \textit{between-model} moves (increasing or decreasing the dimension). The former are implemented through minor modifications of the standard Metropolis--Hastings algorithm \citep{brooks2011handbook}, while the latter are generated by a reversible jump scheme \citep{richardson1997bayesian}. In our simplified setting, the acceptance probability of adding a new component \(t^*\) is given by
\[
\alpha_{\text{birth}} = \min\left\{1,\, \frac{\Pi(p+1) \Pi(\tau \cup \{t^*\}|p+1) m(X_{1:n}|p+1, \tau \cup \{t^*\})}{\Pi(p) \Pi(\tau|p) m(X_{1:n}|p, \tau)} \cdot \frac{r_{\text{delete}}(p)}{r_{\text{select}}(p)\cdot g(t^*)}\right\},
\]
where \(r_{\text{select}}(p)=1/(P_{\max} - p)\) is the probability of selecting a particular inactive component to become active, \(r_{\text{delete}}(p)=1/(p+1)\) is the reverse probability of deleting an existing component at random, and \(g(t^*)=1\) is the proposal density (uniform in our case). The acceptance probability of removing an existing component is computed similarly to maintain detailed balance, inverting the fractions in the expression above.

The RJMCMC sampling procedure is implemented in Python with the help of the \textsc{Eryn} library \citep{karnesis2023eryn}, which offers a flexible framework for RJMCMC based on ensembles of walkers (chains). Every walker in the ensemble maintains a set of up to \(P_{\max}\) active components, each storing one coordinate \(t_j\in\mathcal S\). For the experimentation in Section~\ref{sec:experiments} we deployed 64 walkers and ran each one for 2500 steps, discarding the first 1500 samples as burn-in (though convergence was typically achieved in fewer steps). Additional hyperparameters for the sampler were selected based on recommendations from the authors and a small set of trial runs. A more thorough description of the \textsc{Eryn} implementation and its internal mechanics is provided by \citet{berrendero2025bayesian}, along with an application to a similar FDA problem.

Regarding worst-case computational complexity, the RJMCMC procedure evaluates the slab marginal likelihood~\eqref{eq:m-slab} at each of the \(W\) walkers and \(R\) iterations. For a model with \(p\leq P_{\max}\) active components, this requires a Cholesky factorization of a \(p\times p\) matrix, which can be done in \(O(P_{\max}^3)\). Thus, including the initial \(O(nG^2)\) cost of the empirical covariance estimator (where \(G\) is the grid size), the overall worst-case time complexity of the RJMCMC sampling is \(O(WRP_{\max}^3+nG^2)\). Keeping a low \(P_{\max}\) and a sensible grid size allows the algorithm to run in reasonable time.

It is worth mentioning that in typical mixture-model settings, \textit{label switching} is a well-known identifiability issue: permuting the component labels \((t_j, \beta_j)\) yields the same likelihood, causing a multimodality in the posterior that can degrade mixing and complicate post-processing. In our case, however, label switching is not a practical concern for two reasons. First, as noted in Section~\ref{sec:rkhs-test}, we impose an ordering on the time instants, removing the symmetry and identifying a unique labeling. Second, and more importantly, most posterior quantities of interest are identifiable in the sense that they depend on the mean function parametrization \(\mu_\theta(\cdot) = \sum_{j} \beta_j K(t_j, \cdot)\), which is invariant under relabeling, so the functionals we report would be unchanged even if we did not impose an ordering.

\subsection{Recovering Posterior Samples for the Coefficients}

Once the RJMCMC sampling for \((p, \tau)\) is complete, we can generate posterior draws of \(b\) by applying Bayes' formula analytically (conditional on \(p\) and \(\tau\)). The corresponding posterior expression is
\[
\Pi_n(b|p,\tau, X_{1:n}) = \psi_n(p, \tau)\delta_0(b) + (1-\psi_n(p, \tau))\Pi_n^{\text{slab}}(b|p,\tau, X_{1:n}),
\]
where \(\psi_n(p, \tau)\) is the conditional posterior probability of the spike event, defined in~\eqref{eq:psi-posterior-spike}. For the slab part, we have \(\Pi_n^{\text{slab}}(b|p,\tau, X_{1:n}) \propto \Pi^{\text{slab}}(b)L(X_{1:n}|p, \tau, b)\), so that
\begin{align*}
\Pi_n^{\text{slab}}(b|p,\tau, X_{1:n}) &\propto \mathcal N_p(b|0, \eta^2 I_p) \exp\left\{S_n(\tau)'b - \frac{n}{2}b' K_{\tau} b\right\}\\
&\propto \exp\left\{ -\frac{1}{2}(b - \Lambda_\tau^{-1}S_n(\tau))'\Lambda_\tau(b -  \Lambda_\tau^{-1}S_n(\tau)) \right\}.
\end{align*}
Hence, \(b|p,\tau, X_{1:n}, \text{slab} \sim \mathcal N_p(\Lambda_\tau^{-1}S_n(\tau), \Lambda_\tau^{-1})\). Thus, after a full RJMCMC run with \(R\) posterior samples \((p^{(r)}, \tau^{(r)})\), we can retrieve the corresponding posterior samples \(b^{(r)}\) through the following simple procedure:
\begin{enumerate}
  \item Draw a spike or slab event \(Z^{(r)} \sim \text{Bernoulli}(\psi_n(p^{(r)}, \tau^{(r)}))\).
  \item If \(Z^{(r)}=1\), set \(b^{(r)}=0\). If \(Z^{(r)}=0\), sample \(b^{(r)}\) from \(\mathcal N_{p^{(r)}}(\Lambda_{\tau^{(r)}}^{-1}S_n(\tau^{(r)}), \Lambda_{\tau^{(r)}}^{-1})\).
\end{enumerate}

\section{Proofs}\label{app:proofs}

Here we present the proofs of the theoretical results in Section~\ref{sec:consistency}. Keep in mind that the covariance function \(K\) is strictly positive definite. Moreover, because the data are Gaussian, under~\ref{ass:continuous-paths} both \(\mu\) and \(K\) are continuous functions.\footnote{Gaussianity is not essential for the continuity of \(\mu\) and \(K\). Under~\ref{ass:continuous-paths}, finiteness of \(\E[\sup_t X(t)^2]\) suffices, which indeed holds automatically for Gaussian data \citep{landau1970supremum}.} To see this, let \(t_n\to t\) and \(u_n \to u\) in \([0,1]\). Since the sample paths are continuous a.s., \((X(t_n), X(u_n))\to (X(t), X(u))\) in distribution. By the Portmanteau lemma, the corresponding characteristic functions converge pointwise, and hence \(\mu(t_n)\to \mu(t)\) and \(K(t_n, u_n) \to K(t,u)\), because a Gaussian process is completely determined by its mean and covariance functions.

Using Bayes' rule~\eqref{eq:posterior-p-tau} for the posterior of \((p,\tau)\) and writing the normalization constant explicitly, the posterior probability of the alternative is given by
\begin{equation}\label{eq:posterior-slab}
\Pi_n(H_1|X_{1:n})= \frac{(1-\pi_0)m_1(X_{1:n})}{\pi_0 + (1-\pi_0)m_1(X_{1:n})},
\end{equation}
where
\begin{equation}\label{eq:posterior-slab-m1}
m_1(X_{1:n}) = \sum_{p\geq 1} \Pi(p)\int_{\mathcal T_p} \mslab(X_{1:n}|p,\tau)\Pi(\tau|p) \, d\tau.
\end{equation}

\subsection{Consistency and Contraction Rates for Known Covariance}

For notational convenience, define
\[
I_{n,p}(U) = \int_U \mslab(X_{1:n}|p, \tau)\Pi(\tau|p)\, d\tau, \quad \text{for every } U\subseteq \mathcal T_p.
\]

\subsubsection{Proof of Theorem~\ref{th:consistency-rates-h0}}

\begin{proof}
  Under~\ref{ass:compact-support-tau} and~\ref{ass:prior-p-finite}, the sum \(m_1(X_{1:n})\) simplifies to
  \[
  m_1(X_{1:n}) = \sum_{p=p_{\min}}^{P_{\max}} \Pi(p) I_{n,p}(\mathcal C_p).
  \]

  Fix \(p_{\min} \leq p\leq P_{\max}\) and let \(\epsilon>0\). Note that the minimum eigenvalue function \(\tau\mapsto\lambda_{\min}(K_\tau)\) is continuous by Weyl's inequality \citep[Theorem~VI.2.1]{bhatia1996matrix} and the continuity of \(K\). Because \(\mathcal C_p\) is compact and \(K_\tau\) is strictly positive definite, there exists \(\lambda_p>0\) with
  \[
  \lambda_{\min}(K_\tau)\ge \lambda_p>0\quad\text{for all }\tau\in \mathcal C_p.
  \]
  Since \(\Lambda_\tau = nK_\tau + \eta^{-2}I_p\), this implies that \(\lambda_{\min}(\Lambda_\tau) = n\lambda_{\min}(K_\tau) + \eta^{-2} \geq n\lambda_p>0\) for all \(\tau \in \mathcal C_p\). Therefore, the following holds for all \(\tau\in \mathcal C_p\):
  \begin{enumerate}
    \item[(i)] \(|\Lambda_\tau| \geq (n\lambda_p)^p\).
    \item[(ii)] \(v'\Lambda_{\tau}^{-1}v \leq \frac{1}{n\lambda_p}\|v\|^2\) for any \(v\neq 0\).
  \end{enumerate}
  Only the second assertion deserves a comment: the Rayleigh quotient of a symmetric matrix is bounded above by its maximum eigenvalue \citep[Theorem 4.2.2]{horn1985matrix}, and we have \(\lambda_{\max}(\Lambda_\tau^{-1})=(\lambda_{\min}(\Lambda_\tau))^{-1}\). Combining (i), (ii) and~\eqref{eq:m-slab} yields the upper bound
  \begin{equation}\label{eq:upper-bound-m-slab}
  \mslab(X_{1:n}|p,\tau) \leq \eta^{-p}(n\lambda_p)^{-p/2} \exp\left\{\frac{1}{2n\lambda_p}\|S_n(\tau)\|^2\right\}, \quad \text{for all } \tau \in \mathcal C_p.
  \end{equation}

  Now define \(Z_\tau = S_n(\tau)/\sqrt{n}\) for each \(\tau\), and note that \(Z_\tau \sim \mathcal N_p(0, K_\tau)\) under \(H_0\), so its distribution is independent of \(n\). Since
  \[
  \sup_{\tau \in \mathcal C_p} \frac{\|S_n(\tau)\|^2}{n} = \sup_{\tau \in \mathcal C_p} \|Z_\tau\|^2,
  \]
  we consider for any \(M>0\) the \(\pp_0\)-measurable event
  \begin{equation}\label{eq:event-E-n-M}
  E_{M} = \left\{\sup_{\tau\in \mathcal C_p} \|Z_\tau\|^2\leq M\right\}.
  \end{equation}
  The map \(\tau \mapsto Z_\tau(\omega)\) is continuous for almost every \(\omega\) by~\ref{ass:continuous-paths}, so on the compact set \(\mathcal C_p\) and the points of continuity in \(\omega\), the continuous function \(\tau \mapsto \|Z_\tau(\omega)\|^2\) attains its supremum. Hence, \(\sup_{\tau \in \mathcal C_p} \|Z_\tau\|^2 < \infty\) almost surely, which trivially implies that we can choose \(M\) large enough so that the complement \(E_{M}^c\) is negligible in probability, say \(\pp_0(E_{M}^c) < \epsilon\).

  On the other hand, on \(E_{M}\) the upper bound in~\eqref{eq:upper-bound-m-slab} reduces to the deterministic bound
  \begin{equation}\label{eq:deterministic-bound-sup-mslab}
  \sup_{\tau \in \mathcal C_p} \mslab(X_{1:n}|p,\tau) \leq \eta^{-p}(n\lambda_p)^{-p/2}\exp\left\{\frac{M}{2\lambda_p}\right\}.
  \end{equation}
  Since \(\Pi_\tau(\mathcal C_p|p)= 1\), by~\eqref{eq:deterministic-bound-sup-mslab} we have (still on the event \(E_{M}\))
  \[
  I_{n,p}(\mathcal C_p)\leq \sup_{\tau\in\mathcal C_p} \mslab(X_{1:n}|p,\tau) \int_{\mathcal C_p} \Pi(\tau|p)\, d\tau \leq Dn^{-p/2},
  \]
  where \(D>0\) is a constant independent of \(n\). Hence, for sufficiently large \(n\) we have
  \[
  \pp_0(I_{n,p}(\mathcal C_p) > Dn^{-p/2}) \leq \pp_0(E_{M}^c ) < \epsilon,
  \]
  so \(I_{n,p}(\mathcal C_p)=O_{\pp_0}(n^{-p/2})\). Write \(I_{n,p}(\mathcal C_p)=n^{-p/2}Y_{n,p}\), where \(Y_{n,p}=O_{\pp_0}(1)\) by definition. Then, because the dominant term in the finite sum \(m_1(X_{1:n})\) is the first term, we have
  \[
  m_1(X_{1:n}) = \sum_{p=p_{\min}}^{P_{\max}} \Pi(p) n^{-p/2}Y_{n,p} = O_{\pp_0}(n^{-p_{\min}/2}).
  \]
  Lastly, note that
  \[
  \Pi_n(H_1|X_{1:n}) \leq \frac{1-\pi_0}{\pi_0}m_1(X_{1:n}),
  \]
  so by the continuous mapping theorem the stochastic bound translates immediately and \(\Pi_n(H_1|X_{1:n})=O_{\pp_0}(n^{-p_{\min}/2})\), as claimed.
\end{proof}

\subsubsection{Proof of Theorem~\ref{th:consistency-rates-h1}}

\begin{proof}
  The RKHS framework will allow us to work, via a suitable approximation, in a finite-dimensional setting. First, since \(\Hcal_0(K)\) is dense in \(\Hcal(K)\), for any \(\epsilon>0\) there exists a parameter \(\theta_\epsilon = (p_\epsilon, \tau_\epsilon, b_\epsilon)\) such that
  \begin{equation}\label{eq:approximation-property-mu}
  \|\mu_* - \mu_{\tau_\epsilon, b_\epsilon}\|_K < \epsilon,
  \end{equation}
  where \(\mu_{\tau_\epsilon, b_\epsilon}\) is shorthand for \(\mu_{\theta_\epsilon}\). Since \(\mu_* \neq 0\), choosing \(\epsilon\) small enough (e.g.\ \(\epsilon\leq\|\mu_*\|_K\)) guarantees that \(\mu_{\tau_\epsilon, b_\epsilon}\neq 0\), and hence \(b_\epsilon \neq 0\). By~\ref{ass:positive-prior-mass}, \(\Pi(p_\epsilon)>0\) and there exists a neighborhood \(U =U_1 \times U_2\) of \((\tau_\epsilon, b_\epsilon)\) within \(\mathcal T_{p_\epsilon}\times \R^{p_\epsilon}\) with positive prior probability. We can (and do) take \(U_1\) and \(U_2\) compact, and arrange \(0\notin U_2\). 
  
  Let \((\tau, b)\in U\) and define \(\mu_*(\tau) = (\mu_*(t_1), \dots, \mu_*(t_{p_\epsilon}))'\) and \(Z_\tau = (S_n(\tau) - n\mu_*(\tau))/\sqrt{n}\). Note that \(Z_\tau\sim \mathcal N_{p_\epsilon}(0, K_\tau)\) under \(H_1\), so its distribution is independent of \(n\). Substituting in~\eqref{eq:likelihood} yields
  \begin{equation}\label{eq:log-likelihood-H1}
  \log L(X_{1:n}|p_\epsilon, \tau, b) = n\mu_*(\tau)'b - \frac{n}{2}b'K_\tau b + \sqrt{n}Z_\tau'b.
  \end{equation}
  On the one hand, for \((\tau, b)\in U\) define the function
  \[
  f(\tau, b) = \mu_*(\tau)'b - \frac{1}{2}b'K_\tau b = \mu_*(\tau)'b - \frac{1}{2}\|\mu_{\tau,b}\|^2_K,
  \]
  and observe that by the reproducing property and the linearity of the inner product we have
  \[
  \dotprod{\mu_*}{\mu_{\tau, b}}_K = \dotprod{\mu_*}{\ \sum_{j=1}^{p_\epsilon} \beta_j K(t_j, \cdot)}_K = \sum_{j=1}^{p_\epsilon} \beta_j \dotprod{\mu_*}{K(t_j, \cdot)}_K = \mu_*(\tau)'b.
  \]
  Thus, using the familiar identity \(\|f-g\|^2 = \|f\|^2 + \|g\|^2 - 2\dotprod{f}{g}\), we can write
  \[
  f(\tau, b) = \dotprod{\mu_*}{\mu_{\tau, b}}_K - \frac{1}{2}\|\mu_{\tau,b}\|^2_K = \frac{1}{2}\|\mu_*\|^2_K - \frac{1}{2}\|\mu_* - \mu_{\tau, b}\|^2_K.
  \]
  Moreover, by the triangle inequality we have
  \[
  \|\mu_* - \mu_{\tau, b}\|_K \leq \|\mu_* - \mu_{\tau_\epsilon, b_\epsilon}\|_K + \|\mu_{\tau_\epsilon, b_\epsilon} - \mu_{\tau,b}\|_K.
  \]
  Then, by~\eqref{eq:approximation-property-mu} and the continuity of the map \((\tau, b) \mapsto \mu_{\tau, b}\) in the RKHS norm (which follows from the continuity of \(K\)), after shrinking \(U\) if necessary we have \(\|\mu_* - \mu_{\tau,b}\|_K < 2\epsilon\) for all \((\tau,b)\in U\). Consequently,
  \[
  f(\tau, b) \geq \frac{1}{2}\|\mu_*\|^2_K - 2\epsilon^2 =: c_\epsilon > 0,
  \]
  where positivity holds by choosing \(\epsilon\) small enough. On the other hand, note that by the Cauchy--Schwarz inequality
  \[
  \sqrt{n} Z_\tau'b \geq \sqrt{n}\cdot \inf_{b\in U_2} Z_\tau'b \geq -\sqrt{n}\cdot \sup_{b\in U_2} \|b\|\cdot \sup_{\tau \in U_1}\|Z_\tau\|.
  \]

  All in all, we see from~\eqref{eq:log-likelihood-H1} that for all \((\tau, b)\in U\) we obtain a uniform lower bound for the log-likelihood given by
  \[
  \log L(X_{1:n}|p_\epsilon, \tau, b) = nf(\tau, b) +  \sqrt{n}Z_\tau'b \geq nc_\epsilon - \sqrt{n}B\cdot \sup_{\tau\in U_1}\|Z_\tau\|,
  \]
  where \(B=\sup_{b\in U_2} \|b\|\) satisfies \(0 < B < \infty\), since \(U_2\) is a compact set with \(0\notin U_2\). Hence, by~\eqref{eq:posterior-slab-m1} and~\eqref{eq:m-slab}, we have
  \begin{equation}\label{eq:lower-bound-m1}
    \begin{split}
    m_1(X_{1:n}) &\geq \Pi(p_\epsilon) \int_{U_1} \int_{U_2} \mathcal N_{p_\epsilon}(b|0, \eta^2 I_{p_\epsilon}) \exp\left\{nf(\tau, b) +  \sqrt{n}Z_\tau'b\right\}\Pi(\tau|p_{\epsilon})\, db\, d\tau \\
    &\geq D \exp\left\{nc_\epsilon - \sqrt{n}B\cdot \sup_{\tau\in U_1}\|Z_\tau\|\right\},
    \end{split}
  \end{equation}
  where \(D=\Pi(p_\epsilon)\Pi_\tau(U_1|p_{\epsilon})\Pi_b^{\text{slab}}(U_2|p_{\epsilon})>0\) by assumption. Next, note that \(\sup_{\tau\in U_1}\|Z_\tau\|\) is \(O_{\pp_{\mu_*}}(1)\) (by the same argument as in the proof of Theorem~\ref{th:consistency-rates-h0}), so for all \(\delta>0\) there exists \(M>0\) such that \(\pp_{\mu_*}(B\cdot \sup_{\tau\in U_1} \|Z_\tau\| \geq M) < \delta\). Thus, by~\eqref{eq:lower-bound-m1} we get
  \[
  \pp_{\mu_*}(m_1(X_{1:n}) \leq D \exp\{nc_\epsilon - \sqrt{n}M\}) \leq  \pp_{\mu_*}\left(B\cdot \sup_{\tau\in U_1} \|Z_\tau\| \geq M\right) < \delta.
  \]
  Since \(nc_\epsilon/(1+\epsilon)\) grows slower than \(nc_\epsilon - \sqrt{n}M\) as \(n\to\infty\), for sufficiently large \(n\) we get the probability bound
  \[
  \pp_{\mu_*}\left(m_1(X_{1:n}) \leq D \exp\{nc_\epsilon/(1+\epsilon)\}\right) \leq \pp_{\mu_*}\left(m_1(X_{1:n}) \leq D \exp\{nc_\epsilon - \sqrt{n}M\}\right) < \delta.
  \]
  To conclude, observe that by~\eqref{eq:posterior-slab}
  \[
  \Pi_n(H_0|X_{1:n}) = \frac{\pi_0}{\pi_0 + (1-\pi_0)m_1(X_{1:n})} \leq \frac{\pi_0}{(1-\pi_0)m_1(X_{1:n})},
  \]
  so if we define \(\tilde{D}=\pi_0/((1-\pi_0)D)\), for large \(n\) we have
  \[
  \pp_{\mu_*}\left(\Pi_n(H_0|X_{1:n}) > \tilde{D}\exp\left\{\frac{-nc_\epsilon}{1+\epsilon}\right\}\right) \leq \pp_{\mu_*}\left(m_1(X_{1:n}) \leq D \exp\left\{\frac{nc_\epsilon}{1+\epsilon}\right\}\right) < \delta,
  \]
  i.e., \(\Pi_n(H_0|X_{1:n}) = O_{\pp_{\mu_*}}(e^{-nc_\epsilon/(1+\epsilon)})\). Since \(\epsilon\) was arbitrarily small, the constant \(\gamma\) in the exponent is as close as desired to \(\lim_{\epsilon\to 0}\, c_\epsilon/(1+\epsilon)=\frac{1}{2}\|\mu_*\|^2_K\).
\end{proof}

\subsubsection{Proof of Theorem~\ref{th:consistency-rates-h1-alt}}

\begin{proof}
  By~\ref{ass:positive-prior-mass-alt}, there exist \(p_0\leq P_{\max}\), \(\tau_0\in \mathcal C_{p_0}\), and \(b_0\in\R^{p_0}\setminus\{0\}\) such that
  \[
  \xi:=\|\mu_*-\mu_{\tau_0,b_0}\|_K < \|\mu_*\|_K.
  \]
  By the full support condition in~\ref{ass:positive-prior-mass-alt}, there exists a compact neighborhood \(U=U_1\times U_2\) of \((\tau_0,b_0)\) within \(\mathcal C_{p_0}\times\R^{p_0}\setminus\{0\}\) with \(\Pi(p_0)\Pi_\tau(U_1|p_0)\Pi_b^{\text{slab}}(U_2|p_0)>0\). Let \((\tau,b)\in U\) and define \(f(\tau, b)\) as in the proof of Theorem~\ref{th:consistency-rates-h1}, i.e.,
  \[
  f(\tau, b) = \frac{1}{2}\|\mu_*\|_K^2 - \frac{1}{2}\|\mu_* - \mu_{\tau, b}\|_K^2.
  \]
  Let \(\epsilon >0\) be small enough so that \(\xi + \epsilon < \|\mu_*\|_K\). By shrinking \(U\) we can ensure that \(\|\mu_{\tau_0,b_0}-\mu_{\tau,b}\|_K < \epsilon\), and then by the triangle inequality we have \(\|\mu_*-\mu_{\tau,b}\|_K \leq \xi + \epsilon\). Therefore,
  \[
  f(\tau, b) \geq \frac{1}{2}\|\mu_*\|_K^2 - \frac{1}{2}(\xi + \epsilon)^2 =: c_{\epsilon,\xi} > 0 \quad\text{for all }(\tau,b)\in U,
  \]
  where positivity holds because \(\xi + \epsilon < \|\mu_*\|_K\). From this point on, we proceed exactly as in the proof of Theorem~\ref{th:consistency-rates-h1}, obtaining \(\Pi_n(H_0|X_{1:n})=O_{\pp_\mu*}(e^{-nc_{\epsilon,\xi}/(1+\epsilon)})\). Letting \(\epsilon\to 0\) gives a constant \(\gamma\) in the exponent as close as desired to \(\frac{1}{2}(\|\mu_*\|_K^2 - \xi^2)\).
\end{proof}

\subsection{Consistency and Contraction Rates for Plug-in Covariance}

Firstly, note that under~\ref{ass:continuous-paths} the empirical covariance estimator \(\hat C_n\) is uniformly consistent almost surely, i.e., 
\begin{equation}\label{eq:uniform-consistency-empirical-covariance}
\|\hat C_n - K\|_\infty \stackrel{\text{a.s.}}{\longrightarrow} 0\quad (n\to\infty),
\end{equation}
where \(\|\cdot\|_\infty\) is the supremum norm. This follows directly from Mourier's law of large numbers in Banach spaces \citep[Theorem 7.2.1]{laha1979probability} and the fact that \(\E[\|X\|_\infty^2] = \E[\sup_t |X(t)|^2] < \infty\). To establish the finiteness of the second moment under the supremum norm, \citet{landau1970supremum} showed that \(\E[e^{\epsilon \|X\|_\infty^2}] < \infty\) for sufficiently small \(\epsilon>0\), which immediately implies \(\E[\|X\|_\infty^2] < \infty\).

\subsubsection{Proof of Proposition~\ref{prop:sample-cov-matrix-convergence}}

\begin{proof}
  The first assertion in (i) is a consequence of the scalar law of large numbers applied entrywise. Alternatively, it can be immediately deduced from~\eqref{eq:uniform-consistency-empirical-covariance} by choosing the max-norm on matrices (since all matrix norms are equivalent):
  \[
  \|\hat{\mathbf C}_n - \mathbf K\|_{\max} = \max_{1\leq g,h\leq G} |\hat C_n(s_g, s_h) - K(s_g, s_h)| \leq \|\hat C_n - K\|_\infty \stackrel{\text{a.s.}}{\longrightarrow} 0.
  \]
  The second assertion is proven analogously.

  For (ii), note that \(\tr(\cdot)\) is continuous, so from (i) we have
  \[
  \tr(\hat{\mathbf C}_n) \stackrel{\text{a.s.}}{\longrightarrow} \tr(\mathbf K)\quad \text{and}\quad \tr(\hat{\mathbf C}_n^2) \stackrel{\text{a.s.}}{\longrightarrow} \tr(\mathbf K^2).
  \]
  Then, the numerator in \(\hat\rho_n\), defined in~\eqref{eq:rho-oas}, converges almost surely to a finite positive constant (as long as \(G\geq 2\)). For the denominator, observe that
  \begin{equation}\label{eq:rho-oas-denominator-convergence}
    \tr(\hat{\mathbf C}_n^2) - \frac{1}{G}\tr^2(\hat{\mathbf C}_n) \stackrel{\text{a.s.}}{\longrightarrow} \tr(\mathbf K^2) - \frac{1}{G}\tr^2(\mathbf K).
  \end{equation}
  By Cauchy--Schwarz (using the definition of the trace as the sum of the eigenvalues), the right-hand side of~\eqref{eq:rho-oas-denominator-convergence} is always nonnegative, and it vanishes if and only if the true covariance matrix \(\mathbf K\) is spherical. Thus, we consider two cases:
  \begin{itemize}
  \item If, say, \(\mathbf K =\sigma^2 I_G\) with \(\sigma^2>0\), then the behavior of \(\hat \rho_n\) as \(n\to\infty\) is not easily determined (though it can be analyzed by a central limit theorem argument). Luckily this is irrelevant, since now both the empirical covariance and target covariance matrices converge a.s.\ to the same limit, namely \(\mathbf K\). Then,
  \[
  \|\hat{\mathbf K}_n^{\text{OAS}} - \mathbf K\| \leq (1-\hat\rho_n) \|\hat{\mathbf C}_n - \mathbf K\| + \hat\rho_n \left\|\frac{\tr(\hat{\mathbf C}_n)}{G}I_G - \mathbf K\right\| \stackrel{\text{a.s.}}{\longrightarrow} 0.
  \]
  \item If \(\mathbf K\) is not spherical, then \(\|\tr(\hat{\mathbf C}_n)I_G/G - \mathbf K\|\) converges to a finite constant almost surely. On the other hand, the denominator in~\eqref{eq:rho-oas} diverges almost surely as \(n\to\infty\), so \(\hat\rho_n\stackrel{\text{a.s.}}{\to} 0\). Then, using the same upper bound as in the previous case, we also get \(\|\hat{\mathbf K}_n^{\text{OAS}} - \mathbf K\|\to 0\) a.s.
  \end{itemize}

  We now prove the second assertion in (ii) in detail. The case \(p=1\) is trivial, so we assume \(p>1\). By the triangle inequality and particularizing to the Frobenius norm \(\|\cdot\|_F\), we have
  \[
  \sup_{\tau\in \mathcal C_p}\|\hat K^{\text{OAS}}_{n,\tau} - K_\tau\|_F \leq \sup_{\tau\in \mathcal C_p}\|\hat C_{n,\tau} - K_\tau\|_F + \sup_{\tau\in \mathcal C_p}\hat\rho_{n,\tau}\left\|\frac{\tr(\hat C_{n,\tau})}{p}I_p - \hat C_{n,\tau}\right\|_F.
  \]
  The first term on the right-hand side goes to 0 almost surely by the uniform convergence established in (i). For the second term, let
  \[
  \hat{D}_{n,\tau} = \frac{\tr(\hat C_{n,\tau})}{p}I_p - \hat C_{n,\tau},
  \]
  and note that \(\|\hat{D}_{n,\tau}\|_F^2 = \tr(\hat D_{n,\tau}\hat D_{n,\tau}') = \tr(\hat C_{n,\tau}^2) - \tr^2(\hat C_{n,\tau})/p\). Then, for \(\tau\in \mathcal C_p\) we have
  \begin{equation}\label{eq:rho-frobenius-difference}
  \hat\rho_{n,\tau}\left\|\frac{\tr(\hat C_{n,\tau})}{p}I_p - \hat C_{n,\tau}\right\|_F = \min\left\{1,\, \frac{\hat{N}_{n,\tau}}{(n + 1 - 2/p)\|\hat{D}_{n,\tau}\|_F^2}\right\}\|\hat{D}_{n,\tau}\|_F,
  \end{equation}
  where \(\hat{N}_{n,\tau} = (1-2/p) \tr(\hat C_{n,\tau}^2) + \tr^2(\hat C_{n,\tau}) \geq 0\). Now, for any \(A\geq 0\), \(c>0\) and \(x\geq 0\) the algebraic inequality \(\min\left\{1,\frac{A}{cx^2}\right\}x \leq \sqrt{\frac{A}{c}}\) holds (if \(x \leq \sqrt{A/c}\) the bound is trivial since the minimum is at most \(1\); if \(x > \sqrt{A/c}\) it evaluates to \(\frac{A}{cx^2}\), and in turn \(\frac{A}{cx^2}x < \sqrt{A/c}\)). Applying this bound to~\eqref{eq:rho-frobenius-difference} yields
  \[
  \hat\rho_{n,\tau}\|\hat D_{n,\tau}\|_F \leq \sqrt{\frac{\hat{N}_{n,\tau}}{n + 1 - 2/p}}.
  \]

  To conclude, we show that \(\hat N_{n,\tau}\) is uniformly bounded almost surely for large \(n\). To see this, define the continuous map \(F:\R^{p\times p} \to \R\) given by \(F(A)= (1-2/p)\tr(A^2)+\tr^2(A)\). Note that the set \(\{K_\tau: \tau \in \mathcal C_p\}\) is compact, so uniform convergence of \(\hat C_{n,\tau}\) to \(K_\tau\), given by (i), implies that for large \(n\) all \(\hat C_{n,\tau}\) lie almost surely in a fixed compact neighborhood of this set. The function \(F\) is uniformly continuous on this compact neighborhood, and therefore 
  \[
  \sup_{\tau\in \mathcal C_p}|F(\hat C_{n,\tau}) - F(K_\tau)| = \sup_{\tau\in \mathcal C_p}|\hat N_{n,\tau} - F(K_\tau)|\stackrel{\text{a.s.}}{\longrightarrow} 0.
  \] 
  This implies that \(\sup_{\tau\in \mathcal C_p}\hat N_{n,\tau}\leq \sup_{\tau\in \mathcal C_p} F(K_\tau)+ 1 =: M\) a.s.\ for all \(n\) large enough, and \(M<\infty\) by continuity. Hence,
  \[
  \sup_{\tau \in \mathcal C_p} \hat\rho_{n,\tau}\|\hat D_{n,\tau}\|_F \leq \sqrt{\frac{M}{n + 1 - 2/p}} \stackrel{\text{a.s.}}{\longrightarrow} 0.
  \]
\end{proof}

We now prove the plug-in consistency results for our test. Define the plug-in marginal likelihood
\[
\hat \mslab(X_{1:n}|p,\tau) = \eta^{-p}|\hat\Lambda_{n,\tau}|^{-1/2}\exp\left\{\frac{1}{2}S_n(\tau)'\hat\Lambda_{n,\tau}^{-1}S_n(\tau)\right\},
\]
where \(\hat\Lambda_{n,\tau} = n\hat K_{n,\tau} + \eta^{-2}I_p \). Similarly, consider the plug-in posterior probability of the alternative
\[
\hat\Pi_n(H_1|X_{1:n}) = \frac{(1-\pi_0)\hat m_1(X_{1:n})}{\pi_0 + (1-\pi_0)\hat m_1(X_{1:n})},
\]
where
\[
\hat m_1(X_{1:n}) = \sum_{p\geq 1}\Pi(p)\int_{\mathcal T_p}\hat \mslab(X_{1:n}|p,\tau)\Pi(\tau|p)\,d\tau.
\]

\subsubsection{Proof of Theorem~\ref{th:consistency-rates-h0-plugin}}

\begin{proof}
  We will follow the proof of Theorem~\ref{th:consistency-rates-h0} and adapt it only where necessary. First, observe that by~\ref{ass:compact-support-tau} and~\ref{ass:prior-p-finite} the sum \(\hat m_1(X_{1:n})\) reduces to
\[
\hat m_1(X_{1:n}) = \sum_{p=p_{\min}}^{P_{\max}} \Pi(p)\,\hat I_{n,p}(\mathcal C_p),
\]
where \(\hat I_{n,p}(\mathcal C_p)=\int_{\mathcal C_p}\hat \mslab(X_{1:n}|p,\tau)\Pi(\tau|p)\,d\tau\). Fix \(p_{\min} \leq p\leq P_{\max}\), and recall that there exists \(\lambda_p>0\) with \(\lambda_{\min}(K_\tau)\geq \lambda_p\) for all \(\tau\in \mathcal C_p\). By Weyl's inequality, choosing the spectral norm \(\|\cdot\|_2\) on matrices, we have
\[
\lambda_{\min}(\hat K_{n,\tau}) \geq \lambda_{\min}(K_\tau) - \|\hat K_{n,\tau} - K_\tau\|_2 \geq \lambda_p - \sup_{\tau\in \mathcal C_p}\|\hat K_{n,\tau} - K_\tau\|_2.
\]
Define the event where \(\hat K_{n,\tau}\) is uniformly close to \(K_\tau\) with a suitable bound, i.e.,
\[
F_n = \left\{\sup_{\tau\in \mathcal C_p}\|\hat K_{n,\tau} - K_\tau\|_2 \leq \frac{\lambda_p}{2}\right\},
\]
and note that \(\pp_0(F_n^c)\to 0\) as \(n\to\infty\) by~\ref{ass:uniform-consistency-plug-in}. Furthermore, \(\lambda_{\min}(\hat K_{n,\tau})\geq \lambda_p/2\) for all \(\tau\in \mathcal C_p\) on \(F_n\), and consequently \(\lambda_{\min}(\hat\Lambda_{n,\tau})\geq n\lambda_p/2>0\). On \(F_n\), this yields the upper bound
\begin{equation}\label{eq:upper-bound-m-slab-plugin}
\hat \mslab(X_{1:n}|p,\tau) \leq \eta^{-p}\left(\frac{n\lambda_p}{2}\right)^{-p/2}\exp\left\{\frac{1}{n\lambda_p}\|S_n(\tau)\|^2\right\}, \quad \text{for all }\tau\in \mathcal C_p.
\end{equation}

Let \(\epsilon>0\) and define the event \(E_{M}\) for a large enough \(M\) as in~\eqref{eq:event-E-n-M}, so that \(\pp_0(E_M^c)<\epsilon/2\). On the high-probability event \(F_n\cap E_{M}\), incorporating~\eqref{eq:upper-bound-m-slab-plugin} gives after integration
\[
\hat I_{n,p}(\mathcal C_p) \leq \eta^{-p}\left(\frac{n\lambda_p}{2}\right)^{-p/2}\exp\left\{\frac{M}{\lambda_p}\right\} =: \hat D\, n^{-p/2},
\]
where \(\hat D=\eta^{-p}(\lambda_p/2)^{-p/2}\exp\{M/\lambda_p\}\) is a finite constant independent of \(n\). Then, for sufficiently large \(n\),
\[
\pp_0(\hat I_{n,p}(\mathcal C_p)>\hat D\, n^{-p/2}) \leq \pp_0(F_n^c)+\pp_0(E_{M}^c) < \epsilon,
\]
showing \(\hat I_{n,p}(\mathcal C_p)=O_{\pp_0}(n^{-p/2})\). Then, we conclude exactly as in the proof of Theorem~\ref{th:consistency-rates-h0}.
\end{proof}

\subsubsection{Proof of Theorem~\ref{th:consistency-rates-h1-plugin}}

\begin{proof}
We follow the proof strategy of Theorem~\ref{th:consistency-rates-h1}. Recover the approximating point \((p_\epsilon,\tau_\epsilon,b_\epsilon)\), the compact neighborhood \(U=U_1\times U_2\) of \((\tau_\epsilon, b_\epsilon)\) with positive prior probability, and the function \(f(\tau, b)=\mu_*(\tau)'b - \frac{1}{2}b'K_\tau b\), which satisfies \(f(\tau, b)\geq \frac{1}{2}\|\mu_*\|_K^2 - 2\epsilon^2 >0\) for all \((\tau,b)\in U\) and sufficiently small \(\epsilon>0\). Define the plug-in analogue
\[
\hat f_n(\tau,b) = \mu_*(\tau)'b - \frac{1}{2}b'\hat K_{n,\tau} b,\quad \text{for all } (\tau,b)\in U,
\]
and bound the plug-in error by
\[
\sup_{(\tau,b)\in U}|\hat f_n(\tau,b) - f(\tau,b)| = \sup_{(\tau,b)\in U}\frac{1}{2}|b'(\hat K_{n,\tau}-K_\tau)b| \leq \frac{B^2}{2}\sup_{\tau\in U_1}\|\hat K_{n,\tau}-K_\tau\|_2,
\]
where \(0<B=\sup_{b\in U_2} \|b\|<\infty\) and \(\|\cdot\|_2\) is the spectral norm on matrices. Now, define the event 
\[
G_n = \left\{\sup_{\tau\in U_1}\|\hat K_{n,\tau} - K_\tau\|_2 \leq \frac{2\epsilon^2}{B^2}\right\}.
\]
On the one hand, by~\ref{ass:uniform-consistency-plug-in} we have \(\pp_{\mu_*}(G_n^c)\to 0\) as \(n\to\infty\). On the other hand, on \(G_n\),
\begin{equation}\label{eq:plugin-f-lower-bound}
\hat f_n(\tau,b) \geq f(\tau,b) - \epsilon^2 \geq \frac{1}{2}\|\mu_*\|_K^2 - 3\epsilon^2 =: d_\epsilon, \quad \text{for all } (\tau,b)\in U,
\end{equation}
where \(d_\epsilon>0\) can be achieved by taking \(\epsilon\) small enough. In this case, the plug-in log-likelihood for \(p_\epsilon\) and \((\tau, b)\in U\) becomes
\begin{equation}\label{eq:log-likelihood-plugin}
\log \hat L(X_{1:n}|p_\epsilon, \tau, b) = n\mu_*(\tau)'b - \frac{n}{2}b'\hat K_{n,\tau}b + \sqrt{n}Z_\tau' b = n\hat f_n(\tau, b) + \sqrt{n}Z_\tau' b,
\end{equation}
where \(Z_\tau = \frac{1}{\sqrt{n}}(S_n(\tau) - n\mu_*(\tau))\). Moreover, recall that by Cauchy--Schwarz we have
\begin{equation}\label{eq:cauchy-schwarz-plugin}
\sqrt{n} Z_\tau' b \geq -\sqrt{n} B \sup_{\tau \in U_1}\|Z_\tau\|.
\end{equation}

On \(G_n\), restricting the domain of integration and using~\eqref{eq:plugin-f-lower-bound},~\eqref{eq:log-likelihood-plugin} and~\eqref{eq:cauchy-schwarz-plugin} produces the bound
\begin{equation}\label{eq:lower-bound-m1-plugin}
  \begin{split}
  \hat m_1(X_{1:n}) &\geq \Pi(p_\epsilon)\int_{U_1}\int_{U_2}\mathcal N_{p_\epsilon}(b|0,\eta^2 I_{p_\epsilon})\exp\left\{n\hat f_n(\tau,b)+\sqrt{n} Z_\tau'b\right\}\Pi(\tau|p_\epsilon)\,db\,d\tau \\
  &\geq D\exp\left\{nd_\epsilon - \sqrt{n}\,B\cdot\sup_{\tau\in U_1}\|Z_\tau\|\right\},
  \end{split}
\end{equation}
where \(D=\Pi(p_\epsilon)\Pi_\tau(U_1|p_{\epsilon})\Pi_b^{\text{slab}}(U_2|p_{\epsilon})>0\) by assumption. Now, since \(Z_\tau \sim \mathcal N_{p_\epsilon}(0, K_\tau)\) under \(H_1\), we already saw in the proof of Theorem~\ref{th:consistency-rates-h1} that for any \(\delta>0\) there exists \(M>0\) such that the event
\[
H_M = \left\{B\cdot \sup_{\tau\in U_1}\|Z_\tau\| \leq M\right\}
\]
satisfies \(\pp_{\mu_*}(H_M^c)<\delta/2\). Thus, we have the inequality \(\hat m_1(X_{1:n}) \geq D\exp\{nd_\epsilon - \sqrt{n}M\}\) on \(G_n\cap H_M\) by~\eqref{eq:lower-bound-m1-plugin}. Since \(nd_\epsilon/(1+\epsilon)\) grows slower than \(nd_\epsilon - \sqrt{n}M\) as \(n\to\infty\), for sufficiently large \(n\) we obtain \(\hat m_1(X_{1:n}) \geq D\exp\{n d_\epsilon/(1+\epsilon)\}\) on \(G_n\cap H_M\). Therefore,
\[
\hat\Pi_n(H_0|X_{1:n}) = \frac{\pi_0}{\pi_0+(1-\pi_0)\hat m_1(X_{1:n})}\leq \frac{\pi_0}{(1-\pi_0)\hat m_1(X_{1:n})}\leq \tilde{D}\exp\left\{\frac{-nd_\epsilon}{1+\epsilon}\right\}
\]
on \(G_n\cap H_M\), where \(\tilde{D}=\pi_0/((1-\pi_0)D)\). Write \(\pp_{\mu_*}(G_n^c\cup H_M^c)\leq \pp_{\mu_*}(G_n^c)+\pp_{\mu_*}(H_M^c)\), and note that the first term vanishes as \(n\to\infty\), while the second is at most \(\delta/2\). Then, for sufficiently large \(n\) we have proved that
\[
\pp_{\mu_*}\left(\hat\Pi_n(H_0|X_{1:n})>\tilde{D}\exp\{-nd_\epsilon/(1+\epsilon)\}\right)<\delta,
\]
so \(\hat\Pi_n(H_0|X_{1:n})=O_{\pp_{\mu_*}}(e^{-nd_\epsilon/(1+\epsilon)})\) and, as before, \(\lim_{\epsilon\to 0}\, d_\epsilon/(1+\epsilon)=\frac{1}{2}\|\mu_*\|_K^2\). The proof of the second part of the theorem is almost identical, except that \(f(\tau,b)\) is lower bounded by \(c_{\epsilon,\xi}\) as in the construction from the proof of Theorem~\ref{th:consistency-rates-h1-alt}.
\end{proof}

\subsection{On Optimality of the Contraction Rates}

We conclude with a proof of the information-theoretic optimality of the contraction rate in Theorem~\ref{th:consistency-rates-h1}. We will need the following lemma, which is just the standard fact that marginal likelihoods are densities with respect to the fixed reference measure. It follows directly from Fubini/Tonelli and the properties of Radon--Nikodym derivatives.

\begin{lemma}\label{lem:expectation-identity} For any \((p, \tau)\in \N \times \mathcal T_p\) and every \(n\geq 1\), we have \(\E_{\pp_0}[m_{\emph{slab}}(X_{1:n}|p,\tau)] = 1\).
\end{lemma}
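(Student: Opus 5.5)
The plan is to write \(\E_{\pp_0}[\mslab(X_{1:n}|p,\tau)]\) as an integral over \(b\) and then use Tonelli to move the expectation inside. By the definition in~\eqref{eq:m-slab},
\[
\mslab(X_{1:n}|p,\tau)=\int_{\R^p}\mathcal N_p(b|0,\eta^2 I_p)\,L(X_{1:n}|p,\tau,b)\,db .
\]
The integrand is a nonnegative jointly measurable function of \((b,X_{1:n})\). It is measurable in \(X_{1:n}\) through the point evaluations \(X_i(t_j)\), and continuous in \(b\). Tonelli's theorem therefore allows us to exchange \(\E_{\pp_0}\) with \(\int db\), giving
\[
\E_{\pp_0}[\mslab(X_{1:n}|p,\tau)]=\int_{\R^p}\mathcal N_p(b|0,\eta^2 I_p)\,\E_{\pp_0}[L(X_{1:n}|p,\tau,b)]\,db .
\]

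The key step is to show that \(\E_{\pp_0}[L(X_{1:n}|p,\tau,b)]=1\) for every fixed \(b\). I would prove this in two complementary ways.

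\emph{Measure-theoretic route.} By~\eqref{eq:rn-derivative} and the product structure, \(L(X_{1:n}|p,\tau,b)\) is exactly \(d\pp_{\mu_{\tau,b}}^{(n)}/d\pp_0^{(n)}\). This holds because \(\mu_{\tau,b}\in\Hcal_0(K)\subset\Hcal(K)\). Its \(\pp_0\)-expectation is then \(\pp_{\mu_{\tau,b}}^{(n)}(\mathcal C[0,1]^n)=1\).

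\emph{Direct computation.} Under \(\pp_0\), \(S_n(\tau)\sim\mathcal N_p(0,nK_\tau)\). The Gaussian moment generating function gives \(\E_{\pp_0}[e^{S_n(\tau)'b}]=e^{\frac n2 b'K_\tau b}\), which cancels the deterministic factor \(e^{-\frac n2 b'K_\tau b}\) in~\eqref{eq:likelihood}.

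The second computation is self-contained, so I would present it as the main argument and mention the first as the interpretation.

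Substituting the value \(1\) back into the exchanged integral leaves
\[
\int_{\R^p}\mathcal N_p(b|0,\eta^2I_p)\,db=1 .
\]
This completes the argument. As an optional cross-check, one could instead integrate the closed form \(\eta^{-p}|\Lambda_\tau|^{-1/2}\exp\{\tfrac12 S_n'\Lambda_\tau^{-1}S_n\}\) directly against \(\mathcal N_p(0,nK_\tau)\). Here one needs \(I-nK_\tau\Lambda_\tau^{-1}=\eta^{-2}\Lambda_\tau^{-1}\), which is positive definite, so the integral is finite and the determinants collapse to \(1\). This route is messier, and I would use it only as a check.

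The only genuine subtlety is justifying the interchange. Nonnegativity makes Tonelli applicable with no integrability hypotheses, so I expect no real obstacle. The remaining care is to note that the identity holds for every \(\tau\in\mathcal T_p\), including the degenerate case where \(K_\tau\) is merely positive semidefinite, since the moment generating function formula does not require invertibility.
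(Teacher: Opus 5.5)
Your proposal is correct and follows the paper's own one-line argument. The paper exchanges \(\E_{\pp_0}\) and the \(b\)-integral by Tonelli, uses that \(L(X_{1:n}|p,\tau,b)=d\pp^{(n)}_{\mu_{\tau,b}}/d\pp^{(n)}_0\) has \(\pp_0\)-expectation one, and then integrates the slab density to one, exactly as you do. Your moment-generating-function computation \(\E_{\pp_0}[e^{S_n(\tau)'b}]=e^{\frac n2 b'K_\tau b}\) is a self-contained check of that Radon--Nikodym fact, and your closed-form cross-check via \(I-nK_\tau\Lambda_\tau^{-1}=\eta^{-2}\Lambda_\tau^{-1}\) is also correct.
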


\subsubsection{Proof of Proposition~\ref{prop:optimality-rates-h1}}

\begin{proof}
  Suppose by contradiction that a faster rate holds, say \(\Pi_n(H_0|X_{1:n})=O_{\pp_{\mu_*}}(e^{-n\zeta})\) for \(\zeta > \frac{1}{2}\|\mu_*\|_K^2\). By definition, for any \(0<\epsilon<1/2\) there exists \(M>0\) such that the region
  \[
  R_n = \{X_{1:n}: \Pi_n(H_0|X_{1:n}) \leq M e^{-n\zeta}\}
  \]
  satisfies \(\pp_{\mu_*}(R_n)> 1-\epsilon\) for sufficiently large \(n\). Then, reusing the notation \(m_1(X_{1:n})\) from~\eqref{eq:posterior-slab-m1}, on \(R_n\) we have 
  \[
  \Pi_n(H_0|X_{1:n}) = \frac{\pi_0}{\pi_0 + (1-\pi_0)m_1(X_{1:n})} \leq M e^{-n\zeta},
  \]
  and rearranging yields
  \[
  \pi_0(1-M e^{-n\zeta}) \leq (1-\pi_0)m_1(X_{1:n}) M e^{-n\zeta}.
  \]
  For sufficiently large \(n\) we have \(1-M e^{-n\zeta} \geq 1/2\), so eventually we get
  \[
  1\leq \frac{2M(1-\pi_0)}{\pi_0} e^{-n\zeta} m_1(X_{1:n}).
  \]
  Integrating on \(R_n\) on both sides with respect to \(\pp_0\) and using the fact that \(m_1(X_{1:n})\) is a valid marginal likelihood (it integrates to 1 under \(\pp_0\) over the whole space by Lemma~\ref{lem:expectation-identity}), we have
  \begin{equation}\label{eq:type-ii-error-bound}
  \pp_0(R_n) \leq \frac{2M(1-\pi_0)}{\pi_0} e^{-n\zeta} \int_{R_n} m_1(X_{1:n})\, d\pp_0 \leq C e^{-n\zeta}\quad \text{for large } n,
  \end{equation}
  where \(C>0\) is a constant. Now we construct a frequentist test \(\phi_n\) for the reverse hypotheses \(\tilde{H}_0: \mu\equiv \mu_*\) against \(\tilde{H}_1: \mu\equiv 0\), given by \(\phi_n(X_{1:n}) = \mathds{1}\{X_{1:n} \in R_n^c\}\). In this case, according to~\eqref{eq:rn-derivative} the Kullback--Leibler divergence from \(\pp_{\mu_*}\) to \(\pp_0\) is
  \[
  \kl(\pp_{\mu_*} \,\|\, \pp_0) = \E_{\pp_{\mu_*}}\left[\log \frac{d\pp_{\mu_*}}{d\pp_0}(X)\right] = \E_{\pp_{\mu_*}}[\dotprod{X}{\mu_*}_K] - \frac{1}{2}\|\mu_*\|^2_K = \frac{1}{2}\|\mu_*\|_K^2,
  \]
  where we used the equivalence \(\E_{\pp_{\mu_*}}[\dotprod{X}{\mu_*}_K]=\dotprod{\mu_*}{\mu_*}_K=\|\mu_*\|_K^2\) that stems from Loève's isometry \citep[p.~974]{parzen1961approach}. For sufficiently large \(n\), the type I error probability is bounded, since \(\pp_{\mu_*}(\phi_n = 1)=\pp_{\mu_*}(R_n^c) < \epsilon\). On the other hand, by~\eqref{eq:type-ii-error-bound} the type II error probability for large \(n\) is \(\pp_0(\phi_n=0)=\pp_0(R_n)\leq C e^{-n\zeta}\), which decays exponentially with exponential constant greater than \(\kl(\pp_{\mu_*} \,\|\, \pp_0)\), contradicting Chernoff--Stein.
\end{proof}

\section{Additional Information on Numerical Experiments}\label{app:experiments}

We provide a description of the comparison methods, a runtime comparison of our RKBT test, and some technical implementation details.

\subsection{Description of Comparison Methods}

The frequentist comparison methods were mostly described by~\citet{zhang2013analysis}. We consider the pivotal test function \(\Delta_n(s) = \sqrt{n}\bar X_n(s)\), which will be small when the null hypothesis is true. We also fix a significance level \(\alpha\in (0,1)\) and the discretized grid \(\mathcal S=\{s_1,\dots,s_g\}\).

\subsubsection{Pointwise \(t\)-Test}

The first naive idea is to construct a pointwise \(t\)-test over all points on the grid, and then reject \(H_0\) if any pointwise test rejects \(H_{0s}: \mu(s)=0\). The local test statistics are
\[
z_n(s) = \frac{\Delta_n(s)}{\sqrt{\hat C_n(s,s)}} \sim t_{n-1},\quad s\in[0,1].
\]

We use the Bonferroni correction on the \(p\)-values~\citep{shaffer1995multiple} to produce a global test that accounts for the multiple comparisons problem and controls the probability of the so-called family-wise error rate (FWER), at the expense of decreased power. In particular, we reject \(H_{0}\) whenever the \(p\)-value of at least one of the local tests is less than \(\alpha/G\). While there are more sophisticated approaches to correct \(p\)-values \citep[e.g.][]{cox2008pointwise}, we keep the simple Bonferroni method as a baseline.

\subsubsection{\(L^2\)-Norm Based Tests}

We perform a global test based on the squared \(L^2\)-norm \(\|\cdot\|_2\) of the pivotal test function \(\Delta_n\). The proposed test statistic is
\begin{equation}\label{eq:test-statistic-l2}
T_n = \|\Delta_n\|^2_2 = n\int_0^1 \bar{X}_n^2(s)\, ds.
\end{equation}
Under \(H_0\) and Gaussian data, \(T_n\) follows a \(\chi^2\)-type mixture distribution. Using a Box-type approximation \citep{box1954some}, we have
\[
T_n \sim \gamma \chi^2_d\,  \text{ approximately}, \quad \text{where } \gamma = \frac{\tr(K^{\otimes 2})}{\tr(K)}\, \text{ and }\, d=\frac{\tr^2(K)}{\tr(K^{\otimes 2})},
\]
with \(\tr(K^{\otimes 2}) = \int_0^1 \int_0^1 K(t,s)^2 \, dt\, ds\). Since \(K\) is generally unknown, a common approach is to use the empirical covariance estimator \(\hat C_n\) to approximate \(\gamma\) and \(d\), which leads to consistent estimators \(\hat\gamma\) and \(\hat d\) \citep[Theorem~4.17]{zhang2013analysis}. The corresponding \(p\)-value for the test is given by \(\pp(\chi^2_{\hat d} \geq T_n/\hat \gamma)\). In practice, we need to discretize the functional elements to compute the test statistic and \(p\)-value:
\[
T_n = n\sum_{g=1}^G \bar{X}_n^2(s_g), \qquad \hat \gamma \approx \frac{\tr(\hat{\mathbf C}_n^2)}{\tr(\hat{\mathbf C}_n)}, \qquad \hat d \approx \frac{\tr^2(\hat{\mathbf C}_n)}{\tr(\hat{\mathbf C}_n^2)}.
\]
Note that the constant discretization step \(1/G\) that would appear from approximating the integrals cancels out in the computations, so it can be omitted for simplicity.

\begin{remark}
  If we use the improved estimator \(\hat{\mathbf K}_n^{\text{OAS}}\) to approximate \(\hat \gamma\) and \(\hat d\), the argument goes through unaffected: by Proposition~\ref{prop:sample-cov-matrix-convergence} we have \(\|\hat{\mathbf K}^{\text{OAS}}_n - \mathbf{K}\| \to 0\) a.s., and hence the traces converge almost surely to the true value.
\end{remark}

We can also consider a projection-based version of the \(L^2\) test using functional principal component analysis (FPCA), which was explored for example by \citet{horvath2012inference} in a similar testing problem. Let \(\lambda_1\geq \lambda_2\geq \cdots \geq \lambda_p>0\) be the \(p\) leading eigenvalues of the true covariance operator, with \(\phi_1,\dots,\phi_p\) the corresponding eigenfunctions (up to a sign flip). Then, we can define a test statistic analogous to~\eqref{eq:test-statistic-l2} but with the mean function projected on the space spanned by \(\phi_1,\dots,\phi_p\). Write the FPCA scores of the sample mean as
\begin{equation}\label{eq:l2-fpca-scores}
\bar{\xi}_j = \int_0^1 \bar X_n(s)\phi_j(s)\, ds,\quad j=1,\dots,p,
\end{equation}
and consider the statistic
\begin{equation}\label{eq:statistic-l2-fpca}
T_{n,p} = n \sum_{j=1}^p \frac{\bar{\xi}_j^2}{\lambda_j},
\end{equation}
where normalization by the eigenvalues is introduced to simplify the expression of the null distribution. Indeed, by standard FPCA theory, under \(H_0\) and the Gaussian assumption we have \(T_{n,p}\sim \chi^2_p\). Of course, we rarely have access to the true \(K\), so we construct estimates \(\{\hat\lambda_j\}\) and \(\{\hat\phi_j\}\) for the eigenvalues and eigenfunctions via the (truncated) spectral decomposition of the empirical covariance \(\hat C_n\), and then compute an approximate \(\hat T_{n,p}\) plugging in the estimated values into~\eqref{eq:l2-fpca-scores} and~\eqref{eq:statistic-l2-fpca}. Under mild assumptions (finite fourth moments, distinctness of the true eigenvalues), it can be shown that \(\hat T_{n,p}\stackrel{d}{\to}\chi^2_p\) as \(n\to\infty\) \citep[Theorem~5.3]{horvath2012inference}, so asymptotically the \(p\)-value for the test is \(\pp(\chi^2_p \geq \hat T_{n,p})\). As we said before, in practice we compute all quantities involved in the discretized grid \(\mathcal S\):
\[
\hat \lambda_j \approx \hat l_j,\qquad \hat\phi_j \approx \hat{\mathbf v}_j=(\hat v_{j1},\dots, \hat v_{jG})', \qquad \hat T_{n,p} \approx n\sum_{j=1}^p \frac{(\sum_{g=1}^G \bar{X}_n(s_g)\hat v_{jg})^2}{\hat l_j}, 
\]
where \(\{\hat l_j, \hat{\mathbf v}_j\}\) are the eigenvalues and (normalized) eigenvectors of the \(G\times G\) empirical covariance matrix \(\hat{\mathbf C}_n\). We could also use the regularized version \(\hat{\mathbf K}_n^{\text{OAS}}\) to compute the eigendecomposition, and even though we did not explore the theoretical guarantees in this case, it is likely that the eigenvalues and eigenvectors of the OAS estimator constitute sufficiently good approximations for our purposes.

The only missing aspect is the question of how many components to retain. We follow the simple approach proposed by \citet{ramsay2005functional}, who suggest selecting \(p\) so that the resulting functional principal components cumulatively explain a fixed large percentage of the variance (e.g.\ 90\%).

\subsubsection{\(F\)-Type Test}

The previous \(L^2\)-based tests do not take into account the variation of the covariance function. Under \(H_0\) and Gaussian data, we have \(\E\|\Delta_n\|^2_2 = \tr(K)\) and \(\E[\tr(\hat{C}_n)] = \tr(K)\), so we can consider the statistic
\[
F_n = \frac{\|\Delta_n\|^2_2}{\tr(\hat C_n)} \approx \frac{n\sum_{g=1}^G \bar{X}_n^2(s_g)}{\tr(\hat{\mathbf C}_n)},
\]
whose null distribution is shown to be an \(F\)-type mixture. It can again be approximated by a two-cumulant method \citep[\S4.4]{zhang2013analysis}, so that \(F_n \sim F_{\hat d, (n-1)\hat d}\) approximately, and the approximate \(p\)-value for the test is \(\pp(F_{\hat d, (n-1)\hat d} \geq F_n)\). This method works for the OAS estimator as well, since \(\tr(\hat{\mathbf K}_n^{\text{OAS}})=\tr(\hat{\mathbf K}_n)\).

\subsubsection{Bootstrap Procedures}

Both the \(L^2\)-based tests and the \(F\)-type test rely on asymptotic approximations, so their performance can deteriorate when the sample size \(n\) is small. For such cases we can resort to a nonparametric bootstrap scheme to approximate the null distribution of the statistics. In particular, we consider the following simple algorithm:
\begin{enumerate}
  \item Draw a bootstrap sample \(X_1^*,\dots, X_n^*\) from \(X_1,\dots, X_n\) (with replacement).
  \item Compute the quantities \(\bar{X}_n^*\) and \(\hat{\mathbf C}_n^*\) from the bootstrap sample.
  \item Construct the bootstrap statistics \(T_n^* = \|\sqrt{n}(\bar{X}_n^*(s) - \bar{X}(s))\|^2_2\) and \(F_n^* = T_n^*/\tr(\hat{\mathbf C}_n^*)\).
  \item Repeat the process many times to obtain \(R\) bootstrap statistics, and compute the bootstrap \(p\)-value as the proportion of bootstrap statistics that exceed the original statistic.
\end{enumerate}
In particular, the bootstrap \(p\)-values are
\[
p^*(T_n) = \frac{1 + \#\{T_n^* \geq T_n\}}{R+1},\qquad p^*(F_n) = \frac{1 + \#\{F_n^* \geq F_n\}}{R+1},
\]
where the original statistic is added as a positive instance to avoid the case of \(p^*=0\) \citep[see][p.~141]{davison1997bootstrap}. Note that the same procedure can be applied to the \(L^2\)-FPCA test, recomputing the eigendecomposition of the bootstrap empirical covariance matrix \(\hat{\mathbf C}_n^*\) each time. In our implementation, we activate this bootstrap procedure with \(R=5000\) whenever \(n\leq 50\).

\subsubsection{Bayesian Functional Test with FPCA}

Consider for all \(i=1,\dots, n\) the discretized vector \(\mathbf X_i=(X_i(s_1), \dots, X_i(s_G))'\), and place a multivariate normal model on these finite-dimensional observations, that is,
\[
\mathbf X_i|\boldsymbol{\mu}, \mathbf K \sim \mathcal N_G(\boldsymbol \mu, \mathbf K),\quad  i=1,\dots,n,
\]
where \(\boldsymbol\mu=(\mu(s_1), \dots, \mu(s_G))'\) and \(\mathbf K\) is the \(G\times G\) discretized covariance on the grid. To test \(H_0: \boldsymbol\mu = 0\) against \(H_1: \boldsymbol\mu \neq 0\), we place a \(g\)-prior on the mean under \(H_1\), i.e.,
\[
\boldsymbol \mu | g, \mathbf K \sim \mathcal{N}_G(0, g \mathbf K),
\]
where \(g>0\) is a scalar hyperparameter. This prior, introduced by \citet{zellner1986assessing}, encodes the belief that the mean function varies along the same directions as the data covariance, scaled by the factor \(g\). Under this conjugate model, the Bayes factor in favor of the alternative is available in closed form after integrating out \(\boldsymbol \mu\) \citep[for a similar computation, see][]{liang2008mixtures}:
\begin{equation}\label{eq:log-bf10}
\log \bff_{10} = -\frac{G}{2}\log(1+ng) + \frac{n^2 g}{2(1+ng)}Q_n,
\end{equation}
where \(Q_n = \bar{\mathbf X}_n' \mathbf K^{-1}\bar{\mathbf X}_n\), \(\bar{\mathbf X}_n = \frac{1}{n}\sum_{i=1}^n \mathbf X_i\) is the discretized sample mean, and we have supposed for simplicity that \(\pi_0=1-\pi_0=1/2\).

In practice, we compute the Bayes factor using the plug-in estimators \(\hat{\mathbf C}_n\) or \(\hat {\mathbf K}_n^{\text{OAS}}\) of the covariance matrix. However, applying~\eqref{eq:log-bf10} directly on the full \(G\)-dimensional discretized data presents two noticeable difficulties:
\begin{enumerate}
  \item When \(n<G\) (the typical FDA setting), the empirical covariance matrix \(\hat{\mathbf C}_n\) is rank-deficient, so \(\hat{\mathbf C}_n^{-1}\) and the corresponding plug-in quadratic form \(\hat Q_n\) do not exist. Even with regularization (e.g.\ by OAS or adding a small noise to the diagonal), the resulting \(\hat Q_n\) can be numerically unstable and a poor estimate.
  \item The Bayes factor contains a penalty term that grows linearly with \(G\). This penalty can become large and drown out the signal in \(Q_n\), driving \(\bff_{10}\) towards \(H_0\) regardless of the data. This phenomenon is known as Lindley's paradox \citep{lindley1957statistical}.
\end{enumerate}

From a functional perspective, both problems are organically resolved by projecting onto the top \(p\ll G\) functional principal components, retaining the signal-carrying directions and discarding the noisy ones. As before, consider the eigendecomposition
\[
\mathbf K = \mathbf V \Sigma \mathbf V', \qquad \mathbf V = [\mathbf v_1 | \cdots | \mathbf v_p], \qquad \Sigma = \diag(l_1,\dots, l_p),
\]
where \(l_1 \geq \cdots \geq l_p > 0\) are the leading eigenvalues of \(\mathbf K\) and \(\mathbf v_1,\dots, \mathbf v_p\) are the corresponding eigenvectors. We can transform the data onto the \(p\)-dimensional score space, defining the FPCA scores as
\[
\xi_i = \mathbf V' \mathbf X_i, \quad i=1,\dots,n.
\]
In score space, the null hypothesis can be expressed as \(H_0: \tilde{\mu}=0\), where \(\tilde{\mu}=(\tilde \mu_1,\dots, \tilde \mu_p)'\) and \(\tilde \mu_j = \boldsymbol \mu' \mathbf v_j\) is the projection of the true mean vector onto the \(j\)-th eigenvector. In this way, the model becomes
\[
\xi_i|\tilde{\mu}, \Sigma \sim \mathcal N_p(\tilde{\mu}, \Sigma), \quad i=1,\dots,n,
\]
and the corresponding \(g\)-prior under \(H_1\) is
\[
\tilde{\mu}|g, \Sigma \sim \mathcal N_p(0, g\Sigma).
\]
We can derive the Bayes factor for this reduced model in the same way as we did above for the fully discretized model (in fact now it is easier because the prior factorizes in independent components), obtaining
\begin{equation}\label{eq:log-bf10-fpca}
\log \bff_{10}^{\text{FPCA}} = -\frac{p}{2}\log(1+ng) + \frac{n^2 g}{2(1+ng)} Q_{n,p},
\end{equation}
where 
\[
Q_{n,p} = \sum_{j=1}^p \frac{(\bar {\mathbf X}_n' \mathbf v_j)^2}{l_j}.
\]
Observe that \(Q_{n,p}\) is essentially a discretized and rescaled version of the statistic \(T_{n,p}\) in the \(L^2\)-based FPCA test, defined in~\eqref{eq:statistic-l2-fpca}. Naturally, in the actual implementation we approximate the eigenvalues and eigenvectors via the eigendecomposition of the empirical covariance matrix \(\hat {\mathbf C}_n\) or its regularized version, obtaining empirical estimates \(\{\hat l_j, \hat{\mathbf v}_j\}\) that we can substitute into the formulas above.

We still need to choose a suitable value for the hyperparameter \(g\). It controls the prior spread under \(H_1\): a small \(g\) concentrates the prior near the null, while a large \(g\) makes it diffuse. A fixed choice (e.g.\ \(g=1\)) is common, but data-driven selection typically performs better. As mentioned by~\citet{liang2008mixtures}, we can compute an empirical Bayes estimate of \(g\) by maximizing the marginal likelihood under \(H_1\). In particular, consider
\[
g^*_n = \operatorname*{argmax}_g\, \bff_{10}^{\text{FPCA}}(g) = \frac{Q_{n,p}}{p} - \frac{1}{n},
\]
which is obtained by differentiating~\eqref{eq:log-bf10-fpca} with respect to \(g\) and setting the derivative to zero. Then, the estimated value for \(g\) is \(\hat g_n = \max\{\delta, g^*_n\}\), where \(\delta>0\) is a small constant to avoid degeneracy at the boundary \(\{g=0\}\) (in the experiments we used \(\delta=10^{-3}\)).

Note that this FPCA-based Bayesian method is simpler than the full RJMCMC procedure described in Section~\ref{sec:rkhs-test}, and provides an exact Bayes factor without the need to approximate the posterior. However, it is not as flexible as the RKHS-based approach, and it seems to perform a bit worse in practice.

\subsection{Practical Notes for Implementation}

We comment briefly on a few implementation suggestions aimed at improving the computational efficiency of the tests:

\begin{enumerate}
  \item Use Cholesky decompositions to avoid computing expensive matrix inverses.
  \item Use log-formulas for quantities like \(\mslab(X_{1:n}|p,\tau)\), and employ the built-in Python functions \texttt{logaddexp(x,y)=log(exp(x)+exp(y))} and \texttt{log1p(x)=log(1+x)} to compute the log-marginal likelihood, the Bayes factor and other expressions.
  \item Store values such as \(\log \mslab(X_{1:n}|p^{(r)}, \tau^{(r)})\) or \((\Lambda_\tau^{-1})^{(r)}\) per RJMCMC step to reuse them in later computations.
  \item If needed, use a stable sigmoid to compute the posterior probability from the log-Bayes factor: if \(\ell=\log(\frac{1-\pi_0}{\pi_0}\bff_{10})\geq 0\), then \(\Pi(H_1|X) = 1/(1+e^{-\ell})\); otherwise, \(\Pi(H_1|X)=e^\ell/(1+e^\ell)\).
\end{enumerate}

\subsection{Runtime Comparison}

As we mentioned before, our test is slower than the competitors due to the RJMCMC sampling, but it still runs in acceptable time under the experimental conditions in Section~\ref{sec:experiments} (see Figure~\ref{fig:runtime}). The \(L^2\)-FPCA method is also slow because it requires recomputing the FPCA decomposition for each bootstrap sample. The rest of the methods are almost instantaneous, as they do not perform intensive computation.

\begin{figure}[t]
  \centering
  \includegraphics[width=.9\textwidth]{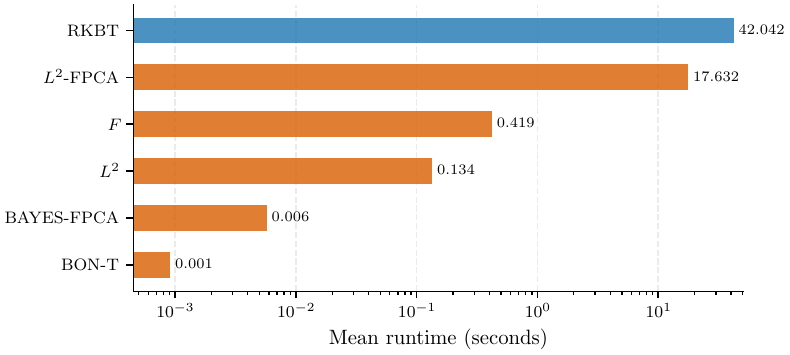}
  \caption{Mean runtime of all tests for the synthetic data sets (log-scale x-axis).}\label{fig:runtime}
\end{figure}

\section{Experimental Setup}

All experiments were run on a Linux laptop with an AMD Ryzen 7 6800HS CPU and 16 GB of RAM. The Python code for reproducing the experiments is publicly available online at \url{https://github.com/antcc/rk-bt-jump}.

\bibliography{bibliography}

@article{berger1987testing,
  title   = {Testing precise hypotheses},
  author  = {Berger, James O and Delampady, Mohan},
  year    = {1987},
  journal = {Statistical Science},
  volume  = {2},
  number  = {3},
  pages   = {317--335},
  doi     = {10.1214/ss/1177013238},
}

@article{berkes2009detecting,
  title   = {Detecting changes in the mean of functional observations},
  author  = {Berkes, Istv{\'a}n and Gabrys, Robertas and Horv{\'a}th, Lajos and Kokoszka, Piotr},
  year    = {2009},
  journal = {Journal of the Royal Statistical Society Series B: Statistical Methodology},
  volume  = {71},
  number  = {5},
  pages   = {927--946},
  doi     = {10.1111/j.1467-9868.2009.00713.x},
}

@book{berlinet2004reproducing,
  title     = {Reproducing Kernel {Hilbert} Spaces in Probability and Statistics},
  author    = {Berlinet, Alain and Thomas-Agnan, Christine},
  year      = {2004},
  doi       = {10.1007/978-1-4419-9096-9},
  publisher = {Springer},
}

@article{berrendero2025bayesian,
  title   = {A {Bayesian} approach to functional regression: theory and computation},
  author  = {Berrendero, Jos{\'e} R and Co{\'i}n, Antonio and Cuevas, Antonio},
  year    = {2025},
  journal = {Bayesian Analysis},
  volume  = {Advance Publication},
  pages   = {1--31},
  doi     = {10.1214/25-BA1549},
}

@article{berrendero2018use,
  title   = {On the use of reproducing kernel {Hilbert} spaces in functional classification},
  author  = {Berrendero, Jos{\'e} R and Cuevas, Antonio and Torrecilla, Jos{\'e} L},
  year    = {2018},
  journal = {Journal of the American Statistical Association},
  volume  = {113},
  number  = {523},
  pages   = {1210--1218},
  doi     = {10.1080/01621459.2017.1320287},
}

@book{bhatia1996matrix,
  title     = {Matrix Analysis},
  author    = {Bhatia, Rajendra},
  year      = {1996},
  doi       = {10.1007/978-1-4612-0653-8},
  publisher = {Springer},
}

@article{box1954some,
  title   = {Some theorems on quadratic forms applied in the study of analysis of variance problems, {I}. {Effect} of inequality of variance in the one-way classification},
  author  = {Box, G E P},
  year    = {1954},
  journal = {The Annals of Mathematical Statistics},
  volume  = {25},
  number  = {2},
  pages   = {290--302},
  doi     = {10.1214/aoms/1177728786},
}

@book{brooks2011handbook,
  title     = {Handbook of {Markov} Chain {Monte} {Carlo}},
  author    = {Brooks, Steve and Gelman, Andrew and Jones, Galin and Meng, Xiao-Li},
  year      = {2011},
  doi       = {10.1201/b10905},
  publisher = {Chapman and Hall/CRC},
}

@article{cardot2003testing,
  title   = {Testing hypotheses in the functional linear model},
  author  = {Cardot, Herv{\'e} and Ferraty, Fr{\'e}d{\'e}ric and Mas, Andr{\'e} and Sarda, Pascal},
  year    = {2003},
  journal = {Scandinavian Journal of Statistics},
  volume  = {30},
  number  = {1},
  pages   = {241--255},
  doi     = {10.1111/1467-9469.00329},
}

@article{chen2010shrinkage,
  title   = {Shrinkage algorithms for {MMSE} covariance estimation},
  author  = {Chen, Yilun and Wiesel, Ami and Eldar, Yonina C and Hero, Alfred O},
  year    = {2010},
  journal = {IEEE Transactions on Signal Processing},
  volume  = {58},
  number  = {10},
  pages   = {5016--5029},
  doi     = {10.1109/TSP.2010.2053029},
}

@book{cover2006elements,
  title     = {Elements of Information Theory},
  author    = {Cover, Thomas M and Thomas, Joy A},
  year      = {2006},
  doi       = {10.1002/047174882X},
  publisher = {John Wiley \& Sons},
}

@article{cox2008pointwise,
  title   = {Pointwise testing with functional data using the {Westfall--Young} randomization method},
  author  = {Cox, Dennis D and Lee, Jong Soo},
  year    = {2008},
  journal = {Biometrika},
  volume  = {95},
  number  = {3},
  pages   = {621--634},
  doi     = {10.1093/biomet/asn021},
}

@article{crainiceanu2007spatially,
  title   = {Spatially adaptive {Bayesian} penalized splines with heteroscedastic errors},
  author  = {Crainiceanu, Ciprian M and Ruppert, David and Carroll, Raymond J and Joshi, Adarsh and Goodner, Billy},
  year    = {2007},
  journal = {Journal of Computational and Graphical Statistics},
  volume  = {16},
  number  = {2},
  pages   = {265--288},
  doi     = {10.1198/106186007X208768},
}

@article{cuesta2010simple,
  title   = {A simple multiway {ANOVA} for functional data},
  author  = {Cuesta-Albertos, J. A. and Febrero-Bande, Manuel},
  year    = {2010},
  journal = {{TEST}},
  volume  = {19},
  number  = {3},
  pages   = {537--557},
  doi     = {10.1007/s11749-010-0185-3},
}

@article{cuevas2004anova,
  title   = {An {ANOVA} test for functional data},
  author  = {Cuevas, Antonio and Febrero-Bande, Manuel and Fraiman, Ricardo},
  year    = {2004},
  journal = {Computational Statistics \& Data Analysis},
  volume  = {47},
  number  = {1},
  pages   = {111--122},
  doi     = {10.1016/j.csda.2003.10.021},
}

@article{dass2004note,
  title   = {A note on the consistency of {Bayes} factors for testing point null versus non-parametric alternatives},
  author  = {Dass, Sarat C and Lee, Jaeyong},
  year    = {2004},
  journal = {Journal of Statistical Planning and Inference},
  volume  = {119},
  number  = {1},
  pages   = {143--152},
  doi     = {10.1016/S0378-3758(02)00413-5},
}

@book{davison1997bootstrap,
  title     = {Bootstrap Methods and Their Application},
  author    = {Davison, A. C. and Hinkley, D. V.},
  year      = {1997},
  doi       = {10.1017/CBO9780511802843},
  publisher = {Cambridge University Press},
}

@article{delaigle2012achieving,
  title   = {Achieving near perfect classification for functional data},
  author  = {Delaigle, Aurore and Hall, Peter},
  year    = {2012},
  journal = {Journal of the Royal Statistical Society Series B: Statistical Methodology},
  volume  = {74},
  number  = {2},
  pages   = {267--286},
  doi     = {10.1111/j.1467-9868.2011.01003.x},
}

@article{fan1998test,
  title   = {Test of significance when data are curves},
  author  = {Fan, Jianqing and Lin, Sheng-Kuei},
  year    = {1998},
  journal = {Journal of the American Statistical Association},
  volume  = {93},
  number  = {443},
  pages   = {1007--1021},
  doi     = {10.1080/01621459.1998.10473763},
}

@article{feldman1958equivalence,
  title   = {Equivalence and perpendicularity of {Gaussian} processes},
  author  = {Feldman, Jacob},
  year    = {1958},
  journal = {Pacific Journal of Mathematics},
  volume  = {8},
  number  = {4},
  pages   = {699--708},
}

@book{ghosal2017fundamentals,
  title     = {Fundamentals of Nonparametric {Bayesian} Inference},
  author    = {Ghosal, Subhashis and van der Vaart, Aad W},
  year      = {2017},
  doi       = {10.1017/9781139029834},
  publisher = {Cambridge University Press},
}

@book{ghosh2006introduction,
  title     = {An Introduction to {Bayesian} Analysis: Theory and Methods},
  author    = {Ghosh, Jayanta K and Delampady, Mohan and Samanta, Tapas},
  year      = {2006},
  doi       = {10.1007/978-0-387-35433-0},
  publisher = {Springer},
}

@article{green1995reversible,
  title   = {Reversible jump {Markov} chain {Monte} {Carlo} computation and {Bayesian} model determination},
  author  = {Green, Peter J},
  year    = {1995},
  journal = {Biometrika},
  volume  = {82},
  number  = {4},
  pages   = {711--732},
  doi     = {10.2307/2337340},
}

@article{hormann2010weakly,
  title   = {Weakly dependent functional data},
  author  = {H{\"o}rmann, Siegfried and Kokoszka, Piotr},
  year    = {2010},
  journal = {The Annals of Statistics},
  volume  = {38},
  number  = {3},
  pages   = {1845--1884},
  doi     = {10.1214/09-AOS768},
}

@article{hall2007two,
  title   = {Two-sample tests in functional data analysis starting from discrete data},
  author  = {Hall, Peter and Van Keilegom, Ingrid},
  year    = {2007},
  journal = {Statistica Sinica},
  volume  = {17},
  number  = {4},
  pages   = {1511--1531},
  url     = {https://www3.stat.sinica.edu.tw/statistica/j17n4/j17n412/j17n412.html},
}

@book{horn1985matrix,
  title     = {Matrix Analysis},
  author    = {Horn, Roger A and Johnson, Charles R},
  year      = {1985},
  doi       = {10.1017/CBO9780511810817},
  publisher = {Cambridge University Press},
}

@book{horvath2012inference,
  title     = {Inference for Functional Data with Applications},
  author    = {Horv{\'a}th, Lajos and Kokoszka, Piotr},
  year      = {2012},
  doi       = {10.1007/978-1-4614-3655-3},
  publisher = {Springer},
}

@book{hsing2015theoretical,
  title     = {Theoretical Foundations of Functional Data Analysis, with an Introduction to Linear Operators},
  author    = {Hsing, Tailen and Eubank, Randall},
  year      = {2015},
  doi       = {10.1002/9781118762547},
  publisher = {John Wiley \& Sons},
}

@article{ishwaran2005spike,
  title   = {Spike and slab variable selection: frequentist and {Bayesian} strategies},
  author  = {Ishwaran, Hemant and Rao, J Sunil},
  year    = {2005},
  journal = {The Annals of Statistics},
  volume  = {33},
  number  = {2},
  pages   = {730--773},
  doi     = {10.1214/009053604000001147},
}

@book{jeffreys1961theory,
  title     = {Theory of Probability},
  author    = {Jeffreys, Harold},
  year      = {1961},
  edition   = {3rd},
  publisher = {Oxford University Press},
}

@article{johnson2010use,
  title   = {On the use of non-local prior densities in {Bayesian} hypothesis tests},
  author  = {Johnson, Valen E and Rossell, David},
  year    = {2010},
  journal = {Journal of the Royal Statistical Society Series B: Statistical Methodology},
  volume  = {72},
  number  = {2},
  pages   = {143--170},
  doi     = {10.1111/j.1467-9868.2009.00730.x},
}

@article{karnesis2023eryn,
  title   = {{Eryn}: a multipurpose sampler for {Bayesian} inference},
  author  = {Karnesis, Nikolaos and Katz, Michael L and Korsakova, Natalia and Gair, Jonathan R and Stergioulas, Nikolaos},
  year    = {2023},
  journal = {Monthly Notices of the Royal Astronomical Society},
  volume  = {526},
  number  = {4},
  pages   = {4814--4830},
  doi     = {10.1093/mnras/stad2939},
}

@article{kass1995bayes,
  title   = {{Bayes} factors},
  author  = {Kass, Robert E and Raftery, Adrian E},
  year    = {1995},
  journal = {Journal of the American Statistical Association},
  volume  = {90},
  number  = {430},
  pages   = {773--795},
  doi     = {10.1080/01621459.1995.10476572},
}

@article{kokoszka2008testing,
  title   = {Testing for lack of dependence in the functional linear model},
  author  = {Kokoszka, Piotr and Maslova, Inga and Sojka, Jan and Zhu, Lie},
  year    = {2008},
  journal = {Canadian Journal of Statistics},
  volume  = {36},
  number  = {2},
  pages   = {207--222},
  doi     = {10.1002/cjs.5550360203},
}

@book{krzanowski2009roc,
  title     = {{ROC} Curves for Continuous Data},
  author    = {Krzanowski, Wojtek J and Hand, David J},
  year      = {2009},
  doi       = {10.1201/9781439800225},
  publisher = {Chapman and Hall/CRC},
}

@book{laha1979probability,
  title     = {Probability Theory},
  author    = {Laha, R. G. and Rohatgi, V. K.},
  year      = {1979},
  publisher = {John Wiley \& Sons},
}

@misc{lamprecht2026meteostat,
  title  = {{Meteostat} {Python} (version 2.1.4)},
  author = {Lamprecht, Christian Sebastian},
  year   = {2026},
  url    = {https://github.com/meteostat/meteostat},
}

@article{landau1970supremum,
  title   = {On the supremum of a {Gaussian} process},
  author  = {Landau, Henry J and Shepp, Lawrence A},
  year    = {1970},
  journal = {Sankhy{\=a}: The Indian Journal of Statistics, Series A},
  volume  = {32},
  number  = {4},
  pages   = {369--378},
}

@article{ledoit2004well,
  title   = {A well-conditioned estimator for large-dimensional covariance matrices},
  author  = {Ledoit, Olivier and Wolf, Michael},
  year    = {2004},
  journal = {Journal of Multivariate Analysis},
  volume  = {88},
  number  = {2},
  pages   = {365--411},
  doi     = {10.1016/S0047-259X(03)00096-4},
}

@article{li2010uniform,
  title   = {Uniform convergence rates for nonparametric regression and principal component analysis in functional/longitudinal data},
  author  = {Li, Yehua and Hsing, Tailen},
  year    = {2010},
  journal = {The Annals of Statistics},
  volume  = {38},
  number  = {6},
  pages   = {3321--3351},
  doi     = {10.1214/10-AOS813},
}

@article{liang2008mixtures,
  title   = {Mixtures of $g$ priors for {Bayesian} variable selection},
  author  = {Liang, Feng and Paulo, Rui and Molina, German and Clyde, Merlise A and Berger, James O},
  year    = {2008},
  journal = {Journal of the American Statistical Association},
  volume  = {103},
  number  = {481},
  pages   = {410--423},
  doi     = {10.1198/016214507000001337},
}

@article{lindley1957statistical,
  title   = {A statistical paradox},
  author  = {Lindley, Dennis V},
  year    = {1957},
  journal = {Biometrika},
  volume  = {44},
  number  = {1--2},
  pages   = {187--192},
  doi     = {10.1093/biomet/44.1-2.187},
}

@article{lukic2001stochastic,
  title   = {Stochastic processes with sample paths in reproducing kernel {Hilbert} spaces},
  author  = {Luki{\'c}, Milan N and Beder, Jay H},
  year    = {2001},
  journal = {Transactions of the American Mathematical Society},
  volume  = {353},
  number  = {10},
  pages   = {3945--3969},
  doi     = {10.1090/s0002-9947-01-02852-5},
}

@article{martinez2011repeated,
  title   = {Repeated measures analysis for functional data},
  author  = {Mart{\'\i}nez-Camblor, Pablo and Corral, Norberto},
  year    = {2011},
  journal = {Computational Statistics \& Data Analysis},
  volume  = {55},
  number  = {12},
  pages   = {3244--3256},
  doi     = {10.1016/j.csda.2011.06.007},
}

@article{mas2007testing,
  title   = {Testing for the mean of random curves: a penalization approach},
  author  = {Mas, Andr{\'e}},
  year    = {2007},
  journal = {Statistical Inference for Stochastic Processes},
  volume  = {10},
  number  = {2},
  pages   = {147--163},
  doi     = {10.1007/s11203-005-0754-3},
}

@book{mathai1992quadratic,
  title     = {Quadratic Forms in Random Variables: Theory and Applications},
  author    = {Mathai, A M. and Provost, Serge B},
  year      = {1992},
  address   = {New York},
  isbn      = {0824786912},
  publisher = {M. Dekker},
  series    = {Statistics, Textbooks and Monographs; v. 126},
}

@article{mbaka2025estimating,
  title   = {Estimating a smooth covariance for functional data},
  author  = {Mbaka, Uche and Ramsay, James O and Carey, Michelle},
  year    = {2025},
  journal = {Computational Statistics \& Data Analysis},
  pages   = {108255},
  doi     = {10.1016/j.csda.2025.108255},
}

@article{mitchell1988bayesian,
  title   = {{Bayesian} variable selection in linear regression},
  author  = {Mitchell, Toby J and Beauchamp, John J},
  year    = {1988},
  journal = {Journal of the American Statistical Association},
  volume  = {83},
  number  = {404},
  pages   = {1023--1032},
  doi     = {10.1080/01621459.1988.10478694},
}

@article{munk2008one,
  title   = {The one- and multi-sample problem for functional data with application to projective shape analysis},
  author  = {Munk, Axel and Paige, Robert and Pang, J and Patrangenaru, Victor and Ruymgaart, Frits},
  year    = {2008},
  journal = {Journal of Multivariate Analysis},
  volume  = {99},
  number  = {5},
  pages   = {815--833},
  doi     = {10.1016/j.jmva.2007.04.007},
}

@article{parzen1961approach,
  title   = {An approach to time series analysis},
  author  = {Parzen, Emanuel},
  year    = {1961},
  journal = {The Annals of Mathematical Statistics},
  volume  = {32},
  number  = {4},
  pages   = {951--989},
  doi     = {10.1214/aoms/1177704840},
}

@book{ramsay2005functional,
  title     = {Functional Data Analysis},
  author    = {Ramsay, James O and Silverman, Bernhard W},
  year      = {2005},
  doi       = {10.1007/b98888},
  edition   = {2nd},
  isbn      = {978-0-387-40080-8},
  publisher = {Springer},
}

@article{richardson1997bayesian,
  title   = {On {Bayesian} analysis of mixtures with an unknown number of components (with discussion)},
  author  = {Richardson, Sylvia and Green, Peter J},
  year    = {1997},
  journal = {Journal of the Royal Statistical Society Series B: Statistical Methodology},
  volume  = {59},
  number  = {4},
  pages   = {731--792},
  doi     = {10.1111/1467-9868.00095},
}

@incollection{robert202250,
  title     = {50 shades of {Bayesian} testing of hypotheses},
  booktitle = {Advancements in {Bayesian} Methods and Implementation},
  author    = {Robert, Christian P},
  year      = {2022},
  volume    = {47},
  pages     = {103--120},
  doi       = {10.1016/bs.host.2022.06.003},
  editor    = {{Srinivasa Rao}, Arni S.R. and Young, G. Alastair and Rao, C.R.},
  publisher = {Elsevier},
  series    = {Handbook of Statistics},
}

@book{robert2007bayesian,
  title     = {The {Bayesian} Choice: From Decision-Theoretic Foundations to Computational Implementation},
  author    = {Robert, Christian P},
  year      = {2007},
  publisher = {Springer},
}

@article{shaffer1995multiple,
  title   = {Multiple hypothesis testing},
  author  = {Shaffer, J. P.},
  year    = {1995},
  journal = {Annual Review of Psychology},
  volume  = {46},
  pages   = {561--584},
  doi     = {10.1146/annurev.ps.46.020195.003021},
}

@article{smaga2020note,
  title   = {A note on repeated measures analysis for functional data},
  author  = {Smaga, {\L}ukasz},
  year    = {2020},
  journal = {AStA Advances in Statistical Analysis},
  volume  = {104},
  number  = {1},
  pages   = {117--139},
  doi     = {10.1007/s10182-018-00348-8},
}

@article{torrecilla2020optimal,
  title   = {Optimal classification of {Gaussian} processes in homo- and heteroscedastic settings},
  author  = {Torrecilla, Jos{\'e} L and Ramos-Carre{\~n}o, Carlos and S{\'a}nchez-Monta{\~n}{\'e}s, Manuel and Su{\'a}rez, Alberto},
  year    = {2020},
  journal = {Statistics and Computing},
  volume  = {30},
  number  = {4},
  pages   = {1091--1111},
  doi     = {10.1007/s11222-020-09937-7},
}

@article{ullah2013applications,
  title   = {Applications of functional data analysis: a systematic review},
  author  = {Ullah, Shahid and Finch, Caroline F},
  year    = {2013},
  journal = {BMC Medical Research Methodology},
  volume  = {13},
  number  = {43},
  pages   = {1--12},
  doi     = {10.1186/1471-2288-13-43},
}

@book{vaart1998asymptotic,
  title     = {Asymptotic Statistics},
  author    = {van der Vaart, Aad W},
  year      = {1998},
  doi       = {10.1017/CBO9780511802256},
  publisher = {Cambridge University Press},
}

@article{wang2016functional,
  title   = {Functional data analysis},
  author  = {Wang, Jane-Ling and Chiou, Jeng-Min and M{\"u}ller, Hans-Georg},
  year    = {2016},
  journal = {Annual Review of Statistics and its Application},
  volume  = {3},
  number  = {1},
  pages   = {257--295},
  doi     = {10.1146/annurev-statistics-041715-033624},
}

@article{wong2019nonparametric,
  title   = {Nonparametric operator-regularized covariance function estimation for functional data},
  author  = {Wong, Raymond K W and Zhang, Xiaoke},
  year    = {2019},
  journal = {Computational Statistics \& Data Analysis},
  volume  = {131},
  pages   = {131--144},
  doi     = {10.1016/j.csda.2018.05.013},
}

@article{xiao2018fast,
  title   = {Fast covariance estimation for sparse functional data},
  author  = {Xiao, Luo and Li, Cai and Checkley, William and Crainiceanu, Ciprian M},
  year    = {2018},
  journal = {Statistics and Computing},
  volume  = {28},
  number  = {3},
  pages   = {511--522},
  doi     = {10.1007/s11222-017-9744-8},
}

@incollection{zellner1986assessing,
  title     = {On assessing prior distributions and {Bayesian} regression analysis with $g$-prior distributions},
  booktitle = {Bayesian Inference and Decision Techniques: Essays in Honor of Bruno de Finetti},
  author    = {Zellner, Arnold},
  year      = {1986},
  pages     = {233--243},
  editor    = {Goel, P. and Zellner, A.},
  publisher = {Elsevier},
}

@article{zhang2010two,
  title   = {Two samples tests for functional data},
  author  = {Zhang, Chongqi and Peng, Heng and Zhang, Jin-Ting},
  year    = {2010},
  journal = {Communications in Statistics - Theory and Methods},
  volume  = {39},
  number  = {4},
  pages   = {559--578},
  doi     = {10.1080/03610920902755839},
}

@book{zhang2013analysis,
  title     = {Analysis of Variance for Functional Data},
  author    = {Zhang, Jin-Ting},
  year      = {2013},
  doi       = {10.1201/b15005},
  publisher = {Chapman and Hall/CRC},
}

\end{document}